\newcommand{\nc}{\newcommand}
\nc{\rnc}{\renewcommand}
\nc\mnb[1]{\medskip\noindent{\bf #1}}
\newcommand{\M}{{\mathbb{M}}}
\newcommand{\ot}{\otimes}
\newcommand{\rank}{\operatorname{rank}}
\newcommand{\<}{\langle}
\newcommand{\Hom}{\operatorname{Hom}}
\newcommand{\Span}{\operatorname{span}}
\renewcommand{\>}{\rangle}
\newcommand{\ind}{\operatorname{ind}}
\newcommand\be{\begin{equation}}
\newcommand\ee{\end{equation}}
\newcommand{\A}{\mathcal{A}_{n}^{t_{n}}(d)}
\newcommand{\MM}{\mathcal{M}}
\DeclareMathOperator{\tr}{Tr}
\newtheorem{theorem}{Theorem}
\newtheorem{corollary}[theorem]{Corollary}
\newtheorem{definition}[theorem]{Definition}
\newtheorem{lemma}[theorem]{Lemma}
\newtheorem{notation}[theorem]{Notation}
\newtheorem{proposition}[theorem]{Proposition}
\newtheorem{remark}[theorem]{Remark}
\begin{document}
\title{Simplified formalism of the algebra of partially transposed permutation operators with applications}

\author{Marek Mozrzymas}
\affil[1]{\small Institute for Theoretical Physics, University of Wrocław
	50-204 Wrocław, Poland}
\author{Micha{\l} Studzi{\'n}ski}
\affil[1]{\small DAMTP, Centre for Mathematical Sciences, University of Cambridge, Cambridge~CB30WA, UK} 
\author{Micha{\l} Horodecki}
\affil[3]{\small Institute of Theoretical Physics and Astrophysics, National Quantum Information Centre, Faculty of Mathematics, Physics and Informatics, University of Gda{\'n}sk, Wita Stwosza 57, 80-308 Gda{\'n}sk, Poland}

\maketitle			 
\begin{abstract}
Hereunder we continue the study of the representation theory of the
algebra of permutation operators acting on the $n$-fold tensor product space, partially
transposed on the last subsystem. We develop the concept of partially reduced
irreducible representations, which allows to simplify significantly
previously proved theorems and what is the most important derive new results for irreducible representations of the mentioned algebra. In our analysis we  are able to reduce complexity of the central expressions by getting rid of sums over all permutations from symmetric group obtaining equations which are  much more handy in practical applications. We also find relatively simple matrix representations for the generators of underlying algebra.  Obtained simplifications and developments are applied to derive characteristic of the deterministic port-based teleportation scheme written purely in terms of irreducible representations of the studied algebra. We solve an eigenproblem for generators of algebra which is the first step towards to hybrid port-based teleportation scheme and gives us new proofs of asymptotic behaviour of teleportation fidelity. We also show connection between density operator characterising port-based teleportation and particular matrix composed of irreducible representation of the symmetric group which encodes properties of the investigated algebra. 
\end{abstract}

\section{Introduction}
\bigskip 
In the classical theory of representation of the symmetric group $S(n)$, the swap or permutation representation, closely related with the famous Schur-Weyl duality~\cite{Wallach} plays an important role not only in pure mathematics, but have found wide range of  applications in quantum information theory. We point here only a few of them, i.e. qubit quantum cloning~\cite{Qclon}, theory of quantum gates~\cite{Harrow1,Gulio3,Gulio4}, quantum error correcting codes~\cite{Junge}, distillation of quantum entanglement~\cite{Czech}, task of optimal compression of identical particles~\cite{Gulio2} or some aspects of theory of reference frames~\cite{Gulio}. The basic objects in the description are the operators $V_d(\sigma)$ representing the group elements $\sigma \in S(n)$ acting in the tensor product space $(\mathbb{C}^d)^{\ot n }$ as a permutations on the basis vectors in the algebra of tensor operators. It turns out that above mentioned picture can be modified introducing into it the notion of the partial transposition which has well established position in entanglement detection theory due to famous Peres-Horodecki criterion~\cite{HORODECKI19961,PhysRevLett.77.1413}, investigating set of PPT states~\cite{Tura}, or from the view of representation theory recent developments in port-based teleportation protocol~\cite{Stu2017,Moz2017b} and theory of universal qudits quantum cloning machines~\cite{Stu2014}, where representation theory of partially transposed operators $V_d(\sigma)$ was used. Because of this non-trivial connection between possible practical applications and pure mathematical theory, there is need to study much more deeper properties of the operators $V_d(\sigma)$ affected by partial transposition. In the series of papers~\cite{Stu1,Moz1} the first big step was done and the theory of irreducible representations (irreps) of partially transposed permutation operators $V_d^{t_n}(\sigma)$  was formulated, where by $t_n$ we denote partial transposition with respect to last subsystem.  It is known that there is a connection of algebra $\A$ with the  Walled Brauer Algebra (WBA)~\cite{Benkart1,Cox1,Koike1} which is a sub-algebra of the  Brauer Algebra~\cite{Brauer1,Pan1,Gavarini1}. Namely algebra $\A$ is a representation of WBA~\cite{Zhang1}. From~\cite{Brundan1} we know that whenever $d>n-1$ the dimension of WBA is equal to $n!$ which is the same as for $\A$~\cite{Stu1,Moz1}. In this case these two algebras are isomorphic and we know characterisation of the irreducible components. When condition $d>n-1$ is not satisfied we have $\operatorname{dim}A_{n}^{t_{n}}(d) <n!$, while dimension of WBA is still equal to $n!$. Because of that we do not have isomorphic between these two algebras and  the full investigation in this case was needed. There were even more to know. Namely in both cases form the point of view of possible applications additional knowledge about explicit orthogonal bases in every irreducible space and matrix representations of irreps is required. As we mentioned at the beginning of this section such full characterisation of $\A$ including matrix representations of irreps for all possible relations between $n$ and $d$ was presented in~\cite{Moz1,Stu1}. 

%As we explained above obtained expressions are quite complicated in practical use.
The formulas describing the representation theory of the algebra $\A$  in the original picture given in~\cite{Moz1,Stu1} are complicated and hard to use in practical applications.
 Fortunately  it turns out
that if we use particular kind of irreps of the symmetric
group, namely a partially reduced irreducible representation (PRIR), the
complicated expressions in representation theory of the algebra $%
\A$ may be simplified significantly and may be written in more
explicit way, much easier for their applications. The biggest profit is reduction  the complexity of existing equations by reducing number of sums over all permutations from symmetric group which significantly improves time of computations and allows us to prove new properties of $\A$. It should be mentioned
that applying PRIRs  in fact we do not loose generality, because any irrep of $S(N)$ may be unitarily transformed into PRIR form. Because of the high importance of described tools further analysis is required, so in this paper we develop the idea of PRIRs firstly introduced in~\cite{Stu2017} by presenting their new properties.

Our paper is organized in the following way. In Section~\ref{Sec3} we remind briefly
the structure of irreps of the algebra of partially
transposed permutation operators $\A$. We present all the most important for this manuscript theorems in their original form taken from~\cite{Moz1}.  In Section~\ref{AppA} we develop significantly the
concept of PRIR by deriving  its new properties. Next  in Section~\ref{simplification} we apply
newly derived results for PRIR to the existing formalism of irreps of algebra $\A$. In particular we simplify expressions for two special matrices describing properties of underlying algebra and projections onto irreducible subspaces of $\A$. Next
we deliver much more simpler matrix forms of the generators  of algebra $\A$.   Section~\ref{Appl_PBT} is fully devoted to the applications of the
simplified representation formalism of the algebra $\A$ to the
deterministic port-based teleportation. We show deep connection between port-based teleportation operator and matrix which encodes properties of investigated algebra.
We solve an eigenproblem for matrix generators of $\A$ which is one step forward to  hybrid scheme of PBT~\cite{ishizaka_remarks_2015}. This result allows us also to present an alternative proof of lower bound on teleportation fidelity presented in~\cite{beigi_konig}. In particular we derive in a simpler way than previous  expressions
for the fidelity in deterministic version of the protocol and we describe some of its asymptotic properties. 

\section{Representation theory of algebra of the partially transposed permutation operators}
\label{Sec3}
For self-consistence of the manuscript we present here all the most important  facts about algebra $\A$ of partially transposed permutation operators preceded by an introduction of the notation which is essential for proper understanding of further sections.  Next in the subsection~\ref{sec1} we only briefly summarize in possible simple way the structure of $\A$ and explain why original picture is inconvenient in practical use. This should to give to a reader the flavour of the problem before subsection~\ref{sec2}, where all important technical details are presented. In both following subsections as well later on we keep here original notation taken from~\cite{Stu1,Moz1}.

\subsection{Definitions and Notations}
Let us start here form considering a permutational representation $V$ of the group $S(n)$ in the space $\mathcal{H\equiv (\mathbb{C}}^{d})^{\otimes n}$ defined in the following way

\begin{definition}
	$V:S(n)$ $\rightarrow \Hom(\mathcal{(\mathbb{C}}^{d})^{\otimes n})$ and 
	\be
	\forall \sigma \in S(n)\qquad V(\sigma ).e_{i_{1}}\otimes e_{i_{2}}\otimes
	\cdots \otimes e_{i_{n}}=e_{i_{\sigma ^{-1}(1)}}\otimes e_{i_{\sigma
			^{-1}(2)}}\otimes \cdots \otimes e_{i_{\sigma ^{-1}(n)}},
	\ee
	where $d\in \mathbb{N}$ and $\{e_{i}\}_{i=1}^{d}$ is an orthonormal basis of the space $\mathcal{\mathbb{C}}^{d}.$ 
\end{definition}

The representation $V: S(n)$ $\rightarrow \Hom(\mathcal{(\mathbb{C}}^{d})^{\otimes n})$ is defined in a given basis $\{e_{i}\}_{i=1}^{d}$ of
the space $\mathcal{\mathbb{C}}^{d}$ (and consequently in a given basis of $\mathcal{H}$), so in fact it is a matrix
representation.

\begin{remark}
	The representation $V: S(n)$ $\rightarrow \Hom(\mathcal{(\mathbb{C}}^{d})^{\otimes n})$ which we will denote shortly as $V(S(n))$ depends explicitly on the dimension $d$, so in fact we
	should write $V(S(n))\equiv V_{d}(S(n))$ but for simplicity we will omit the index $d$,
	unless it will be necessary.
\end{remark}

Let us assume that we are given with the partition $\mu =(\mu _{1},\mu _{2},\ldots,\mu_{k})$ of some natural number $n$ (we denote this by $\mu \vdash n$), then by  $h(\mu)$ we denote the height (equivalently number of rows) of the corresponding Young diagram $Y^{\mu}$. Since there is one-to-one correspondence between partitions of natural number $n$  and inequivalent irreps of the symmetric group $S(n)$ we use symbols $\alpha, \mu$ etc. interchangeably for partitions and irreps  $\varphi^{\alpha},\psi^{\mu}$ whenever it is clear from the context or simplify notation. For $d>1$ the representation $V(S(n))$ is always reducible and we have

\begin{proposition}
	\label{mult_a}
	The irrep $\psi^{\mu}$ of $S(n),$ indexed by
	the partition $\mu \vdash n$ is
	contained in $V(S(n))$ if $d\geq k\equiv h(\mu).$ In particular if $d\geq n$
	then all irreps of $S(n)$ are included in the
	representation $V(S(n))$. When $d\geq k\equiv h(\mu)$ then the
	multiplicity of the irrep $\psi^{\mu}$ of $S(n)$
	is equal to%
	\be
	\frac{1}{n!}\sum_{\sigma \in S(n)}\chi ^{\mu}(\sigma ^{-1})d^{l(\sigma
		)}=\frac{1}{n!}\sum_{\sigma \in S(n)}\chi ^{\mu}(\sigma ^{-1})\chi
	^{V}(\sigma ),
	\ee
	where $\chi ^{\mu}(\cdot)$ is the character of $\psi ^{\mu}$, $l(\sigma
	) $ is the number of cycles in the permutation $\sigma $, and $\chi
	^{V}(\sigma )=d^{l(\sigma )}$ is the character of the representation $V(S(n))$.
\end{proposition}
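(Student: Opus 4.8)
The plan is to reduce both assertions to the standard character theory of the symmetric group together with Schur--Weyl duality. Recall that the multiplicity of any irrep $\psi^\mu$ in a representation $V$ of $S(n)$ is computed by the orthogonality relations as the Hermitian inner product $\langle \chi^\mu, \chi^V\rangle = \frac{1}{n!}\sum_{\sigma}\overline{\chi^\mu(\sigma)}\,\chi^V(\sigma)$. Since the characters of $S(n)$ are real (indeed integer-valued) and every element is conjugate to its own inverse, we have $\overline{\chi^\mu(\sigma)} = \chi^\mu(\sigma) = \chi^\mu(\sigma^{-1})$, so this inner product is exactly $\frac{1}{n!}\sum_\sigma \chi^\mu(\sigma^{-1})\chi^V(\sigma)$. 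Hence the second equality in the displayed formula is automatic once one verifies $\chi^V(\sigma) = d^{l(\sigma)}$, and the whole multiplicity formula follows.

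First I would compute the character $\chi^V(\sigma) = \tr V(\sigma)$ directly in the product basis $\{e_{i_1}\ot\cdots\ot e_{i_n}\}$. By the definition of $V$, the diagonal entry $\langle e_{i_1}\ot\cdots\ot e_{i_n}\,|\,V(\sigma)\,|\,e_{i_1}\ot\cdots\ot e_{i_n}\rangle$ equals one precisely when $i_{\sigma^{-1}(k)}=i_k$ for every $k$, and vanishes otherwise. This condition forces the multi-index $(i_1,\ldots,i_n)$ to be constant along each cycle of $\sigma$; choosing one value in $\{1,\ldots,d\}$ freely per cycle gives exactly $d^{l(\sigma)}$ fixed basis vectors, so $\tr V(\sigma) = d^{l(\sigma)}$. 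This is the routine part.

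For the containment statement I would invoke Schur--Weyl duality, under which $(\mathbb{C}^d)^{\ot n} \cong \bigoplus_{\nu} W^{S(n)}_\nu \ot W^{GL(d)}_\nu$, the sum running over partitions $\nu\vdash n$ with $h(\nu)\le d$, where $W^{GL(d)}_\nu$ is the corresponding polynomial $GL(d)$-irrep. Thus the multiplicity of $\psi^\mu$ in $V(S(n))$ equals $\dim W^{GL(d)}_\mu$, which is nonzero exactly when $h(\mu)\le d$; equivalently it is the number of semistandard Young tableaux of shape $\mu$ filled from $\{1,\ldots,d\}$, which is positive iff no column of $Y^\mu$ exceeds length $d$, i.e. $h(\mu)\le d$. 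In particular $d\ge n$ forces $h(\mu)\le n\le d$ for every $\mu\vdash n$, giving the clause that all irreps appear.

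The main obstacle is really the containment direction. Pure character manipulation produces the multiplicity as a number but does not by itself reveal the sharp threshold $h(\mu)\le d$: one needs either the structural decomposition from Schur--Weyl duality or the combinatorial identification of the multiplicity with a count of semistandard tableaux to see that it vanishes precisely when $h(\mu)>d$. Checking that these two descriptions of the multiplicity agree, and that the displayed character sum indeed equals the tableaux count, is the step requiring the most care, although it is classical.
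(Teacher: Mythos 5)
Your proposal is correct, and it supplies exactly the standard argument that the paper itself omits: Proposition~\ref{mult_a} is stated there without proof, recalled as a classical consequence of Schur--Weyl duality (which the introduction cites for precisely this purpose), so your fixed-point count giving $\chi^V(\sigma)=d^{l(\sigma)}$, the character orthogonality computation of the multiplicity, and the Schur--Weyl identification of that multiplicity with $\dim W^{GL(d)}_\mu$ (positive iff $h(\mu)\le d$, hence all irreps appear when $d\ge n$) together constitute the intended justification. One small remark: the ``main obstacle'' you flag at the end is not actually an obstacle --- since $V(S(n))$ is completely reducible, orthogonality of characters guarantees that the displayed inner product \emph{is} the multiplicity of $\psi^\mu$ in any decomposition, so its agreement with the semistandard-tableaux count $\dim W^{GL(d)}_\mu$ is automatic from Schur--Weyl and requires no separate verification.
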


The representation $V(S(n))$ extends in a natural way to the
representation of the group algebra $\mathbb{C}[S(n)]$ and in this way we get the algebra 
\be
\mathcal{A}_{n}(d)\equiv \Span_{\mathbb{C}}\{V(\sigma ):\sigma \in S(n)\}\subset \Hom(\mathcal{(\mathbb{C}}^{d})^{\otimes n}) 
\ee
of operators representing the elements of the group algebra $\mathbb{C}[S(n)]$.
Note that the algebra $A_{n}(d)$ contains a natural subalgebra 
\be
\label{An-1}
\mathcal{A}_{n-1}(d)\equiv \Span_{\mathbb{C}}\{V(\sigma _{n-1}):\sigma _{n-1}\in S(n-1)\}. 
\ee
The algebra of partially transposed operators with respect to last subsystem, the basic object for this manuscript is defined in the following way

\begin{definition}
	\label{def_A}
	For $\mathcal{A}_{n}(d)\equiv \Span_{\mathbb{C}}\{V(\sigma ):\sigma \in S(n)\}$ we define a new complex algebra 
	\be
	\mathcal{A}_{n}^{t_{n}}(d)\equiv \Span_{\mathbb{C}}\{V^{t_{n}}(\sigma ):\sigma \in S(n)\}\subset \Hom(\mathcal{(\mathbb{C}}^{d})^{\otimes n}),
	\ee
	where the symbol $t_{n}$ denotes the partial transposition with respect to the last subsystem 
	in the space $\Hom(\mathcal{(\mathbb{C}}^{d})^{\otimes n})$. The elements $V^{t_{n}}(\sigma ):\sigma \in S(n)$ will
	be called natural generators of the algebra $\mathcal{A}_{n}^{t_{n}}(d)$.
\end{definition}

Directly from Definition~\ref{def_A}, and equation~\eqref{An-1} it follows that $\mathcal{A}_{n-1}(d)\subset
\mathcal{A}_{n}^{t_{n}}(d)$.

\begin{notation}
	\label{notationV}
	Further in the text whenever partial transposition $t_n$ changes the elements $V(\sigma_n)\in V(S(n))\subset \mathcal{A}_{n}^{t_{n}}(d)$ we will write $V'(\sigma)$ instead of $V^{t_n}(\sigma)$. In particular whenever $\sigma=(n-1,n)$ we will write simply $V'$.
	When the partial transposition does not change the elements $V(\sigma _{n})$ $%
	\in $ $V(S(n-1))\subset \mathcal{A}_{n}^{t_{n}}(d)$ therefore, in the following, we
	will write  $V(\sigma _{n-1})$ instead of $V^{t_{n}}(\sigma _{n-1})$. 
\end{notation}

\begin{remark}
	\label{split}
	The algebra $A_{n}^{t_{n}}(d)$ \ is defined as the algebra of
	operators acting in the space $\mathcal{(\mathbb{C}}^{d})^{\otimes n}$, so in this way we get a natural representation of the
	algebra $A_{n}^{t_{n}}(d)$ in the space $\mathcal{(\mathbb{C}}^{d})^{\otimes n}$. The algebra $A_{n}^{t_{n}}(d)$ is semisimple~\cite{Moz1}, so
	this natural representation is a direct sum of irreps of the algebra $%
	A_{n}^{t_{n}}(d).$ 
\end{remark}

\subsection{Introduction to the problem}
\label{sec1}
%Let us say here a few words more about obtained structure of $\A$.
The important feature of the algebra $
\A$, is the fact that it contains a subalgebra $\mathcal{A}_{n-1}(d)$,
generated by operators representing the subgroup $S(n-1)\subset S(n)$,
which are not changed by the partial transposition (these operators will
be denoted $V_{d}(\sigma ):\sigma \in S(n-1)$). In the mentioned papers~\cite{Stu1,Moz1} it has been shown that algebra $\A$ splits into direct sum of two left ideals $\A=\mathcal{M}\oplus \mathcal{S}$ which differ structurally and in consequence structure of the irreps of the algebra $\A$ is of two kinds and it is strictly connected with irreps of the groups $S(n-2)$ and $S(n-1)$. In particular the
matrix elements of the representations of the algebra $\A$ are
expressed in terms of matrix elements of the representations of the groups $
S(n-2)$ and $S(n-1)$.   Irreps of the first type  of $\A$ (called later non-trivial)  are indexed by irreps of the group $S(n-2)$ and they are strictly connected with
the representations of the group $S(n-1)$. Speaking more precisely in these representations, when the condition $%
d>n-2$ is satisfied, the elements $V_{d}(\sigma ):\sigma \in S(n)$ of
the algebra $\A$ are represented as in the representations of
the group $S(n-1)$ induced by irreps of $S(n-2)$ and
the dimension of such a representation of the first type of $\A
$ is equal to the dimension the induced representation of $S(n-1).$ When $%
d\leq n-2$ the situation is more complicated. In this case some of
the irreps of the first type may be defined on some
subspace of the representation space of induced representation of $S(n-1)$.
In both cases the non-trivially partially transposed generators  are
represented in these representations by complicated expression. In particular
the equation for transposition generators $V_{d}^{t_{n}}[(a,n)]$, where $a=1,\ldots, n-1$  as well as important expression for the projectors onto the
the non-trivial irreducible subspaces of the algebra $\A$
are also very complicated and have high complexity since we have to deal with sums over all elements from permutation group. These complicated
formulas were derived for arbitrary form of the irreps
of the groups $S(n-2)$ and $S(n-1)$ in terms of which the the
representations of the algebra $\A$ are expressed, so they are not really handy in terms of some applications. The representations of the second type are
indexed by some irreps of the group $S(n-1)$. In this case the
generators $V(\sigma ):\sigma \in S(n-1)$\ of the algebra $\A$ are represented naturally by operators of irreps of $%
S(n-1)$, whereas the non-trivially partially transposed operators, are
represented by zero operators. So in the representations of this 
type only the subalgebra $\mathcal{A}_{n-1}(d)$ of $\A$ is represented
non-trivially, therefore this irreps of the algebra $
A_{n}^{t_{n}}(d)$ may be called semi-trivial and we will not consider them.

\subsection{Technical summary of known results}
\label{sec2}
As we will see later in the analysis of the algebra $\A$ as well in applications to port-based teleportation very important
role plays the matrix $Q,$ which appears naturally in the theory of
representation of the algebra $\mathcal{A}_{n}^{t_{n}}(d)$, namely we have (see~\cite{Stu1},~\cite{Moz1})

\begin{definition}
	\label{def_Q}
For any irrep $\varphi ^{\alpha }$ of dimension $d_{\alpha}$ of the group $S(n-2)$
we define the block matrix 
\be
Q_{n-1}^{d}(\alpha )\equiv Q(\alpha )=(d^{\delta _{ab}}\varphi _{ij}^{\alpha
}[(a,n-1)(a,b)(b,n-1)])=(Q_{ij}^{ab}(\alpha ))\in \M((n-1)d_{\alpha },\mathbb{C}),
\ee
where $a,b=1,\ldots,n-1,\quad i,j=1,\ldots,d_{\alpha }$. The blocks
of the matrix $Q(\alpha )$ are labelled by indices $(a,b)$, whereas the
elements of the blocks are labelled by the indices of the irreducible
representation $\varphi ^{\alpha }=(\varphi _{ij}^{\alpha })$ of the group $%
S(n-2)$.
\end{definition}

Below we recall the most important spectral properties of above defined matrices.

\begin{proposition}
The matrices $Q(\alpha )$ are hermitian, positive semi-definite. Eigenvalues 
$\lambda _{\nu }(\alpha )$ of $Q(\alpha)$ are labelled by the irreps $\psi ^{\nu }\in \ind_{S(n-2)}^{S(n-1)}(\varphi ^{\alpha })$, and the
multiplicities of $\lambda _{\nu }(\alpha )$ are equal to $d_{\nu }$.
Moreover at most one (up to the multiplicity) eigenvalue $\lambda _{\nu }(\alpha)$
of the matrix $Q(\alpha )$ may be equal to zero.
\end{proposition}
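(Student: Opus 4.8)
The plan is to recognise $Q(\alpha)$ as the matrix, in the induced-representation basis, of a manifestly positive and $S(n-1)$-invariant element of $\A$, and then to read off the whole spectral statement from Schur's lemma together with the branching rule for $S(n-1)\downarrow S(n-2)$.

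\emph{Hermiticity.} First I would record that the irreps $\varphi^\alpha$ of $S(n-2)$ may be taken in real orthogonal form, so that $\varphi^\alpha_{ji}(g)=\varphi^\alpha_{ij}(g^{-1})$. Writing $\pi_{ab}=(a,n-1)(a,b)(b,n-1)$ one checks $\pi_{ab}^{-1}=\pi_{ba}$ and $\delta_{ab}=\delta_{ba}$, whence $\overline{Q^{ba}_{ji}(\alpha)}=Q^{ab}_{ij}(\alpha)$ and $Q(\alpha)^\dagger=Q(\alpha)$. A short computation of $\pi_{ab}$ also gives $\pi_{aa}=e$, $\pi_{ab}=(a,b)$ for $a,b\le n-2$, and $\pi_{ab}=e$ whenever $a=n-1$ or $b=n-1$; this identifies the blocks of $Q$ explicitly and is used below.

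\emph{Positivity and the intertwining property.} The key observation is that $Q(\alpha)$ coincides, up to the overall normalisation fixed in Definition~\ref{def_Q}, with the matrix representing the element $F=\sum_{a=1}^{n-1}V'((a,n))\in\A$ on the non-trivial irreducible component labelled by $\alpha$, in which (as recalled in Section~\ref{sec1}) the subalgebra $\mathcal{A}_{n-1}(d)$ acts through $\ind_{S(n-2)}^{S(n-1)}(\varphi^\alpha)$. Two facts about $F$ then do all the work. For positivity, each partially transposed transposition satisfies $V'((a,n))=V((a,n))^{t_n}=\ketbra{\Omega_{a,n}}{\Omega_{a,n}}\ge 0$, where $\ket{\Omega_{a,n}}$ is the unnormalised maximally entangled vector on subsystems $a$ and $n$; hence $F\ge 0$, and any matrix representing $F$, in particular $Q(\alpha)$, is positive semi-definite. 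For invariance, note that for $\sigma\in S(n-1)$ one has $V(\sigma)V((a,n))V(\sigma)^{-1}=V((\sigma(a),n))$, so $\sum_a V((a,n))$ commutes with all such $V(\sigma)$; since $V(\sigma)$ acts trivially on the transposed subsystem $n$, the same invariance passes to $F$, and $Q(\alpha)$ therefore lies in the commutant of $\ind_{S(n-2)}^{S(n-1)}(\varphi^\alpha)$. The step I expect to be the main obstacle is verifying the opening sentence of this paragraph: matching the explicit blocks $d^{\delta_{ab}}\varphi^\alpha(\pi_{ab})$ against the matrix of $F$ in the induced basis built from the coset representatives $g_a=(a,n-1)$ for $a\le n-2$ and $g_{n-1}=e$. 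This is the only genuinely computational point.

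\emph{The spectrum.} Granting the intertwining property, Schur's lemma forces $Q(\alpha)$ to act as a scalar $\lambda_\nu(\alpha)$ on each isotypic component of $\ind_{S(n-2)}^{S(n-1)}(\varphi^\alpha)$. By Frobenius reciprocity the multiplicity of $\psi^\nu$ in this induced representation equals the multiplicity of $\varphi^\alpha$ in $\operatorname{Res}_{S(n-2)}^{S(n-1)}\psi^\nu$, which by the multiplicity-free branching rule is $1$ precisely when $\nu$ is obtained from $\alpha$ by adding one box, and $0$ otherwise. Hence the constituents are exactly the $\psi^\nu\in\ind_{S(n-2)}^{S(n-1)}(\varphi^\alpha)$, each with multiplicity one; the corresponding isotypic block has dimension $d_\nu$, so $Q(\alpha)$ has eigenvalue $\lambda_\nu(\alpha)\ge 0$ with multiplicity $d_\nu$. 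Finally, to control the kernel I would invoke \Cref{mult_a}: since the whole construction lives inside $(\mathbb{C}^d)^{\otimes n}$, any constituent $\psi^\nu$ with $h(\nu)>d$ cannot be realised and must be annihilated by $F$, forcing $\lambda_\nu(\alpha)=0$, while for the realisable constituents ($h(\nu)\le d$) the eigenvalue is strictly positive; thus $\lambda_\nu(\alpha)=0$ is equivalent to $h(\nu)=d+1$. As $\nu$ ranges over the boxes addable to $\alpha$, the height can reach $d+1$ only by opening a new row, which is possible in at most one way, so at most one eigenvalue vanishes. The residual point needing care is the strict positivity on the realisable part, which follows either from an explicit evaluation of $\lambda_\nu(\alpha)$ or from faithfulness of the representation of $\A$ on its non-trivial blocks.
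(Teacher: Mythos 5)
Your overall architecture --- hermiticity by direct computation, then realising $Q(\alpha)$ as the matrix of $F=\sum_{a=1}^{n-1}V'[(a,n)]$ on the $\alpha$-block and invoking Schur's lemma plus the multiplicity-free branching of $S(n-1)\downarrow S(n-2)$ --- is a genuinely different route from the paper's, which (following \cite{Moz1}) proceeds by \emph{explicitly} constructing the reducing matrix $Z(\alpha)$ of Theorem~\ref{thm18}~b) whose columns are eigenvectors of $Q(\alpha)$, and by exhibiting the eigenvalues in closed form, $\lambda_\nu(\alpha)=d+\alpha_i+1-i$ (or $\lambda_\nu(\alpha)=d$ when the diagonal lengthens). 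Your hermiticity check is correct (indeed $\pi_{ab}^{-1}=\pi_{ba}$ and $\pi_{ab}=(a,b)$ for $a,b\le n-2$, matching \eqref{formQ}), the positivity of each $V'[(a,n)]$ as an unnormalised maximally entangled projector is right, and your identification of $Q(\alpha)$ with the matrix of $F$ is even asserted by the paper itself, as $M^{\Phi^\alpha}_{E(\alpha)}(\rho)=Q(\alpha)$ in Section~\ref{spec_prop}. But note the paper's own caveat there: this identification ``holds only when $d\geq n-1$.'' For $h(\alpha)\le d\le n-2$ the induced module is not fully realised in $(\mathbb{C}^d)^{\otimes n}$; only a compression of $Q(\alpha)$ to the constituents with $h(\nu)\le d$ appears as the matrix of $F$. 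So your positivity argument covers the full abstract matrix $Q(\alpha)$ only in the large-$d$ regime --- and the excluded regime is exactly where the zero eigenvalue lives.

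This is where the genuine gap sits: your treatment of the kernel is circular. If $h(\nu)>d$, that constituent simply does not occur in $(\mathbb{C}^d)^{\otimes n}$, so the statement that it ``must be annihilated by $F$'' constrains nothing --- there is no vector there for $F$ to annihilate, and the corresponding diagonal block of the \emph{abstract} matrix $Q(\alpha)$ is invisible to the operator realisation. Concluding $\lambda_\nu(\alpha)=0$ for that block requires an algebraic input you never supply; likewise your fallback for strict positivity on the realisable blocks, ``faithfulness of the representation of $\A$ on its non-trivial blocks,'' presupposes the structure recorded in Proposition~\ref{f_basis} (whose representation space is $\bigoplus_{\nu:h(\nu)\le d}S(\psi^\nu)$ precisely \emph{because} those are the non-kernel blocks), i.e.\ it assumes what is being proved. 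A related, smaller defect: Schur's lemma gives you a scalar $\lambda_\nu(\alpha)$ on each isotypic block, but the claim that the multiplicity of $\lambda_\nu(\alpha)$ is \emph{exactly} $d_\nu$ needs the scalars for distinct $\nu$ to be pairwise distinct, which you cannot see without computing them. The paper closes all three points at a stroke via the explicit formula of Theorem~\ref{thm18}~a) (equivalently Lemma~\ref{semAn} or Corollary~\ref{cor_lamda}): the quantities $\alpha_i+1-i$ are strictly decreasing in $i$, so the eigenvalues are pairwise distinct, nonnegative whenever $h(\alpha)\le d$, and at most one of them, namely $d-h(\alpha)$ attained by opening row $h(\alpha)+1$ when $h(\alpha)=d$, can vanish. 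To repair your proof you would either have to carry out the explicit eigenvalue computation yourself (e.g.\ by evaluating $F$ on highest-weight-type vectors, or via the content/Jucys--Murphy identity implicit in Lemma~\ref{semAn}), or work with $Q(\alpha)$ abstractly, using that $Q(\alpha)=(d-1)\mathbf{1}+Q_1(\alpha)$ is affine in $d$ with $d$-independent eigenspaces so that the large-$d$ realisation determines each $\lambda_\nu(\alpha)$ as a function of $d$ --- at which point you are, in effect, reconstructing the paper's computation.
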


\begin{remark}
The matrix $Q_{ij}^{ab}(\alpha)$ in the representation space has a form:
\be
\label{formQ}
Q_{ij}^{ab}(\alpha)=\begin{pmatrix}d\mathbf{1} & \varphi^{\alpha}[(1,2)] & \ldots & \varphi^{\alpha}[(1,n-2)] & \mathbf{1} \\ \varphi^{\alpha}[(2,1)] & d\mathbf{1} & \ldots & \varphi^{\alpha}[(2,n-2)] & \mathbf{1} \\ \vdots & &\ddots & & \vdots\\ \mathbf{1}& & \ldots & & d\mathbf{1} \end{pmatrix},
\ee
where every $\varphi^{\alpha}[(a,b)]=\{\varphi_{ij}^{\alpha}[(a,b)]\}$ is a representation matrix of permutation $(a,b)$ in irrep of $S(n-2)$ labelled by $\alpha$. It is worth to mention here that in general case there is always possibility to chose matrices $\varphi^{\alpha}$ to be unitary, so we get $\varphi^{\alpha}_{ij}[(a,b)]=\bar{\varphi}^{\alpha}_{ji}[(a,b)]$. In our paper constrains are even stronger because representations $\varphi^{\alpha}[(a,b)]$ are in the form of symmetric and real matrices, so we have $\varphi^{\alpha}_{ij}[(a,b)]=\varphi^{\alpha}_{ji}[(a,b)]$. 
\end{remark}
Before we go further let us define a few additional concepts in particular so called rank of the partition for corresponding Young diagram.
\begin{definition}
By $\widehat{S}(N)$ we denote the set of all irreps of the symmetric group $S(N)$, and by $|\widehat{S}(N)|$ its cardinality.
If $\psi^{\mu} \in \widehat{S}(N)$ is irrep of the group $S(N)$ we write
\be
\widehat{S}_{d}(N)\equiv\{\psi^{\mu} \in \widehat{S}(N) :  h(\mu )\leq d\}\Rightarrow 
\widehat{S}_{N}(N)=\widehat{S}(N).
\ee
Above set contains all irreps of $S(n)$ whose corresponding Young diagrams have no more than $d$ rows. 
\end{definition}
\begin{definition}
	\label{def_r}
 Let $\psi^{\mu}$ be any irrep of the group $
S(n)$, $\mu =(\mu_{1},\ldots,\mu_{k})$ its partition, and $
Y^{\mu}$ the corresponding Young diagram. The rank $r=r(\mu)$ (or 
$r\left(Y^{\mu }\right)$) of the partition $\mu$ is the length of the diagonal
of its Young diagram.
\end{definition}
Now we are in position to present spectral theorem  for the matrices $Q(\alpha)$ (see~\cite{Moz1}).
\begin{theorem}
	\label{thm18}
	\begin{enumerate}[a)]
		\item Let $\varphi^{\alpha}$ be any irrep of the group $
		S(n-2)$, $\alpha =(\alpha _{1},\ldots,\alpha_{k})$ its partition and $
		Y^{\alpha}$ the corresponding Young diagram. Suppose that for some index $1\leq i \leq k$ the sequence $\nu=(\alpha
		_{1},\ldots,\alpha_{i}+1,\ldots,\alpha_{k})$ is a partition of $n-1$, so it defines
		an irrep $\psi^{\nu}$ of the group $S(n-1)$. For
		 Young diagrams it means that the Young diagram $Y^{\nu}$ is obtained
		from the Young diagram $Y^{\alpha}$ by adding, in the $i$-th row, one box (we denote this by $\nu \in \alpha$).
		Then the corresponding matrix $Q_{n-1}(\alpha)$ has the following eigenvalues
		
		\begin{enumerate}[i)]
			\item if $r\left(Y^{\alpha}\right)=r\left(Y^{\nu}\right)$, then 
			\be
			\lambda_{\nu}(\alpha)=d+\alpha_{i}+1-i,\quad \quad i=1,\ldots,k+1,
			\ee
			and if $i=k+1$ we set $\alpha_{k+1}=0$.
			
			\item If $r\left(Y^{\alpha}\right)+1=r\left(Y^{\nu}\right)$ which may occur only if $i=r+1$, then 
			\be
			\lambda _{\nu}(\alpha)=d.
			\ee
		\end{enumerate}
		The case $ii)$ describes the situation when adding, in a proper way one box
		to Young diagram $Y^{\alpha}$ we extend its diagonal. The multiplicity of
		the eigenvalue $\lambda_{\nu}(\alpha)$ is equal to $\dim\psi^{\nu}$, and the
		number of pairwise distinct eigenvalues of the matrix $Q_{n-1}(\alpha)$ is
		equal to the rank of Young diagrams $Y^{\nu}$ that one can obtain from the Young
		diagram $Y^{\alpha}$ by adding, in a proper way, one box.
		\item  The unitary matrix $Z(\alpha)=\left( z(\alpha\right)_{kj_{\nu }}^{a\nu})$
		which reduces the induced representation $\Phi^{\alpha
		}=\operatorname{ind}_{S(n-2)}^{S(n-1)}(\varphi^{\alpha})$ into the irreducible components
		has a form 
		\be
		\label{zzz}
		z(\alpha )_{kj_{\nu }}^{a\nu }=\frac{d_{\nu }}{\sqrt{%
				N_{j_{\nu }'}^{\nu }}(n-1)!}\sum_{\sigma \in S(n-1)}\psi _{j_{\nu }'j_{\nu
		}}^{\nu }\left( \sigma ^{-1}\right) \delta _{a\sigma (q')}\varphi _{kr'}^{\alpha
		}[(an-1)\sigma (q'n-1)],
		\ee
		with 
		\be
		\label{NN}
		N_{j_{\nu }'}^{\nu }\equiv \left( E_{j_{\nu }'j_{\nu }'}^{\nu
		}\right)_{r'r'}^{q'q'}=\frac{d_{\nu }}{(n-1)!}\sum_{\sigma \in S(n-1)}\psi
		_{j_{\nu }'j_{\nu }'}^{\nu }\left(\sigma ^{-1}\right) \delta _{q'\sigma (q')}\varphi
		_{r'r'}^{\alpha }[(q'n-1)\sigma (q'n-1)],
		\ee
		where $\psi^{\nu}$ are representations of the group $S(n-1)$ whose Young
		diagrams are obtained from the Young diagram $Y^{\alpha}$ by adding, in a proper way, one box and $\left(\psi_{j_{\nu}j_{\nu}}^{\nu
		}(\sigma)\right)$ is a matrix form of $\sigma \in S(n-1)$ in the representation $
		\psi^{\nu}, E_{j_{\nu}'j_{\nu}'}^{\nu}$ is a
		hermitian projector of rank one in the representation space $\Phi^{\alpha}$
		defined by $\psi^{\nu}$ (see Def. 79 in App. C of~\cite{Moz1} ). The double index $
		(q',r')$ is fixed and chosen in such a way that $N_{j_{\nu}'}^{\nu}>0$,
		which  is always possible because $E^{\nu}_{j_{\nu}'j_{\nu}'}$ is a positive semi-definite matrix. Moreover we have 
		\be
		\sum_{ak}\sum_{bl}z^{\dagger}(\alpha)_{j_{\xi}k}^{\xi a}\Phi ^{\alpha}(\sigma
		)_{kl}^{ab}z(\alpha)_{lj_{\mu}}^{b\mu}=\delta^{\xi \mu}\psi_{j_{\xi
			}j_{\mu}}^{\mu}(\sigma).
		\ee
		In particular 
		\be
		\sum_{ak}\sum_{bl}z^{\dagger}(\alpha)_{j_{\xi}k}^{\xi a}Q(\alpha
		)_{kl}^{ab}z(\alpha)_{lj_{\mu }}^{b\mu }=\delta ^{\xi \mu}\delta
		_{j_{\xi}j_{\mu}}\lambda _{\mu}(\alpha),
		\ee
		so the columns of the matrix $Z(\alpha)=\left(z(\alpha )_{kj_{\nu }}^{a\nu }\right)$
		are eigenvectors of the matrix $Q(\alpha)$.
	\end{enumerate}
\end{theorem}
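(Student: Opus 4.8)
The plan is to realise $Q(\alpha)$ as an element of the commutant of the induced representation $\Phi^{\alpha}=\ind_{S(n-2)}^{S(n-1)}(\varphi^{\alpha})$, so that the entire statement collapses onto Schur's lemma plus one combinatorial identification of the scalars. First I would fix the coset representatives $t_{a}=(a,n-1)$ for $a=1,\ldots,n-2$ and $t_{n-1}=\id$ of $S(n-1)/S(n-2)$, realise $\Phi^{\alpha}$ on the space with basis labelled by pairs $(a,i)$, and then prove the structural claim that $Q(\alpha)$ commutes with every $\Phi^{\alpha}(\sigma)$. Since the entry $Q_{ij}^{ab}(\alpha)=d^{\delta_{ab}}\varphi_{ij}^{\alpha}[(a,n-1)(a,b)(b,n-1)]$ is built from exactly the same coset data as $\Phi^{\alpha}$, it is enough to verify commutation on the Coxeter generators $(c,c+1)$: conjugation by $\Phi^{\alpha}(\sigma)$ permutes the block index $(a,b)$ compatibly with the $S(n-2)$-values sitting inside the blocks, which is a finite bookkeeping check.

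Next I would carry out the reduction that underlies part~b) independently of any eigenvalue information. By the branching rule the decomposition $\Phi^{\alpha}=\bigoplus_{\nu\in\alpha}\psi^{\nu}$ is multiplicity free, and I would produce the reducing matrix $Z(\alpha)$ from the Wigner isotypic projectors $E_{j'j'}^{\nu}=\tfrac{d_{\nu}}{(n-1)!}\sum_{\sigma}\psi_{j'j'}^{\nu}(\sigma^{-1})\Phi^{\alpha}(\sigma)$. Applying these to the induced-representation basis and normalising by the positive diagonal matrix element $N_{j_{\nu}'}^{\nu}$ of \eqref{NN} yields precisely the explicit entries \eqref{zzz}, where the factors $\delta_{a\sigma(q')}$ and $\varphi_{kr'}^{\alpha}[(a,n-1)\sigma(q',n-1)]$ are just the coset expansion of $\Phi^{\alpha}(\sigma)$. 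Positivity of $E^{\nu}_{j'j'}$ guarantees that an admissible index $(q',r')$ with $N^{\nu}_{j'_{\nu}}>0$ exists, and the intertwining identity $\sum z^{\dagger}\Phi^{\alpha}(\sigma)z=\delta^{\xi\mu}\psi^{\mu}(\sigma)$ is then exactly the assertion that $Z(\alpha)$ implements the isotypic decomposition.

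Part~a) now follows by specialising the reduction to $Q(\alpha)$. Because $Q(\alpha)$ commutes with $\Phi^{\alpha}$ and the decomposition is multiplicity free, Schur's lemma forces $Z^{\dagger}Q(\alpha)Z=\bigoplus_{\nu}\lambda_{\nu}(\alpha)\,\mathbf{1}$, so each eigenvalue has multiplicity $\dim\psi^{\nu}$ and the columns of $Z(\alpha)$ are eigenvectors; this is the "In particular" line. To evaluate the scalar I would write $Q(\alpha)=d\,\mathbf{1}+R(\alpha)$ and extract $\lambda_{\nu}(\alpha)=\tfrac{1}{\dim\psi^{\nu}}\tr\!\big(P^{\nu}Q(\alpha)\big)$ from the projector $P^{\nu}=\sum_{j'}E^{\nu}_{j'j'}$, or equivalently evaluate $Q(\alpha)$ on one explicit eigenvector from the previous step. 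Writing the box $\nu\in\alpha$ as lying in row $i$ and column $\alpha_{i}+1$, its content is $\alpha_{i}+1-i$, which produces case~i); case~ii) is the subcase in which the added box extends the diagonal ($r(Y^{\nu})=r(Y^{\alpha})+1$), where the box sits on the main diagonal, its content vanishes, and $\lambda_{\nu}(\alpha)=d$. The trace identity $\tr R(\alpha)=0=\sum_{\nu\in\alpha}c(\nu/\alpha)\dim\psi^{\nu}$, together with the classical fact that the class sum of all transpositions of $S(m)$ acts on $\psi^{\lambda}$ by $\sum_{\square\in\lambda}c(\square)$, serves as the consistency check fixing the normalisation. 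Finally, distinctness of the $\lambda_{\nu}(\alpha)$, hence the count of pairwise distinct eigenvalues, follows because addable boxes lie on distinct diagonals and therefore have pairwise distinct contents.

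The main obstacle will be this third step. The commutation and the Schur-lemma packaging are conceptually clean, but turning "scalar on each block" into the exact value $d+\alpha_{i}+1-i$, and correctly separating the rank-preserving case from the diagonal-extending case while tracking which addable box yields the vanishing (or the potentially zero) content, is where the genuine combinatorial work lies; likewise the verification of the precise normalisation $N_{j_{\nu}'}^{\nu}$ and of the positivity underlying the admissible choice of $(q',r')$ in part~b) is delicate rather than routine.
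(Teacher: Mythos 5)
Your proposal is correct, and it is worth separating the two halves when comparing it with the paper. For part~b) you arrive at essentially the construction that the paper itself imports from~\cite{Moz1}: the columns of $Z(\alpha)$ are obtained by applying the Wigner matrix-element operators $E^{\nu}_{j'j}=\tfrac{d_{\nu}}{(n-1)!}\sum_{\sigma}\psi^{\nu}_{j'j}(\sigma^{-1})\Phi^{\alpha}(\sigma)$ (the object behind ``Def.~79'' cited in the statement) to a fixed basis vector $(q',r')$ with $N^{\nu}_{j'_{\nu}}>0$, and the coset expansion $\Phi^{\alpha}(\sigma)^{ab}_{kr}=\delta_{a\sigma(b)}\varphi^{\alpha}_{kr}[(a,n-1)\sigma(b,n-1)]$ turns this into exactly the entries~\eqref{zzz} with normalisation~\eqref{NN}; existence of an admissible $(q',r')$ follows, as you say, since $E^{\nu}_{j'j'}$ is a nonzero hermitian projector (its trace is the multiplicity, which is $1$ by the branching rule), and multiplicity-freeness guarantees the resulting $\psi^{\nu}$-family fills the whole isotypic component. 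For part~a), however, you genuinely depart from the source: there the eigenvalues are obtained by direct computation on the explicit eigenvectors, with the rank dichotomy i)/ii) arising from a case analysis on diagrams (the paper records the equivalent character formula as Lemma~\ref{semAn}), whereas your commutant-plus-Schur route unifies both cases into the single statement $\lambda_{\nu}(\alpha)=d+c(\nu/\alpha)$, the content of the added box, with case~ii) being precisely the content-zero subcase, and it makes the multiplicity $d_{\nu}$ and the pairwise distinctness of eigenvalues (addable boxes have pairwise distinct contents) immediate. Via the classical identity $\binom{m}{2}\chi^{\lambda}[(1,2)]/d_{\lambda}=\sum_{\square\in\lambda}c(\square)$ your answer is literally the paper's eq.~\eqref{seman}, so the two derivations agree.

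The one step you flag as delicate — turning ``scalar on each block'' into the exact value — can be closed in a way that also eliminates your generator-by-generator commutation check. Computing the blocks of $\Phi^{\alpha}$ on transpositions of $S(n-1)$ shows that the diagonal blocks of $\Phi^{\alpha}(T_{n-1})$, $T_{n-1}=\sum_{1\le c<e\le n-1}(c,e)$, all equal $\varphi^{\alpha}(T_{n-2})=\kappa_{\alpha}\mathbf{1}_{d_{\alpha}}$ with $\kappa_{\alpha}=\sum_{\square\in\alpha}c(\square)$, while its off-diagonal blocks reproduce exactly the off-diagonal blocks of~\eqref{formQ}; hence
\be
Q(\alpha)=\bigl(d-\kappa_{\alpha}\bigr)\mathbf{1}+\Phi^{\alpha}(T_{n-1}).
\ee
Centrality of $T_{n-1}$ in $\mathbb{C}[S(n-1)]$ then gives $[Q(\alpha),\Phi^{\alpha}(\sigma)]=0$ for free, and Schur's lemma together with the class-sum action yields $\lambda_{\nu}(\alpha)=d-\kappa_{\alpha}+\kappa_{\nu}=d+c(\nu/\alpha)$ in one line, i.e.\ cases i) and ii) simultaneously (the box added in row $i$ has content $\alpha_{i}+1-i$, vanishing exactly when it extends the diagonal). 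Your trace identity $\tr R(\alpha)=0=\sum_{\nu\in\alpha}c(\nu/\alpha)\,d_{\nu}$ is just the trace of this relation, so it is indeed consistent rather than an independent input. With this substitution your outline is a complete and, for part~a), cleaner proof than the computational one behind the quoted theorem.
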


\begin{remark}
	\label{Hard}
	The indices $j_{\nu }^{\prime },$ $q^{\prime },r^{\prime }$, $%
	a^{\prime }$ are in fact parameters. Expression~\eqref{zzz} is  complicated in the practical applications because, although it looks quite explicit,  the normalisation factor $%
	N_{j_{\nu }^{\prime }}^{\nu }\equiv (E_{j_{\nu }^{\prime }j_{\nu }^{\prime
	}}^{\nu })_{r^{\prime }r^{\prime }}^{q^{\prime }q^{\prime }}$\ cannot be
	given explicitly without specifying the form of the irreps $%
	\psi ^{\nu }$ of $S(n-1)$ and irrep $\varphi ^{\alpha }$ of $S(n-2).$ 
Moreover equation~\eqref{zzz} contains two sums over all permutations from $S(n-1)$ which causes high complexity during explicit calculations.
\end{remark}

\begin{remark}
	The part a) of Theorem~\ref{thm18} gives an explicit 
	eigenvalues of the matrix $Q_{n-1}(\alpha)$ on the partition $\alpha
	=\left(\alpha_{1},\ldots,\alpha_{k}\right)$ which defines the the irreducible
	representation $\varphi^{\alpha}$ and  on the dimension parameter $d$.
\end{remark}
There is also another expression for eigenvalues $\lambda_{\nu}(\alpha)$ of matrices $Q(\alpha)$.
\begin{lemma}
	\label{semAn}
	Let $\varphi ^{\alpha }$ be any irrep of the group $%
	S(n-2), $ $\alpha $ its partition, $\chi^{\alpha }$ its character, and let $\psi ^{\nu }$ be all irreps of the group $S(n-1)$ whose Young diagrams are obtained
	from the Young diagram $Y^{\alpha} $ by adding, in a
	proper way, one box. By $\chi ^{\nu }$ we denote their characters,
	where $\nu $ is the partition of $n-1$ which labels the representation $\psi
	^{\nu }$. Then the distinct eigenvalues of the matrix $Q(\alpha )$ generated
	by the irrep $\varphi ^{\alpha }$ of $S(n-2)$ are
	labelled by the partitions $\nu $ and are of the form
	\be
	\label{seman}
	\lambda _{\nu }(\alpha)=d+\frac{(n-1)(n-2)}{2}\frac{\chi^{\nu }[(a,b)]}{d_{\nu }}-\frac{(n-2)(n-3)}{2}\frac{\chi^{\alpha }[(c,d)]}{d_{\alpha }}, 
	\ee
	where $(a,b)$ for $a,b\leq n-2$ is an arbitrary transposition in $S(n-2)$ and $(c,d)$ for $c,d\leq n-1$ is transposition in $S(n-1)$. The
	eigenvalue $\lambda _{\nu }(\alpha)$ has multiplicity $d_{\nu }$.
\end{lemma}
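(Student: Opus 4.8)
The plan is to deduce the formula directly from the eigenvalues already computed in Theorem~\ref{thm18}, using the classical (Frobenius) identity that expresses the normalized character of a symmetric group on a transposition in terms of the contents of the associated Young diagram. Recall that for any irrep $\psi^{\lambda}$ of $S(m)$ with character $\chi^{\lambda}$ and dimension $d_{\lambda}$, and for any transposition $\tau\in S(m)$,
\[
\binom{m}{2}\frac{\chi^{\lambda}(\tau)}{d_{\lambda}}=\sum_{(i,j)\in Y^{\lambda}}(j-i),
\]
the right-hand side being the sum of the contents $c(i,j)=j-i$ over all boxes of $Y^{\lambda}$. All transpositions of $S(m)$ lie in one conjugacy class, so the left-hand side is independent of the choice of $\tau$; this is why any transposition of the relevant group may be inserted, and why the prefactor is the number $\binom{m}{2}$ of transpositions in that group.

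First I would apply this identity to each of the two character terms. For $\nu\vdash n-1$ take $m=n-1$, so that $\binom{n-1}{2}=\tfrac{(n-1)(n-2)}{2}$ and the second term of~\eqref{seman} equals the total content $c(Y^{\nu})\equiv\sum_{(i,j)\in Y^{\nu}}(j-i)$; for $\alpha\vdash n-2$ take $m=n-2$, so that $\binom{n-2}{2}=\tfrac{(n-2)(n-3)}{2}$ and the third term equals $c(Y^{\alpha})$. Substituting both collapses the right-hand side of~\eqref{seman} to
\[
\lambda_{\nu}(\alpha)=d+c(Y^{\nu})-c(Y^{\alpha}).
\]

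Next I would use the standing hypothesis that $Y^{\nu}$ arises from $Y^{\alpha}$ by adding a single box. The difference of total contents then telescopes to the content of that one added box: if the box sits in row $i$ it occupies column $\alpha_{i}+1$, so its content is $\alpha_{i}+1-i$ and $\lambda_{\nu}(\alpha)=d+\alpha_{i}+1-i$, which is exactly case~i) of Theorem~\ref{thm18} (the convention $\alpha_{k+1}=0$ covers a box added in a new row). In the remaining case the added box extends the diagonal, sitting at position $(r+1,r+1)$ with content $0$, so that $\lambda_{\nu}(\alpha)=d$, which is case~ii). Thus~\eqref{seman} reproduces both branches of Theorem~\ref{thm18}, and the two expressions for $\lambda_{\nu}(\alpha)$ agree.

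The only substantive ingredient is the content–character identity, so the point to handle carefully is its normalization: the coefficient attached to each character must be the number of transpositions in the group on which that character lives, which is precisely how the factors $\tfrac{(n-1)(n-2)}{2}$ and $\tfrac{(n-2)(n-3)}{2}$ enter, paired with the $S(n-1)$ and $S(n-2)$ characters. Once this is pinned down, everything else is the telescoping of contents under the addition of one box together with a direct comparison against Theorem~\ref{thm18}; in particular no sums over the symmetric group remain, in line with the simplification aims of the paper.
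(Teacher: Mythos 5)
Your proof is correct. The content--character identity you invoke, $\binom{m}{2}\,\chi^{\lambda}(\tau)/d_{\lambda}=\sum_{(i,j)\in Y^{\lambda}}(j-i)$, is the standard Schur-lemma evaluation of the central element $C_{m}$ (the sum of all transpositions of $S(m)$) on the irrep $\lambda$, and with it the right-hand side of \eqref{seman} telescopes to $d$ plus the content of the unique box of $Y^{\nu}/Y^{\alpha}$, reproducing both branches of Theorem~\ref{thm18}~a) (content $\alpha_{i}+1-i$; content $0$ for the diagonal-extending box at position $(r+1,r+1)$), with the multiplicity $d_{\nu}$ inherited from the same theorem. You also tacitly repair the statement's swapped attribution of the transpositions (the one paired with $\chi^{\nu}$ should live in $S(n-1)$ and the one paired with $\chi^{\alpha}$ in $S(n-2)$), which is harmless by class-function constancy, as you note.

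For comparison: the paper states this lemma without an in-text proof (it is recalled from~\cite{Moz1}), and the route its machinery points to runs in the opposite logical direction to yours. One checks directly from Definition~\ref{def_Q} and the standard block form of the induced representation that
\be
Q(\alpha)=d\,\mathbf{1}+\Phi^{\alpha}(C_{n-1})-\mathbf{1}_{n-1}\otimes\varphi^{\alpha}(C_{n-2}),
\ee
i.e.\ $Q(\alpha)-d\,\mathbf{1}$ is the image under $\Phi^{\alpha}=\ind_{S(n-2)}^{S(n-1)}(\varphi^{\alpha})$ of the Jucys--Murphy element $\sum_{a=1}^{n-2}(a,n-1)=C_{n-1}-C_{n-2}$. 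Since $C_{n-1}$ is central in $\mathbb{C}[S(n-1)]$ it acts on each irreducible component $\psi^{\nu}$ by the scalar $\binom{n-1}{2}\chi^{\nu}[(1,2)]/d_{\nu}$ with multiplicity $d_\nu$, while $\varphi^{\alpha}(C_{n-2})$ is the scalar $\binom{n-2}{2}\chi^{\alpha}[(1,2)]/d_{\alpha}$, which gives \eqref{seman} at once; this centrality mechanism is exactly what underlies the PRIR summation rules the paper does prove (Proposition~\ref{Prir1} and Proposition~\ref{propoopo}). That argument establishes \eqref{seman} ab initio and then yields the explicit list of Theorem~\ref{thm18}~a) as a corollary via your same content identity; your argument instead takes Theorem~\ref{thm18}~a) as input and buys a shorter, purely combinatorial verification with no group sums. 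Since Theorem~\ref{thm18} is quoted in the paper as established, your derivation is legitimate in context; just be aware that it is not logically independent of Theorem~\ref{thm18}~a), so it could not serve in a treatment where that eigenvalue list is itself derived from \eqref{seman}.
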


\begin{remark}
	Since irreducible characters $\chi^{\nu}[(a,b)],\chi^{\alpha}[(c,d)]$  are constant on conjugacy classes it is enough to take $(a,b)=(c,d)=(1,2)$ in equation~\eqref{seman} of Lemma~\ref{semAn}.
	Reader notices that the quantity $\lambda_{\nu}(\alpha)$ is of  non-zero value only if $\nu \in \alpha$. In this manuscript we assume that this assumption is always satisfied.
\end{remark}

In the next and last part of this section we briefly recall the basic properties of irreps  of the algebra $\mathcal{A}_{n}^{t_{n}}(d)$. The irreps of the
algebra $\mathcal{A}_{n}^{t_{n}}(d)$ are of two kinds and we describe them in the
matrix form. Let us start from the following

\begin{proposition}
	\label{f_basis}
The first kind of irreps, denoted by $\Phi ^{\alpha },$ are
determined by irreps $\varphi ^{\alpha }$ of the group $S(n-2)$, such
that $\varphi^{\alpha}\in \widehat{S}_d(n-2)$, and we have 
\be
\Phi ^{\alpha }:\mathcal{A}_{n}^{t_{n}}(d)\rightarrow \M(\rank Q(\alpha ),\mathbb{C}).
\ee
The representation space $S(\Phi ^{\alpha })$ of $\Phi ^{\alpha }$ has
the following structure 
\be
S(\Phi ^{\alpha })=\bigoplus _{\nu \in I:h(\nu )\leq d}S(\psi ^{\nu }), 
\ee
where 
\be
\ind_{S(n-2)}^{S(n-1)}(\varphi ^{\alpha })=\bigoplus _{\nu \in I}\psi ^{\nu }. 
\ee
In the above by the $I$ we denote the set of irreps of the group $S(n-1)$, which appear
in the above decomposition of induced representation of $S(n-1)$ into
irreducible components $\psi ^{\nu }$.  In the reduced matrix basis $%
f\equiv\{f_{j_{\nu} }^{\nu }:h(\nu )\leq d,\quad j_{\nu} =1,\ldots,d_{\nu }\}$ of
the representation $\Phi ^{\alpha }$ (see~\cite{Moz1}) the natural generators $%
V'[(a,n)]$ and $V(\sigma _{n-1})$ of $\mathcal{A}_{n}^{t_{n}}(d)$ are represented
by the following matrices%
\be
M_{f}^{\alpha }\left[ V'[(a,n)]\right] _{j_{\xi }j_{\nu }}^{\xi \nu
}=\sum_{k=1,\ldots,d_{\alpha }}\sqrt{\lambda _{\xi }}z^{\dagger}(\alpha
)_{j_{\xi }k}^{\xi a}z(\alpha )_{kj_{\nu }}^{a\nu }\sqrt{\lambda _{\nu }}%
:\xi ,\nu \in I,\;a=1,\ldots,n-1, 
\ee
\be
\label{expr1a}
M_{f}^{\alpha }\left[ V(\sigma _{n-1})\right] _{j_{\nu'}j_{\nu }}^{\nu'\nu }=\delta ^{\nu'\nu }\psi _{j_{\nu'}j_{\nu }}^{\nu
}(\sigma _{n-1}),\quad \sigma _{n-1}\in S(n-1), 
\ee
where the matrices $Z(\alpha )=(z(\alpha )_{kj_{\nu }}^{a\nu })$ are defined
in the Theorem~\ref{thm18}. Expression~\eqref{expr1a} shows that irrep $\Phi^{\alpha}$ of algebra $\A$ is $\ind_{S(n-2)}^{S(n-1)}(\varphi^{\alpha})$ of subalgebra $\mathbb{C}[S(n-1)]$.
\end{proposition}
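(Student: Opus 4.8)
\section*{Proof proposal}

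The plan is to realise $\Phi^{\alpha}$ explicitly on the induced-representation space and then verify the two displayed formulas separately, since the generators $V(\sigma_{n-1})$ and $V'[(a,n)]$ behave very differently under the partial transposition. First I would invoke the decomposition $\A = \MM \oplus \cS$ into left ideals together with semisimplicity (Remark~\ref{split}): the non-trivial irreps $\Phi^{\alpha}$ are exactly the irreducible constituents of $\MM$, indexed by $\varphi^{\alpha}\in\widehat{S}_d(n-2)$. Restricting $\Phi^{\alpha}$ to the subalgebra $\mathcal{A}_{n-1}(d)\cong\mathbb{C}[S(n-1)]$, which is untouched by $t_n$, must give a genuine representation of $S(n-1)$, and the structure theory identifies it with $\ind_{S(n-2)}^{S(n-1)}(\varphi^{\alpha})$ truncated to those $\psi^{\nu}$ with $h(\nu)\le d$, since components with more than $d$ rows cannot occur in $(\mathbb{C}^{d})^{\ot n}$. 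This simultaneously yields the claimed decomposition of $S(\Phi^{\alpha})$ and, by Schur's lemma in the reduced matrix basis $f$, equation~\eqref{expr1a}: on each isotypic block the untransposed generators act by the irreducible matrices $\psi^{\nu}(\sigma_{n-1})$ and vanish off-diagonal.

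The substance of the proof is the formula for $V'[(a,n)]$. The key observation is that $V^{t_n}[(a,n)]$ is, up to the action of $S(n-2)$, proportional to the projector onto the maximally entangled state between sites $a$ and $n$; writing it in the natural basis of $\ind_{S(n-2)}^{S(n-1)}(\varphi^{\alpha})$ labelled by a coset index $a$ and an $\alpha$-index $k$, its matrix elements are governed precisely by the entries $d^{\delta_{ab}}\varphi^{\alpha}_{ij}[(a,n-1)(a,b)(b,n-1)]$, i.e.\ by the blocks $Q^{ab}_{ij}(\alpha)$ of Definition~\ref{def_Q}. I would make this identification sharp by checking the defining relations of $\A$ on the generators, in particular how a product $V'[(a,n)]\,V(\sigma_{n-1})$ reshuffles coset labels, so that the full collection $\{V'[(a,n)]\}_a$ assembles into the single block matrix $Q(\alpha)$.

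With that identification in hand, I would pass to the reduced basis $f$ via the unitary $Z(\alpha)$ of Theorem~\ref{thm18}. Because $Z(\alpha)$ diagonalises $Q(\alpha)$ with eigenvalues $\lambda_{\nu}(\alpha)$, conjugating the coset-block structure by $Z(\alpha)$ turns the $a$-th generator into $Z^{\dagger}(\alpha)\,P_a\,Z(\alpha)$, where $P_a$ projects onto the $a$-th coset block; the normalisation of the matrix units in $S(\Phi^{\alpha})$, dictated by the $Q$-weighted inner product, supplies the factors $\sqrt{\lambda_{\xi}}$ and $\sqrt{\lambda_{\nu}}$ on the two sides. Collecting indices, $(Z^{\dagger}P_a Z)_{j_{\xi}j_{\nu}}^{\xi\nu}=\sum_k z^{\dagger}(\alpha)_{j_{\xi}k}^{\xi a}\,z(\alpha)_{kj_{\nu}}^{a\nu}$, and the stated expression for $M_f^{\alpha}[V'[(a,n)]]$ follows. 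The dimension count $\dim S(\Phi^{\alpha})=\rank Q(\alpha)$ is then consistent: the single possible zero eigenvalue of $Q(\alpha)$ recalled in the spectral properties above accounts for the component with $h(\nu)>d$ that is absent from the sum.

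The main obstacle is the bookkeeping in the middle step: proving rigorously that the partially transposed transpositions $V'[(a,n)]$, expressed in the induced-representation basis, really are the blocks of $Q(\alpha)$, and that the correct normalisation of the basis $f$ produces the symmetric $\sqrt{\lambda_{\xi}}\,(\cdots)\,\sqrt{\lambda_{\nu}}$ weighting rather than a one-sided factor. This is where the rank deficiency of the transposed generators, which are proportional to rank-one entangled-state projectors within each block, must be reconciled with the unitarity of $Z(\alpha)$ on the full induced space, so that the $\sqrt{\lambda_{\nu}}=0$ direction is precisely the kernel quotiented out. Once this is handled both displayed equations are immediate, and the closing remark that $\Phi^{\alpha}$ restricted to $\mathbb{C}[S(n-1)]$ is the induced representation is merely a restatement of equation~\eqref{expr1a}.
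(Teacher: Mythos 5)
Your proposal is correct and follows essentially the same route as the paper's source for this statement: Proposition~\ref{f_basis} is quoted without proof from~\cite{Moz1}, where $\Phi^{\alpha}$ is likewise realised on the ideal $\MM$, the generators $V'[(a,n)]$ are identified in the natural induced basis with the blocks $Q^{ab}_{ij}(\alpha)$ of Definition~\ref{def_Q}, and the reduced basis $f$ is obtained by conjugating with $Z(\alpha)$ and normalising by $\Lambda^{\pm 1/2}$ (with $\Lambda=\operatorname{diag}(\lambda_{\nu}(\alpha))$), which produces exactly the symmetric $\sqrt{\lambda_{\xi}}\,(\cdot)\,\sqrt{\lambda_{\nu}}$ weighting and quotients out the single $\lambda_{\nu}=0$ direction corresponding to $h(\nu)>d$, so that $\dim S(\Phi^{\alpha})=\rank Q(\alpha)$. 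Your structural form $\Lambda^{1/2}Z^{\dagger}P_{a}Z\Lambda^{1/2}$ is consistent with everything the paper later derives from this proposition (inserting the PRIR expression~\eqref{BZR} reproduces Proposition~\ref{BP20}, summing over $a$ gives $M_{f}^{\Phi^{\alpha}}(\rho)=\operatorname{diag}(\lambda_{\nu}(\alpha))$, and $P_{a}QP_{a}=dP_{a}$ yields the projector relation $M^{2}=dM$), so the step you flag as remaining bookkeeping — the multiplication rule by which the $u$-operators of Proposition~\ref{complicated} close on the $Q(\alpha)$ blocks — is precisely the computation carried out in~\cite{Moz1}, not a conceptual gap.
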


The structure of the irreps of the second kind is much simpler,

\begin{proposition}
The irreps of the second kind, denoted as $\Psi ^{\nu }$, are
determined by the irreps $\psi^{\nu }$ of the group $S(n-1)$, such that $%
h(\nu )<d$. In this case we have 
\be
\Psi ^{\nu }:\mathcal{A}_{n}^{t_{n}}(d)\rightarrow \M(d_{\nu },\mathbb{C}), 
\ee
where the representation space $S(\Psi ^{\nu })$ of $\Psi ^{\nu }$ is simply 
$S(\Psi ^{\nu })=S(\psi ^{\nu })$, and 
\be
\Psi^{\nu }(a)=
\begin{cases}
0:a\notin S(n-1), \\ 
\psi ^{\nu }(\sigma _{n-1}):a=\sigma _{n-1}\in S(n-1).
\end{cases}%
\ee
In this case only the
elements of $S(n-1)$, which are not changed by partial transpose are
represented non trivially. The remaining natural generators of the algebra $%
\mathcal{A}_{n}^{t_{n}}(d)$ are not invertible~\cite{Moz1}.
\end{proposition}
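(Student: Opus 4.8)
The plan is to characterise the second-kind irreps $\Psi^{\nu}$ as exactly those representations of $\A$ that kill the two-sided ideal $\MM$ generated by the single entangling generator $V'=V^{t_n}[(n-1,n)]$, and then to identify the quotient $\A/\MM$ with $\mathcal{A}_{n-1}(d)$. This matches the heuristic splitting $\A=\MM\oplus\cS$ into an entangling part and a permutation part: the first-kind irreps $\Phi^{\alpha}$ are those on which $V'$ acts nontrivially, while the second-kind irreps are precisely the ones forced to satisfy $\Psi^{\nu}(V')=0$. So the whole statement reduces to understanding the quotient.

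First I would record the two elementary facts about $V'$ that drive everything. On the last two tensor factors $V'=\sum_{a,b}\ketbra{aa}{bb}$, so a one-line computation gives $V'^2=dV'$, i.e. $V'/d$ is an idempotent (the projector onto the maximally entangled vector). Next, writing $\tau=(a,n-1)\in S(n-1)$ and using $(a,n)=\tau(n-1,n)\tau$ together with the fact that partial transposition on the $n$-th factor commutes with left and right multiplication by $V(\tau)$ (which acts trivially on that factor), I obtain $V^{t_n}[(a,n)]=V(\tau)\,V'\,V(\tau)$. Hence every transposition generator touching the last strand is a conjugate of $V'$ by an element of $\mathcal{A}_{n-1}(d)$, and an analogous factorisation $V^{t_n}(\sigma)=V(\rho_1)\,V'\,V(\rho_2)$ with $\rho_1,\rho_2\in S(n-1)$ holds for every $\sigma$ with $\sigma(n)\neq n$. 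This simultaneously shows that all such generators lie in $\MM$ and that $\A$ is generated as an algebra by $\mathcal{A}_{n-1}(d)$ together with $V'$.

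With this in hand I would build $\Psi^{\nu}$ by pulling back along the quotient map $\A\twoheadrightarrow\A/\MM$. Since $\A$ is generated by $\mathcal{A}_{n-1}(d)$ and $V'$, and $V'\mapsto 0$ in the quotient, the quotient is generated by the image of $\mathcal{A}_{n-1}(d)$; identifying it with $\mathcal{A}_{n-1}(d)$ and taking the latter's irreducible modules -- which by Schur--Weyl are the $S(n-1)$-irreps $\psi^{\nu}$ -- yields representations on the spaces $S(\psi^{\nu})$ of dimension $d_{\nu}$ on which $V(\sigma_{n-1})$ acts as $\psi^{\nu}(\sigma_{n-1})$ and every entangling generator acts as $0$. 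This is exactly the stated form, and the non-invertibility of those generators is then immediate. Irreducibility is cheap: the restriction $\Psi^{\nu}|_{S(n-1)}=\psi^{\nu}$ is already an irreducible group representation, so by Burnside's density theorem the image is all of $\M(d_{\nu},\mathbb{C})$; distinctness of the $\Psi^{\nu}$ for different $\nu$ follows from distinctness of the $\psi^{\nu}$.

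The hard part is the identification of the quotient -- equivalently, proving $\mathcal{A}_{n-1}(d)\cap\MM=0$, so that setting all entangling generators to zero is consistent with every relation in $\A$ (no product of entangling generators can collapse onto a pure permutation). For large $d$ this is just the linear independence of the walled-Brauer diagram basis, but in general it is precisely the $d$-dependent phenomenon that forces the \emph{strict} bound $h(\nu)<d$: the would-be module $\psi^{\nu}$ with $h(\nu)=d$ gets absorbed into the first-kind irreps rather than surviving as a separate second-kind irrep. Controlling this boundary case, and checking completeness -- that the $\Phi^{\alpha}$ and $\Psi^{\nu}$ together exhaust all irreps, e.g. by the Wedderburn count $\dim\A=\sum_{\alpha}(\rank Q(\alpha))^2+\sum_{\nu}d_{\nu}^{2}$ -- is where the detailed representation-theoretic analysis of~\cite{Moz1} is genuinely required, and I would import it at this point rather than reprove it.
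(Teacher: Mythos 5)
First, a point of reference: the paper itself states this proposition without proof --- it is background recalled from~\cite{Moz1} --- so the only meaningful comparison is with the analysis in~\cite{Moz1}, which your proposal also ends up citing for its decisive step. Your elementary scaffolding is correct and well chosen: $V'^{2}=dV'$, the conjugation identity $V^{t_n}[(a,n)]=V[(a,n-1)]\,V'\,V[(a,n-1)]$ and its extension to every $\sigma$ with $\sigma(n)\neq n$, the consequence that $\A$ is generated by $\mathcal{A}_{n-1}(d)$ together with $V'$, the Burnside argument for irreducibility of the pullbacks, and the deduction of non-invertibility from $\Psi^{\nu}(V^{t_n}(\sigma))=0$ (valid once at least one $\nu$ with $h(\nu)<d$ exists, which holds for all $d\geq 2$). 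All of this correctly shows that any irrep of $\A$ annihilating the two-sided ideal generated by $V'$ has the stated matrix form for some $\nu$ with $h(\nu)\leq d$.

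However, there is a genuine gap, and it sits exactly where the proposition has content. Your proposed key lemma, $\mathcal{A}_{n-1}(d)\cap\MM=0$, is false precisely in the regime $d\leq n-1$ where the strict bound $h(\nu)<d$ bites: by Schur--Weyl duality, $\mathcal{A}_{n-1}(d)\cong\bigoplus_{\nu:h(\nu)\leq d}\M(d_{\nu},\mathbb{C})$, and every simple block with $h(\nu)=d$ is annihilated by all second-kind irreps, hence lies inside $\MM$; consequently $\A/\MM\cong\bigoplus_{\nu:h(\nu)<d}\M(d_{\nu},\mathbb{C})$ is a \emph{proper} quotient of $\mathcal{A}_{n-1}(d)$, not $\mathcal{A}_{n-1}(d)$ itself. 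Your own later remark that the $h(\nu)=d$ modules are ``absorbed'' into the first-kind irreps contradicts the lemma as you stated it, so the proposal is internally inconsistent on this point. More importantly, determining that the cutoff is exactly the strict inequality $h(\nu)<d$ --- equivalently, computing which Schur--Weyl blocks of $\mathcal{A}_{n-1}(d)$ fall into $\MM$ --- together with completeness (that the $\Phi^{\alpha}$ and these $\Psi^{\nu}$ exhaust all irreps, e.g.\ via the Wedderburn count you mention) is the entire substance of the proposition, and you explicitly import it from~\cite{Moz1}. Since the statement being proved \emph{is} a result of~\cite{Moz1}, deferring that step renders the argument circular as a proof: what you have genuinely established is only the easy half, namely that every irrep killing the ideal generated by $V'$ factors through some $\psi^{\nu}$ with $h(\nu)\leq d$ and acts as stated on the generators, while the characterization of which $\nu$ actually occur remains unproved.
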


Using the properties of irreps of $A_{n}^{t_{n}}(d)$ described in~\cite{Moz1} one
can derive the following decomposition of the natural representation of the
algebra $\mathcal{A}_{n}^{t_{n}}(d)$ (see Remark~\ref{split}) into irreducible components.

\begin{theorem}
	\label{decompA}
The algebra $\mathcal{A}_{n}^{t_{n}}(d)$ in its natural representation in the space $(\mathbb{C}^{d})^{\otimes n}$ has the following decomposition into irreps
\be
\label{decompA_eq}
\mathcal{A}_{n}^{t_{n}}(d)=\left[ \bigoplus _{\alpha :h(\alpha )\leq d}m_{\alpha }\Phi ^{\alpha
}\right] \oplus \left[  \bigoplus _{\nu :h(\nu )<d}m_{\nu }\Psi ^{\nu }\right].
\ee
The multiplicities $m_{\alpha }$ are equal to the multiplicities of the irreps $\varphi ^{\alpha }$ of $S(n-2)$ in the representation $V_{d}(S(n-2))$ (see Proposition~\ref{mult_a}) 
\be
m_{\alpha }=\frac{1}{(n-2)!}\sum_{\sigma \in S(n-2)}\chi ^{\alpha }(\sigma
^{-1})d^{l(\sigma )}, 
\ee
 where $l(\sigma
 ) $ is the number of disjoint cycles in the permutation $\sigma $, and 
\be
m_{\nu }=d^{n}-\sum_{\alpha :\nu \in \ind_{S(n-2)}^{S(n-1)}(\varphi ^{\alpha
})}m_{\alpha }. 
\ee
\end{theorem}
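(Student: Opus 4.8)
The plan is to combine the semisimplicity of $\mathcal{A}_n^{t_n}(d)$ with the complete classification of its irreps. By Remark~\ref{split} the natural representation on $(\mathbb{C}^d)^{\otimes n}$ is a direct sum of irreps, and since every irrep is either of the first kind $\Phi^\alpha$ (Proposition~\ref{f_basis}) or of the second kind $\Psi^\nu$, it must take the form $\bigoplus_{\alpha:h(\alpha)\le d}m_\alpha\Phi^\alpha\oplus\bigoplus_{\nu:h(\nu)<d}m_\nu\Psi^\nu$ for some nonnegative integers $m_\alpha,m_\nu$. The entire content of the theorem is therefore the evaluation of these two families of multiplicities.

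To obtain $m_\alpha$ I would exploit the distinguished generator $V'=V^{t_n}[(n-1,n)]$. A direct computation on the last two tensor legs shows that $V'$ acts as $d$ times the orthogonal projector onto the maximally entangled vector $\tfrac{1}{\sqrt d}\sum_i e_i\otimes e_i$, so $P:=\tfrac1d V'$ is an idempotent of $\mathcal{A}_n^{t_n}(d)$ whose image is $(\mathbb{C}^d)^{\otimes(n-2)}\otimes\tfrac1{\sqrt d}\sum_i e_i\otimes e_i\cong(\mathbb{C}^d)^{\otimes(n-2)}$. Each $V(\sigma)$ with $\sigma\in S(n-2)$ has support disjoint from that of $V'$, hence commutes with $P$ and acts on $\operatorname{Image}(P)$ exactly as the permutation representation of $S(n-2)$ on $(\mathbb{C}^d)^{\otimes(n-2)}$, whose $\varphi^\alpha$-multiplicities are given by Proposition~\ref{mult_a} applied to $S(n-2)$, i.e. by the claimed $m_\alpha$. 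Since $(n-1,n)\notin S(n-1)$ we have $\Psi^\nu(V')=0$, so $P$ annihilates every second-kind block and $\operatorname{Image}(P)=\bigoplus_\alpha m_\alpha\operatorname{Image}\Phi^\alpha(P)$ as $S(n-2)$-modules; identifying this with $(\mathbb{C}^d)^{\otimes(n-2)}$ then transfers the multiplicities and yields the formula for $m_\alpha$.

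For the second family I would restrict the whole decomposition to the subalgebra $\mathcal{A}_{n-1}(d)\cong\mathbb{C}[S(n-1)]$, on which partial transposition acts trivially. As an $S(n-1)$-module the natural representation is $(\mathbb{C}^d)^{\otimes(n-1)}\otimes\mathbb{C}^d$ with $S(n-1)$ acting only on the first $n-1$ legs, so the multiplicity of $\psi^\nu$ there is read off from Proposition~\ref{mult_a}. On the other hand the last line of Proposition~\ref{f_basis} gives $\Phi^\alpha|_{\mathcal{A}_{n-1}(d)}=\ind_{S(n-2)}^{S(n-1)}(\varphi^\alpha)$, in which each $\psi^\nu$ with $\nu\in\alpha$ occurs once, while $\Psi^\nu|_{\mathcal{A}_{n-1}(d)}=\psi^\nu$. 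Matching the multiplicity of a fixed $\psi^\nu$ on both sides gives a single linear relation whose solution expresses $m_\nu$ as the multiplicity of $\psi^\nu$ in the natural representation regarded as an $S(n-1)$-module minus the total first-kind contribution $\sum_{\alpha:\nu\in\ind_{S(n-2)}^{S(n-1)}(\varphi^\alpha)}m_\alpha$, which is the stated formula.

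The main obstacle I expect is the structural input in the $m_\alpha$ step: one must show that $P$ meets each first-kind block $\Phi^\alpha$ in a subspace carrying exactly one copy of $\varphi^\alpha$ under $S(n-2)$, equivalently that $\rank\Phi^\alpha(V')=d_\alpha$ with the correct module structure. This is where the explicit reduced-basis form of $\Phi^\alpha(V')$ from Proposition~\ref{f_basis} and the spectral data of $Q(\alpha)$ in Theorem~\ref{thm18} are needed. A secondary subtlety is the bookkeeping at the boundary $h(\nu)=d$: such $\psi^\nu$ appear inside the first-kind blocks (whose space allows $h(\nu)\le d$) but produce no second-kind irrep (which requires $h(\nu)<d$), so the matching equation for these $\nu$ carries no $m_\nu$ term and instead furnishes a consistency check that the first-kind multiplicities already exhaust the $\psi^\nu$-isotypic component.
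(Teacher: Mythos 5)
Your proposal is essentially correct, and it supplies an actual argument where the paper supplies none: Theorem~\ref{decompA} is stated as a recollection, with the proof deferred entirely to~\cite{Moz1}. Your two-step multiplicity matching is a clean and legitimate route. In step (i), $\tfrac1d V'$ is indeed the projector $\mathbf{1}^{\otimes(n-2)}\otimes\ketbra{\psi^+}{\psi^+}$, it commutes with $V(S(n-2))$, and it kills every second-kind block by the very definition of $\Psi^\nu$; the structural input you correctly flag --- that $\operatorname{rank}\Phi^\alpha(V')=d_\alpha$ with image a single copy of $\varphi^\alpha$ --- is exactly what equation~\eqref{blee2} of Proposition~\ref{BP20} delivers: $M_f^\alpha(V')$ is supported on the PRIR subblocks with $\xi_\omega=\xi_\nu=\alpha$, where it has the form (positive rank-one matrix over the labels $\nu\in\alpha$, $h(\nu)\le d$) tensored with $\mathbf{1}_{d_\alpha}$, and all retained eigenvalues $\lambda_\nu(\alpha)$ are strictly positive because the at most one vanishing eigenvalue of $Q(\alpha)$ corresponds precisely to the truncated branch with $h(\nu)=d+1$ (Theorem~\ref{thm18}). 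Comparing $\varphi^\alpha$-isotypic components of $\operatorname{Image}(P)\cong(\mathbb{C}^d)^{\otimes(n-2)}$ then forces the multiplicity of $\Phi^\alpha$ to equal $m_\alpha$ of Proposition~\ref{mult_a}. Step (ii) is also sound, including your observation that for $h(\nu)\le d$ every $\alpha\subset\nu$ automatically has $h(\alpha)\le d$, so no first-kind contribution is truncated away, and your reading of the boundary $h(\nu)=d$ as a consistency identity rather than an equation for $m_\nu$.

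One caveat, which is a defect of the printed statement rather than of your proof: your matching yields $m_\nu = d\,\widetilde m_\nu-\sum_{\alpha:\nu\in\ind_{S(n-2)}^{S(n-1)}(\varphi^\alpha)}m_\alpha$, where $\widetilde m_\nu$ is the multiplicity of $\psi^\nu$ in $V_d(S(n-1))$, and \emph{not} the literal $m_\nu=d^n-\sum_\alpha m_\alpha$ of Theorem~\ref{decompA}, despite your closing claim that this ``is the stated formula.'' The printed $d^n$ cannot be right: for $n=3$, $d=2$ one has $m_{(1)}=2$, $\dim\Phi^{(1)}=2$, so the second-kind multiplicity must be $(8-4)/\dim\psi^{(2)}=4$, which is what your formula gives ($2\cdot 3-2=4$), while $d^n-m_{(1)}=6$ violates the dimension count. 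Your version agrees with the paper's own later restatement of this decomposition (the unnumbered theorem preceding Corollary~\ref{cor7}, where the second-kind multiplicity is $M_\nu=dm_\nu-\sum_\alpha m_\alpha$), so the $d^n$ is evidently a typo and your derivation proves the corrected statement.
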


\begin{remark}
Note that from Theorem~\ref{decompA} it follows that when $d\geq n$ all possible 
irreps of the first kind and second kind are included in the
decomposition of $\mathcal{A}_{n}^{t_{n}}(d)$. When $d<n$ then the conditions $h(\alpha )\leq d$ and $h(\nu )<d$ limit the variety of irreps appearing in the decomposition given through~\eqref{decompA_eq}.
\end{remark}
As was presented in~\cite{Moz1}   the orthogonal  projectors  $F_{\nu }(\alpha )$ onto non-trivial irreducible subspaces of the algebra $\mathcal{A}_n^{t_n}(d)$ have in fact very complicated form:
\begin{proposition}
	\label{complicated}
	Projectors onto non-trivial irreducible spaces of algebra $\A$ are of the form
	\be
	\label{F_comp}
	F_{\nu }(\alpha )=\frac{1}{\lambda _{\nu}(\alpha)}\sum_{j_{\nu
	}}\sum_{ai,bk}(z^{-1}(\alpha))_{j_{\nu }i}^{\nu a}u_{ki}^{ba}(\alpha )z_{kj_{\nu
	}}^{b\nu }(\alpha),
	\ee
	where $z(\alpha)_{j_{\nu }i}^{\nu a}$ is given in eq.~\eqref{zzz} in Theorem~\ref{thm18}, and
	\be
	u_{ij}^{ab}(\alpha )=\frac{d_{\alpha }}{(n-2)!}V'[(a,n)]\sum_{\sigma \in
		S(n-2)}\varphi _{ji}^{\alpha }(\sigma ^{-1})V[(an-1)(\sigma )(bn-1)].
	\ee
\end{proposition}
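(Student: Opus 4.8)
The plan is to verify the claimed formula by evaluating it inside every irreducible representation of $\A$ and then invoking semisimplicity. By Remark~\ref{split} the natural representation on $(\mathbb{C}^{d})^{\otimes n}$ is a faithful direct sum of the first-kind irreps $\Phi^{\beta}$ and the second-kind irreps $\Psi^{\mu}$, so an element of $\A$ is determined by its images under all of them. It therefore suffices to prove that $\Phi^{\beta}(F_{\nu}(\alpha))$ is the orthogonal projector onto the subspace $S(\psi^{\nu})\subset S(\Phi^{\alpha})$ when $\beta=\alpha$ and vanishes otherwise, while $\Psi^{\mu}(F_{\nu}(\alpha))=0$ for every semi-trivial irrep.

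First I would rewrite the building block in a transparent form. Since $V$ is a homomorphism on $S(n-1)$ and $(a,n-1),(b,n-1)\in S(n-1)$, we have
\be
u_{ij}^{ab}(\alpha)=V'[(a,n)]\,V[(a,n-1)]\,E_{ij}^{\alpha}\,V[(b,n-1)],\qquad E_{ij}^{\alpha}\equiv\frac{d_{\alpha}}{(n-2)!}\sum_{\sigma\in S(n-2)}\varphi_{ji}^{\alpha}(\sigma^{-1})V(\sigma),
\ee
where $E_{ij}^{\alpha}$ are the Weyl matrix-unit operators of $\varphi^{\alpha}$ in the natural representation of $S(n-2)$. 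I would record their standard properties: the matrix-unit multiplication law, $(E_{ij}^{\alpha})^{\dagger}=E_{ji}^{\alpha}$ under the real orthogonal convention chosen for $\varphi^{\alpha}$, and the fact that $\sum_{i}E_{ii}^{\alpha}$ projects onto the $\varphi^{\alpha}$-isotypic component. Because every $u_{ij}^{ab}(\alpha)$ carries the factor $V'[(a,n)]$, which is represented by the zero operator in each second-kind irrep $\Psi^{\mu}$, we immediately obtain $\Psi^{\mu}(u_{ij}^{ab}(\alpha))=0$ and hence $\Psi^{\mu}(F_{\nu}(\alpha))=0$, disposing of the semi-trivial part.

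The heart of the argument is to compute $\Phi^{\beta}(u_{ij}^{ab}(\alpha))$ in the reduced matrix basis $f$ of Proposition~\ref{f_basis}. Restricted to the subalgebra $\mathcal{A}_{n-1}(d)$ the irrep $\Phi^{\beta}$ acts as $\ind_{S(n-2)}^{S(n-1)}(\varphi^{\beta})$, so the matrix-unit character of $E_{ij}^{\alpha}$ forces the image to vanish unless $\beta=\alpha$; combined with the explicit matrix form $M_{f}^{\alpha}[V'[(a,n)]]_{j_{\xi}j_{\nu}}^{\xi\nu}=\sum_{k}\sqrt{\lambda_{\xi}}\,z^{\dagger}(\alpha)_{j_{\xi}k}^{\xi a}z(\alpha)_{kj_{\nu}}^{a\nu}\sqrt{\lambda_{\nu}}$ from the same proposition, I expect $\Phi^{\alpha}(u_{ij}^{ab}(\alpha))$ to be governed by the entry $Q_{ij}^{ab}(\alpha)$ of Definition~\ref{def_Q}; the reappearance of $Q(\alpha)$ is precisely what links the construction to its spectral data. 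Substituting this into the defining sum, the contraction $\sum_{ai,bk}(z^{-1}(\alpha))_{j_{\nu}i}^{\nu a}\,\Phi^{\alpha}(u_{ki}^{ba}(\alpha))\,z(\alpha)_{kj_{\nu}}^{b\nu}$ is exactly the sandwich of $Q(\alpha)$ between its own eigenvectors, so by the eigenvector relation $\sum_{bl}Q(\alpha)_{kl}^{ab}z(\alpha)_{lj_{\mu}}^{b\mu}=\lambda_{\mu}(\alpha)z(\alpha)_{kj_{\mu}}^{a\mu}$ of Theorem~\ref{thm18} together with the unitarity $z^{-1}=z^{\dagger}$ the whole expression collapses to $\lambda_{\nu}(\alpha)$ times the rank-$d_{\nu}$ orthogonal projector onto $S(\psi^{\nu})$. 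The prefactor $1/\lambda_{\nu}(\alpha)$ then cancels the eigenvalue, giving $\Phi^{\alpha}(F_{\nu}(\alpha))$ equal to that projector and $\Phi^{\beta}(F_{\nu}(\alpha))=0$ for $\beta\neq\alpha$. Hermiticity and idempotency of $F_{\nu}(\alpha)$, and the orthogonality $F_{\nu}(\alpha)F_{\nu'}(\alpha')=\delta^{\alpha\alpha'}\delta^{\nu\nu'}F_{\nu}(\alpha)$, then follow because each irrep sends $F_{\nu}(\alpha)$ to an orthogonal projector or to zero and the orthonormal $z$-columns attached to distinct labels $(\alpha,\nu)$ are mutually orthogonal.

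I expect the main obstacle to be the middle step: showing cleanly that $\Phi^{\alpha}(u_{ij}^{ab}(\alpha))$ reproduces $Q_{ij}^{ab}(\alpha)$ in the correct index convention. This requires tracking the conjugations by $V[(a,n-1)]$ and $V[(b,n-1)]$ through the induced-representation action, matching the normalization $d_{\alpha}/(n-2)!$, and keeping the primed/unprimed and $ij$-versus-$ji$ conventions consistent, so that the eigenvalue cancels exactly rather than up to a spurious constant. Once this identification is in place, every remaining manipulation is a mechanical application of the spectral theorem for $Q(\alpha)$.
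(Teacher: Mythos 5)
The paper states this proposition without proof, recalling it from~\cite{Moz1}, so your proposal has to stand on its own. Its skeleton is sound: by semisimplicity it suffices to show $\Psi^{\mu}(F_{\nu}(\alpha))=0$, $\Phi^{\beta}(F_{\nu}(\alpha))=0$ for $\beta\neq\alpha$, and $\Phi^{\alpha}(F_{\nu}(\alpha))$ equal to the projector onto $S(\psi^{\nu})$ (which is unambiguous because $\Phi^{\alpha}$ restricted to $S(n-1)$ is multiplicity-free); your factorization $u_{ij}^{ab}(\alpha)=V'[(a,n)]\,V[(a,n-1)]\,E_{ij}^{\alpha}\,V[(b,n-1)]$ and the disposal of the second-kind irreps are correct. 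However, your justification of the cross-block vanishing is wrong: the restriction of $\ind_{S(n-2)}^{S(n-1)}(\varphi^{\beta})$ back to $S(n-2)$ is \emph{not} $\varphi^{\beta}$-isotypic; it contains every $\varphi^{\gamma}$ with $\gamma=\nu-\square$ for some $\nu=\beta+\square$, hence generically contains $\varphi^{\alpha}$ with $\alpha\neq\beta$ (already for $n-2=2$: $\ind_{S(2)}^{S(3)}(\varphi^{(1,1)})\supset\psi^{(2,1)}$, whose restriction contains $\varphi^{(2)}$). So $\Phi^{\beta}(E_{ij}^{\alpha})\neq 0$ in general, and the matrix-unit character of $E_{ij}^{\alpha}$ alone forces nothing. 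The vanishing is true, but it is produced by the $V'$ factor: since $(a,n)=(a,n-1)(n-1,n)(a,n-1)$ and the outer factors do not act on the last system, $V'[(a,n)]V[(a,n-1)]=V[(a,n-1)]V'$, whence $u_{ij}^{ab}(\alpha)=V[(a,n-1)]\,V'E_{ij}^{\alpha}\,V[(b,n-1)]$; in the irrep labelled $\beta$ the image of $V'$ is supported exactly on the $\varphi^{\beta}$ PRIR blocks (eq.~\eqref{blee2} with label $\beta$), while the image of $E_{ij}^{\alpha}$ is block diagonal and supported exactly on the $\varphi^{\alpha}$ blocks (as computed inside the proof of Theorem~\ref{przerzutniki}), so the product vanishes unless $\beta=\alpha$. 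Equivalently, in the induced-representation picture the image of $V'$ factors through the coset block $a=n-1$, on which $E_{ij}^{\alpha}$ acts as $\varphi^{\beta}(E_{ij}^{\alpha})=\delta^{\alpha\beta}e_{ij}$ by Schur orthogonality; it is this conjunction, not the restriction alone, that kills the cross terms.

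The step you defer as the ``main obstacle'' is also mispredicted, and with your prediction the endgame does not typecheck. Conjugating through the intertwiner $Z(\alpha)$ of Theorem~\ref{thm18} and using the block computation above, one finds
\be
\Phi^{\alpha}\left(u_{ij}^{ab}(\alpha)\right)=\sqrt{\Lambda}\,Z^{\dagger}(\alpha)\,e_{(a,i),(b,j)}\,Z(\alpha)\,\sqrt{\Lambda},\qquad \Lambda=\operatorname{diag}\left(\lambda_{\nu}(\alpha)\right),
\ee
where $e_{(a,i),(b,j)}$ is a matrix unit of $\M((n-1)d_{\alpha},\mathbb{C})$; as a consistency check, summing over $i=j$ at fixed $a=b$ reproduces the image $M_{f}^{\alpha}\left[V'[(a,n)]\right]$ of Proposition~\ref{f_basis}. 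This is an operator, not the scalar entry $Q_{ij}^{ab}(\alpha)$: were $\Phi^{\alpha}(u_{ki}^{ba}(\alpha))$ proportional to $Q_{ki}^{ba}(\alpha)$, your contraction with $z^{-1}$ and $z$ would collapse to the \emph{number} $d_{\nu}\lambda_{\nu}(\alpha)$ rather than to a rank-$d_{\nu}$ projector. With the correct image the proof does close, but via unitarity rather than the eigenvalue relation: the $(b,k)$-sum assembles the column $\zeta_{\nu j_{\nu}}$ of $Z(\alpha)$, the $(a,i)$-sum its adjoint (using $z^{-1}=z^{\dagger}$), $Z^{\dagger}(\alpha)\zeta_{\nu j_{\nu}}$ is the standard basis vector at $(\nu,j_{\nu})$, and the two factors $\sqrt{\Lambda}$ contribute $\lambda_{\nu}(\alpha)$, cancelling the prefactor and leaving $\sum_{j_{\nu}}$ of diagonal matrix units, i.e.\ exactly the canonical projector of Lemma~\ref{lemma_Mf}. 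In short: correct strategy and correct endpoints, but the two decisive steps --- the vanishing for $\beta\neq\alpha$ and the identification of $\Phi^{\alpha}(u_{ij}^{ab}(\alpha))$ --- are respectively wrongly argued and wrongly guessed; both are repairable along the lines indicated.
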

Indeed, equation~\eqref{F_comp} contains matrix elements of $Z(\alpha)$ which are in generally hard to compute (see Remark~\ref{Hard}).

\section{New results regarding partially reduced irreducible representations}
\label{AppA}
In this note we recall the notion of  the Partially Reduced Irreducible Representations (PRIRs) introduced in~\cite{Stu2017} and derive new properties of them. The  concept of PRIRs plays a crucial role in the simplification of the representation of the algebra $\mathcal{A}_n^{t_n}(d)$ (see Section~\ref{simplification}), derivation of the fidelity in the deterministic version of the PBT, and new proof of the lower bound on fidelity in deterministic version of the PBT presented~\cite{beigi_konig} (see Section~\ref{det_PBT}).

Let us consider an arbitrary unitary irrep $\psi^{\mu }$ of $S(n)$. It
can be always unitarily transformed to reduced form $\psi_{R}^{\mu }$, such that 
\be
\forall \pi \in S(n-1)\quad \psi _{R}^{\mu }(\pi )=\bigoplus _{\alpha \in \mu
}\varphi ^{\alpha }(\pi), 
\ee
where $\varphi ^{\alpha }$ are irreps of $S(n-1)$. By $\alpha \in \mu$ we understand such Young diagrams $\alpha$ which can be obtained from $\mu$ by removing one box in the proper way. We see, that the restriction of the irrep  $\psi^{\mu }$ of $S(n)$ to the subgroup $S(n-1)$ has a block-diagonal form of completely reduced representation, which in matrix notation takes the form%
\be
\label{prir1}
\forall \pi \in S(n-1)\quad \psi _{R}^{\mu }(\pi)=\left( \delta ^{\alpha \beta
}\varphi _{i_{\alpha }j_{\alpha }}^{\alpha }\right) .
\ee
The block structure of this reduced representation allows us to introduce such a block indexation for $\psi _{R}^{\mu }$ of $S(n)$, which gives 
\be
\label{eq27}
\forall \sigma \in S(n)\quad \psi _{R}^{\mu }(\sigma )=\left( \psi _{k_{\mu
	}l_{\mu }}(\sigma )\right) =\left( \psi _{i_{\alpha }j_{\beta }}^{\alpha \beta }(\sigma)\right) ,
\ee
where the indices $k_{\mu},l_{\mu}$ are standard matrix indices, the matrices on the diagonal $(\psi _{R}^{\mu })^{\alpha \alpha
}(\sigma )=\left( \psi _{i_{\alpha }j_{\alpha }}^{\alpha \alpha }(\sigma )\right) $ are
of dimension of corresponding irrep $\varphi ^{\alpha }$ of $S(n-1)$. Reader notices that the
off-diagonal blocks need not to be square.  From this it follows that we may introduce the idea PRIR which we define in the following way
\begin{definition}
	An irrep $\psi ^{\mu}$of the group $S(n)$ is
	the Partially Reduced Irreducible Representation (PRIR) \ if it has a
	reduced form on the subgroup $S(n-1)\subset S(n),$ i.e. we have 
	\be
	\forall \sigma \in S(n-1)\quad \psi _{R}^{\mu }(\sigma )=\bigoplus _{\alpha \in \mu
	}\varphi ^{\alpha }(\sigma ).
	\ee
	For such  representations, in general,  we will use the block indexation
	described in equation~\eqref{eq27}.
\end{definition}

\begin{remark}
	Clearly for a given irrep $\psi ^{\mu}$ of the
	group $S(n)$ its PRIR is given not uniquely.
\end{remark}

\begin{remark}
It is obvious that any irrep of $S(n)$ can be unitarily transformed into PRIR representation.
\end{remark}
The first new result regarding PRIRs is summarized in the following proposition which plays similar role to standard orthogonality relation for irreps:
\begin{proposition}
	\label{AP13}
	The PRIRs $\psi _{R}^{\mu }, \psi _{R}^{\nu }$ of $S(n)$ satisfy the following bilinear summation rule
	\be
	\label{eqAP13}
	\forall \alpha ,\beta \in \mu  \quad \forall \beta, \gamma \in \nu \qquad \sum_{a=1}^{n}\sum_{k_{\beta
		}=1}^{d_{\beta}}(\psi _{R}^{\mu })_{i_{\alpha }k_{\beta }}^{\alpha \beta
	}[(a,n)](\psi _{R}^{\nu })_{k_{\beta }j_{\gamma }}^{\beta \gamma }[(a,n)]=n\frac{d_{\beta }}{d_{\mu }}\delta^{\mu \nu}\delta ^{\alpha \gamma }\delta
	_{i_{\alpha }j_{\gamma }}, 
	\ee
	where $\alpha, \beta, \gamma $ are irreps of $S(n-1)$ contained
	in the irreps $\mu,\nu $ of $S(n)$.
\end{proposition}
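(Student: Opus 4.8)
The plan is to derive the bilinear rule \eqref{eqAP13} from the ordinary Schur orthogonality relations on the whole group $S(n)$, exploiting the fact that the left translates of $S(n-1)$ by the transpositions $(a,n)$ exhaust $S(n)$. Concretely, the $n$ transpositions $(a,n)$ with $a=1,\dots,n$ (with the convention $(n,n)=e$) form a complete set of left coset representatives of $S(n-1)$ in $S(n)$: since $S(n-1)$ is the stabiliser of the point $n$, every $\sigma\in S(n)$ factors uniquely as $\sigma=(a,n)\pi$ with $a=\sigma(n)$ and $\pi\in S(n-1)$. Throughout I would use that the irreps of $S(n)$, and hence their PRIR form, may be taken real orthogonal, so that $\psi_R^\mu(\sigma^{-1})=\psi_R^\mu(\sigma)^{T}$; in particular, because $(a,n)$ is an involution, the matrix $\psi_R^\mu[(a,n)]$ is symmetric, i.e. $(\psi_R^\mu)^{\beta\gamma}_{k_\beta j_\gamma}[(a,n)]=(\psi_R^\mu)^{\gamma\beta}_{j_\gamma k_\beta}[(a,n)]$.

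First I would use the PRIR property to factor a general matrix element. Writing $\sigma=(a,n)\pi$ and using that $\psi_R^\mu$ is reduced on $S(n-1)$, so that its restriction is block diagonal as in the block indexation \eqref{eq27}, block multiplication gives
\be
(\psi _{R}^{\mu })_{i_{\alpha }j_{\gamma }}^{\alpha \gamma }(\sigma)=\sum_{k_{\gamma}}(\psi _{R}^{\mu })_{i_{\alpha }k_{\gamma }}^{\alpha \gamma }[(a,n)]\,\varphi _{k_{\gamma}j_{\gamma}}^{\gamma }(\pi),
\ee
since only the diagonal block $\delta^{\delta\gamma}\varphi^\gamma(\pi)$ of $\psi_R^\mu(\pi)$ survives. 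Next I would insert this into the full orthogonality relation $\sum_{\sigma\in S(n)}(\psi_R^\mu)^{\alpha\gamma}_{i_\alpha j_\gamma}(\sigma)(\psi_R^\nu)^{\alpha'\gamma'}_{i'_{\alpha'}j'_{\gamma'}}(\sigma)=\tfrac{n!}{d_\mu}\delta^{\mu\nu}\delta^{\alpha\alpha'}\delta_{i_\alpha i'_{\alpha'}}\delta^{\gamma\gamma'}\delta_{j_\gamma j'_{\gamma'}}$ and split the sum over $\sigma$ as $\sum_{a=1}^n\sum_{\pi\in S(n-1)}$. The $\pi$-summation then factors off and is evaluated by Schur orthogonality on $S(n-1)$, namely $\sum_{\pi}\varphi^\gamma_{k_\gamma j_\gamma}(\pi)\varphi^{\gamma'}_{k'_{\gamma'}j'_{\gamma'}}(\pi)=\tfrac{(n-1)!}{d_\gamma}\delta^{\gamma\gamma'}\delta_{k_\gamma k'_{\gamma'}}\delta_{j_\gamma j'_{\gamma'}}$.

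Cancelling the common factor $\delta^{\gamma\gamma'}\delta_{j_\gamma j'_{\gamma'}}$ from both sides and using $n!/(n-1)!=n$ leaves
\be
\sum_{a=1}^{n}\sum_{k_{\gamma}}(\psi _{R}^{\mu })_{i_{\alpha }k_{\gamma }}^{\alpha \gamma }[(a,n)](\psi _{R}^{\nu })_{i'_{\alpha'} k_{\gamma }}^{\alpha' \gamma }[(a,n)]=n\frac{d_{\gamma}}{d_{\mu}}\delta^{\mu\nu}\delta^{\alpha\alpha'}\delta_{i_\alpha i'_{\alpha'}}.
\ee
Finally I would relabel the common block $\gamma\to\beta$, together with $\alpha'\to\gamma$ and $i'_{\alpha'}\to j_\gamma$, and apply the symmetry of $\psi_R^\nu[(a,n)]$ to rewrite $(\psi_R^\nu)^{\gamma\beta}_{j_\gamma k_\beta}=(\psi_R^\nu)^{\beta\gamma}_{k_\beta j_\gamma}$. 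This turns the left-hand side into exactly the contraction appearing in \eqref{eqAP13} and the right-hand side into $n\,d_\beta/d_\mu\,\delta^{\mu\nu}\delta^{\alpha\gamma}\delta_{i_\alpha j_\gamma}$, which is the claim.

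The main obstacle is organisational: keeping the double (block, internal) index structure consistent through the coset split, and making sure the single internal index that is being summed sits in the block $\beta$ common to both $\mu$ and $\nu$. The one genuinely non-formal ingredient is the reality and orthogonality of the PRIR together with the involutivity of $(a,n)$, which is what allows the index on the second factor to be flipped so that the two factors share the summed index in the position demanded by \eqref{eqAP13}; without it the orthogonality relation would produce only the ``both-second-index'' contraction rather than the mixed one stated in the proposition.
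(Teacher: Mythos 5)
Your proof is correct and follows essentially the same route as the paper, which does not spell the argument out but defers to the proof of Proposition 17 in~\cite{Stu2017}: there, too, the identity is obtained by splitting the Schur orthogonality sum over $S(n)$ along the coset decomposition $\sigma=(a,n)\pi$, $\pi\in S(n-1)$, using the block-diagonal PRIR form on $S(n-1)$ and the orthogonality relations for the $\varphi^\gamma$, with the reality and symmetry of the transposition matrices handling the index flip. Your bookkeeping of the $(\text{block},\text{internal})$ indices, the convention $(n,n)=e$, and the final relabelling to place the summed index in the common block $\beta$ are all sound.
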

The proof of above proposition goes similarly as proof of the Proposition 17 in~\cite{Stu2017}, but it generalisation is necessary for further applications in this manuscript. 
Next we  we prove one more summation rule which is crucial in order to prove Theorem~\ref{BL16} which is the main result of this section.
\begin{lemma}
	\label{PLemm}
	\bigskip Let $\psi _{R}^{\nu }$ are PRIR representations of the group $%
	S(n-1)$ included in $\Phi ^{\alpha }=\ind_{S(n-2)}^{S(n-1)}(\varphi ^{\alpha })$, then
	\be
	\sum_{\nu \in \Phi^{\alpha }}d_{\nu} (\psi _{R}^{\nu })_{j_{\alpha }k_{\alpha
	}}^{\alpha \alpha }[(b,n-1)]=(n-1)d_{\alpha} \delta _{b,n-1}\delta _{j_{\alpha
		}k_{\alpha }},
	\ee
	where the summation is over partitions $\nu$ labelling irreps of $S(n-1)$  contained in $\Phi^{\alpha}$.
\end{lemma}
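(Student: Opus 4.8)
The plan is to convert the $d_\nu$-weighted sum over irreps into a single evaluation in the group algebra $\mathbb{C}[S(n-1)]$, using the regular representation together with the Wedderburn matrix units of the subalgebra $\mathbb{C}[S(n-2)]$ to isolate the $\alpha\alpha$-block and to enforce, automatically, the condition $\nu\in\Phi^{\alpha}$. First I would introduce, for the fixed irrep $\varphi^{\alpha}$ of $S(n-2)$, the matrix units
\[
e^{\alpha}_{st}=\frac{d_{\alpha}}{(n-2)!}\sum_{\pi\in S(n-2)}\varphi^{\alpha}_{st}(\pi^{-1})\,\pi\ \in\ \mathbb{C}[S(n-2)]\subset\mathbb{C}[S(n-1)],
\]
and record their defining property, obtained from the orthogonality relations for $\varphi^{\alpha}$: in any PRIR $\psi_{R}^{\nu}$ of $S(n-1)$ one has $\psi_{R}^{\nu}(e^{\alpha}_{st})=E_{(\alpha,t),(\alpha,s)}$, the elementary matrix supported on the $\alpha$-block, whenever $\alpha\in\nu$, and $\psi_{R}^{\nu}(e^{\alpha}_{st})=0$ whenever $\alpha\notin\nu$. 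This vanishing is exactly what silences the terms outside $\Phi^{\alpha}$ once I pass to a sum over \emph{all} $\nu\vdash n-1$ (recall that $\nu\in\Phi^{\alpha}$ is equivalent, by Frobenius reciprocity and the multiplicity-free branching rule, to $\alpha\in\nu$).

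The key extraction step is the identity
\[
(\psi_{R}^{\nu})^{\alpha\alpha}_{j_{\alpha}k_{\alpha}}[(b,n-1)]=\tr\,\psi_{R}^{\nu}\!\left(e^{\alpha}_{j_{\alpha}p}\,(b,n-1)\,e^{\alpha}_{p\,k_{\alpha}}\right),
\]
valid for any fixed auxiliary index $p\in\{1,\ldots,d_{\alpha}\}$. I would prove this by a one-line matrix computation: multiplying $\psi_{R}^{\nu}[(b,n-1)]$ on the left by $E_{(\alpha,p),(\alpha,j_{\alpha})}$ and on the right by $E_{(\alpha,k_{\alpha}),(\alpha,p)}$ selects its $\big((\alpha,j_{\alpha}),(\alpha,k_{\alpha})\big)$ entry and redeposits it as the single diagonal entry $E_{(\alpha,p),(\alpha,p)}$, whose trace is $1$. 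Inserting the \emph{common} index $p$ on both sides is precisely what preserves the off-diagonal ($j_{\alpha}\neq k_{\alpha}$) information that a naive trace would destroy.

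Next I would multiply by $d_{\nu}$ and sum over all $\nu\vdash n-1$, the terms with $\alpha\notin\nu$ dropping out by the remark above. Since $\psi_{R}^{\nu}$ is unitarily equivalent to $\psi^{\nu}$, the traces are the characters $\chi^{\nu}$, and $\sum_{\nu}d_{\nu}\chi^{\nu}$ is the character of the regular representation of $S(n-1)$; hence for every $X\in\mathbb{C}[S(n-1)]$ one has $\sum_{\nu}d_{\nu}\tr\psi^{\nu}(X)=(n-1)!\,[X]_{e}$, where $[X]_{e}$ denotes the coefficient of the identity when $X$ is expanded in the permutation basis. Applying this to $X=e^{\alpha}_{j_{\alpha}p}(b,n-1)e^{\alpha}_{p\,k_{\alpha}}$ and expanding both matrix units over $S(n-2)$, a copy of the identity can appear only if $(b,n-1)\in S(n-2)$, i.e. only for $b=n-1$; for $b<n-1$ the coefficient is $0$, which gives the vanishing half of the claim.

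For $b=n-1$ the transposition is the identity, the two $S(n-2)$-sums collapse to $\rho=\pi^{-1}$, and a single application of the orthogonality relation for $\varphi^{\alpha}$ yields $[X]_{e}=\tfrac{d_{\alpha}}{(n-2)!}\delta_{j_{\alpha}k_{\alpha}}$; multiplying by $(n-1)!$ reproduces the asserted value $(n-1)d_{\alpha}\,\delta_{j_{\alpha}k_{\alpha}}$ (consistent with the dimension count $\sum_{\nu\in\Phi^{\alpha}}d_{\nu}=\dim\Phi^{\alpha}=(n-1)d_{\alpha}$). I expect the only genuine difficulty to be bookkeeping—pinning down the index conventions in the matrix units and verifying that the auxiliary index $p$ faithfully transports the full block entry through the trace—while the structural content (that the $d_{\nu}$-weighting \emph{is} the regular representation, and that the $\mathbb{C}[S(n-2)]$ matrix units simultaneously extract the $\alpha\alpha$-block and enforce $\nu\ni\alpha$) is the conceptual core of the argument.
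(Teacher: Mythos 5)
Your proof is correct, but it follows a genuinely different route from the paper's. The paper keeps everything inside its PRIR machinery: it multiplies the unknown sum $X(b)=\bigl(\sum_{\nu\in\Phi^{\alpha}}d_{\nu}(\psi_{R}^{\nu})^{\alpha\alpha}_{j_{\alpha}k_{\alpha}}[(b,n-1)]\bigr)$ against the blocks $(\psi_{R}^{\mu})^{\alpha\alpha}[(b,n-1)]$, sums over $b$ using the bilinear summation rule of Proposition~\ref{AP13}, then sums over $\mu$ weighted by $d_{\mu}$ to obtain $\sum_{b=1}^{n-2}X^{2}(b)=0$, and concludes $X(b)=0$ from hermiticity of $X(b)$ (each $\psi_{R}^{\nu}[(b,n-1)]$ is a unitary involution) and positive semi-definiteness of $X^{2}(b)$. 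You instead convert the $d_{\nu}$-weighted sum into the regular character of $S(n-1)$ evaluated on $e^{\alpha}_{j_{\alpha}p}\,(b,n-1)\,e^{\alpha}_{p k_{\alpha}}\in\mathbb{C}[S(n-1)]$, using that in PRIR form $\psi_{R}^{\nu}(e^{\alpha}_{st})$ is an elementary matrix supported on the $\alpha$-block (which requires, and you correctly invoke, the multiplicity-free branching $S(n-1)\downarrow S(n-2)$ so that the sum may be extended to all $\nu\vdash n-1$), and then read off the coefficient of the identity: it vanishes for $b<n-1$ because $(b,n-1)\notin S(n-2)$, and Schur orthogonality gives $\frac{d_{\alpha}}{(n-2)!}\delta_{j_{\alpha}k_{\alpha}}$ for $b=n-1$; all your constants check out. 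Your approach buys several things: it avoids Proposition~\ref{AP13} and the indirect positivity argument entirely, needing only Schur orthogonality and the standard identity $\sum_{\nu}d_{\nu}\tr\psi^{\nu}(X)=(n-1)!\,[X]_{e}$; it does not need hermiticity of the blocks; and the same mechanism (identity coefficient nonzero iff the inserted permutation lies in $S(n-2)$) immediately yields Corollary~\ref{PCor4} and Theorem~\ref{BL16} as well --- inserting $(a,n-1)\sigma(b,n-1)$ in place of $(b,n-1)$ reproduces the condition $\sigma(b)=a$ directly. What the paper's route buys in exchange is that it stays within the orthogonality relations for PRIRs that the paper develops anyway, reinforcing the toolkit it applies later, whereas your argument works in the group algebra and is closer to a classical Wedderburn-style computation.
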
 

\begin{proof}
	Let 
	\be
	\label{AAA}
	\sum_{\nu \in \Phi^{\alpha }}d_{\nu} (\psi _{R}^{\nu })_{j_{\alpha }k_{\alpha
	}}^{\alpha \alpha }[(b,n-1)]=x_{j_{\alpha }k_{\alpha }}(b),\qquad
	b=1,\ldots,n-1.
	\ee
	We define a hermitian matrix%
	\be
	X(b)\equiv \left( x_{j_{\alpha }k_{\alpha }}(b)\right) \in \mathbb{M}(\dim \varphi^{\alpha} ,\mathbb{C}),
	\ee
	such that 
	\be
	X(n-1)=(n-1)d_{\alpha}\mathbf{1}_{d_{\alpha}}, 
	\ee
	where $\mathbf{1}_{d_{\alpha}}$ denotes identity operator of dimension $d_{\alpha}$.
	Now from equation~\eqref{AAA} we get 
	\be
	\label{LHS}
	\sum_{\nu \in \Phi^{\alpha }}\sum_{k_{\alpha }}d_{\nu}(\psi _{R}^{\nu
	})_{j_{\alpha }k_{\alpha }}^{\alpha \alpha }[(b,n-1)](\psi _{R}^{\mu
	})_{k_{\alpha }l_{\alpha }}^{\alpha \alpha }[(b,n-1)]=\sum_{k_{\alpha
	}}x_{j_{\alpha }k_{\alpha }}(b)(\psi _{R}^{\mu })_{k_{\alpha }l_{\alpha
	}}^{\alpha \alpha }[(b,n-1)].
	\ee
	Making the summation over $b=1,\ldots,n-1$ and applying Proposition~\ref{AP13} to LHS of~\eqref{LHS}, we get 
	\be
	\label{LHS2}
	(n-1)d_{\alpha}\delta _{j_{\alpha }l_{\alpha }}=\sum_{k_{\alpha
	}}\sum_{b=1}^{n-1}x_{j_{\alpha }k_{\alpha }}(b)(\psi _{R}^{\mu })_{k_{\alpha
		}l_{\alpha }}^{\alpha \alpha }[(b,n-1)]. 
	\ee
	Multiplying both sides of~\eqref{LHS2} by $d_{\mu}$ and making the summation over $\mu \in
	\Phi^{\alpha }$ we have
	\be
	\begin{split}
		(&n-1)^{2}d_{\alpha}^{2}\delta _{j_{\alpha }l_{\alpha }}=\sum_{k_{\alpha
		}}\sum_{b=1}^{n-1}x_{j_{\alpha }k_{\alpha }}(b)x_{k_{\alpha l\alpha
		}}(b)=\sum_{b=1}^{n-1}x_{j_{\alpha }l_{\alpha }}^{2}(b)\\ 
		&=\sum_{b=1}^{n-2}x_{j_{\alpha }l_{\alpha }}^{2}(b)+x_{j_{\alpha }l_{\alpha
		}}^{2}(n-1)=\sum_{b=1}^{n-2}x_{j_{\alpha }l_{\alpha
		}}^{2}(b)+(n-1)^{2}d_{\alpha}^{2}\delta _{j_{\alpha }l_{\alpha }}, 
	\end{split}
	\ee
	which means that 
	\be
	\sum_{b=1}^{n-2}x_{j_{\alpha }l_{\alpha
	}}^{2}(b)=\sum_{b=1}^{n-2}\left( X^{2}(b)\right) _{j_{\alpha }l_{\alpha }}=0. 
	\ee
	From the above it follows that 
	\be
	\forall b=1,\ldots,n-2\,\,\qquad X^{2}(b)=0\Leftrightarrow X(b)=0, 
	\ee
	since the the matrices $X(b)$ are hermitian, so the matrices $X^{2}(b)$
	are positive semi-definite.
\end{proof}

From Lemma~\ref{PLemm} one can easily deduce the following

\begin{corollary}
	\label{PCor4}
	Let 
	\be
	\sigma =\gamma (b,n-1)\in S(n-1):\gamma \in S(n-2),\quad b=1,\ldots,n-1, 
	\ee
	then 
	\be
	\sum_{\nu \in \Phi^{\alpha }}d_{\nu}(\psi _{R}^{\nu })_{j_{\alpha }k_{\alpha
	}}^{\alpha \alpha }[\gamma (b,n-1)]=(n-1)d_{\alpha}\delta _{b,n-1}\varphi
	_{j_{\alpha }k_{\alpha }}^{\alpha }(\gamma ). 
	\ee
	In particular if $\sigma \in S(n-1)$, and $\sigma \notin S(n-2)\subset S(n-1)$, then
	\be
	\sum_{\nu \in \Phi^{\alpha }}d_{\nu}(\psi _{R}^{\nu })_{j_{\alpha }k_{\alpha
	}}^{\alpha \alpha }(\sigma )=0.
	\ee
	If $\sigma\in S(n-2)\subset S(n-1)$, then 
	\be
	\sum_{\nu \in \Phi^{\alpha }}d_{\nu}(\psi _{R}^{\nu })_{j_{\alpha }k_{\alpha
	}}^{\alpha \alpha }(\sigma)=(n-1)d_{\alpha}\varphi _{j_{\alpha }k_{\alpha
	}}^{\alpha }(\sigma ). 
	\ee
\end{corollary}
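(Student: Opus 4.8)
The plan is to reduce Corollary~\ref{PCor4} to Lemma~\ref{PLemm} by exploiting the defining block structure of a PRIR together with the coset decomposition of $S(n-1)$. The key starting observation is that every $\sigma\in S(n-1)$ can be written as $\sigma=\gamma(b,n-1)$ with $\gamma\in S(n-2)$ and $b\in\{1,\ldots,n-1\}$, where the transpositions $(b,n-1)$ (with the convention $(n-1,n-1)\equiv e$) serve as the standard representatives of the cosets of $S(n-2)$ in $S(n-1)$. Thus it suffices to treat an arbitrary element in this factored form.

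First I would use that each $\psi_R^{\nu}$ is a homomorphism to factor
\be
\psi_R^{\nu}(\sigma)=\psi_R^{\nu}(\gamma)\,\psi_R^{\nu}[(b,n-1)].
\ee
Since $\gamma\in S(n-2)$ and $\psi_R^{\nu}$ is a PRIR of $S(n-1)$, the matrix $\psi_R^{\nu}(\gamma)$ is block-diagonal in the PRIR indexation, its $\alpha\alpha$ block being exactly $\varphi^{\alpha}(\gamma)$; here $\alpha\in\nu$ holds automatically because $\nu\in\Phi^{\alpha}$ means $\nu$ is obtained from $\alpha$ by adding one box. Extracting the $\alpha\alpha$ block of the product and using that the first factor carries no off-diagonal blocks, the sum over intermediate blocks collapses to a single matrix product over the $\alpha$-index,
\be
(\psi_R^{\nu})_{j_{\alpha}k_{\alpha}}^{\alpha\alpha}(\sigma)=\sum_{l_{\alpha}}\varphi_{j_{\alpha}l_{\alpha}}^{\alpha}(\gamma)\,(\psi_R^{\nu})_{l_{\alpha}k_{\alpha}}^{\alpha\alpha}[(b,n-1)].
\ee

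Next I would multiply both sides by $d_{\nu}$ and sum over $\nu\in\Phi^{\alpha}$. The factor $\varphi^{\alpha}(\gamma)$ is independent of $\nu$, so it pulls out of the sum and Lemma~\ref{PLemm} applies verbatim to the remaining factor, giving $\sum_{\nu}d_{\nu}(\psi_R^{\nu})_{l_{\alpha}k_{\alpha}}^{\alpha\alpha}[(b,n-1)]=(n-1)d_{\alpha}\delta_{b,n-1}\delta_{l_{\alpha}k_{\alpha}}$. The Kronecker delta $\delta_{l_{\alpha}k_{\alpha}}$ then collapses the inner sum over $l_{\alpha}$ and yields $(n-1)d_{\alpha}\delta_{b,n-1}\varphi_{j_{\alpha}k_{\alpha}}^{\alpha}(\gamma)$, which is precisely the claimed identity. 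The two special cases are read off the coset representative: $\sigma\in S(n-2)$ forces $b=n-1$, so $\delta_{b,n-1}=1$ and $\gamma=\sigma$, recovering $(n-1)d_{\alpha}\varphi_{j_{\alpha}k_{\alpha}}^{\alpha}(\sigma)$, whereas $\sigma\in S(n-1)\setminus S(n-2)$ forces $b\neq n-1$, so $\delta_{b,n-1}=0$ and the sum vanishes.

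I do not expect a genuine obstacle here; the one step that requires care is the block-diagonality argument, where one must invoke the PRIR property to discard the off-diagonal blocks of $\psi_R^{\nu}(\gamma)$ \emph{before} taking the $\alpha\alpha$ block of the product. Once that reduction is in place, the statement is a one-line consequence of Lemma~\ref{PLemm}, which explains the remark that it can be ``easily deduced.''
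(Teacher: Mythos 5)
Your proof is correct and follows exactly the route the paper intends: the paper gives no explicit proof, stating only that the corollary is ``easily deduced'' from Lemma~\ref{PLemm}, and your argument --- coset decomposition $\sigma=\gamma(b,n-1)$, block-diagonality of $\psi_R^{\nu}(\gamma)$ on the $S(n-2)$ part, then Lemma~\ref{PLemm} applied to the $\nu$-independent remainder --- is precisely that deduction. Your care in discarding the off-diagonal blocks before extracting the $\alpha\alpha$ block of the product is exactly the right point to flag.
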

Now we are in the position to prove the main result of this section, namely we have the following
\begin{theorem}
	\label{BL16}
	Let $\psi _{R}^{\nu }$ are PRIR representations of the group $%
	S(n-1)$ included in $\Phi ^{\alpha }=\ind_{S(n-2)}^{S(n-1)}\left( \varphi ^{\alpha }\right) $ i.e.
	\be
	\Phi^{\alpha }=\ind_{S(n-2)}^{S(n-1)}(\varphi^{\alpha} )\simeq \bigoplus _{\nu \in \Phi^{\alpha }}\psi _{R}^{\nu }, 
	\ee
	then $\forall \sigma \in S(n-1)$ we have the following summation rule 
	\be
	\begin{split}
		\sum_{\nu \in \Phi^{\alpha }}d_{\nu} (\psi
		_{R}^{\nu })_{j_{\alpha }k_{\alpha }}^{\alpha \alpha }[(a,n-1)\sigma
		(b,n-1)]=(n-1)d_{\alpha}\delta _{a\sigma (b)}\varphi _{j_{\alpha }k_{\alpha
		}}^{\alpha }[(a,n-1)\sigma (b,n-1)].
	\end{split}
	\ee
	In particular we have 
	\be
	\sum_{\nu \in \Phi (\alpha )}d_{\nu}(\psi _{R}^{\nu })_{j_{\alpha }k_{\alpha
	}}^{\alpha \alpha }[(a,n-1)(b,n-1)]=(n-1)d_{\alpha}\delta _{ab}\delta
	_{j_{\alpha }k_{\alpha }}. 
	\ee
\end{theorem}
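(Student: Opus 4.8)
The plan is to reduce the statement entirely to Corollary~\ref{PCor4}. That corollary already computes the weighted block sum $\sum_{\nu\in\Phi^\alpha} d_\nu (\psi_R^\nu)_{j_\alpha k_\alpha}^{\alpha\alpha}(\tau)$ for \emph{every} $\tau\in S(n-1)$: its two ``in particular'' clauses partition $S(n-1)$ into the elements lying in $S(n-2)$, where the sum equals $(n-1)d_\alpha\,\varphi^\alpha_{j_\alpha k_\alpha}(\tau)$, and the elements outside $S(n-2)$, where it vanishes. Since for any $a,b\in\{1,\ldots,n-1\}$ and $\sigma\in S(n-1)$ the product $\tau:=(a,n-1)\sigma(b,n-1)$ is again an element of $S(n-1)$, the whole theorem will follow at once provided I can decide, in terms of $a$, $b$ and $\sigma$, whether $\tau$ belongs to the subgroup $S(n-2)$, i.e.\ whether $\tau$ fixes the point $n-1$.

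First I would compute the image of $n-1$ under $\tau$. Reading the product as composition of functions gives $\tau(n-1)=(a,n-1)\,\sigma\bigl((b,n-1)(n-1)\bigr)=(a,n-1)\bigl(\sigma(b)\bigr)$, because $(b,n-1)$ sends $n-1$ to $b$. A short case analysis on the value of $\sigma(b)$ then shows $\tau(n-1)=n-1$ exactly when $\sigma(b)=a$: if $\sigma(b)=a$ then $(a,n-1)$ sends $a$ to $n-1$; if $\sigma(b)\notin\{a,n-1\}$ then $(a,n-1)$ fixes $\sigma(b)$, so $\tau(n-1)=\sigma(b)\neq n-1$; and the remaining possibility $\sigma(b)=n-1$ gives $\tau(n-1)=a$, which equals $n-1$ only if $a=n-1$, a situation already subsumed under $\sigma(b)=a$. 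I would also record the degenerate cases $a=n-1$ or $b=n-1$, where the corresponding transposition collapses to the identity, and verify that the equivalence $\tau\in S(n-2)\Leftrightarrow a=\sigma(b)$ survives unchanged.

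With this combinatorial fact established, applying Corollary~\ref{PCor4} to $\tau$ closes the general identity: when $a\neq\sigma(b)$ the left-hand side vanishes by the corollary while the right-hand side vanishes through $\delta_{a\sigma(b)}=0$; and when $a=\sigma(b)$ we have $\tau\in S(n-2)$, so the corollary returns $(n-1)d_\alpha\,\varphi^\alpha_{j_\alpha k_\alpha}(\tau)$, which is precisely the right-hand side since $\tau=(a,n-1)\sigma(b,n-1)$. The ``in particular'' statement is then just the specialisation $\sigma=\id$: there $\sigma(b)=b$, so $\delta_{a\sigma(b)}=\delta_{ab}$, and on the diagonal $a=b$ one has $(a,n-1)(b,n-1)=\id$, whence $\varphi^\alpha_{j_\alpha k_\alpha}(\id)=\delta_{j_\alpha k_\alpha}$. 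I expect the only genuine effort to be the careful bookkeeping of the permutation composition and its edge cases; once the equivalence $\tau\in S(n-2)\Leftrightarrow a=\sigma(b)$ is pinned down there is no further analytic content, since the heavy lifting has already been carried out in Corollary~\ref{PCor4} (and ultimately in Lemma~\ref{PLemm} and Proposition~\ref{AP13}).
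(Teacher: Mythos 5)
Your proposal is correct and follows essentially the same route as the paper's own proof: reduce everything to Corollary~\ref{PCor4} and observe that $(a,n-1)\sigma(b,n-1)\in S(n-2)$ if and only if $\sigma(b)=a$. The paper dismisses this membership criterion as ``easy to check,'' whereas you spell out the case analysis on $\tau(n-1)$ explicitly (including the degenerate cases $a=n-1$, $\sigma(b)=n-1$), which is a faithful filling-in of the same argument rather than a different one.
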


\begin{proof}
	From  Corollary~\ref{PCor4} it follows that, in order to calculate 
	\be
	\label{mianequality}
	\forall \sigma \in S(n-1)\quad \sum_{\nu \in \Phi^{\alpha }}d_{\nu}(\psi
	_{R}^{\nu })_{j_{\alpha }k_{\alpha }}^{\alpha \alpha }[(a,n-1)\sigma
	(b,n-1)], 
	\ee
	we need to establish when the permutation $(a,n-1)\sigma (b,n-1)$ belongs to
	the subgroup $S(n-2)\subset S(n-1),$ otherwise the sum is equal to zero. It
	is easy to check that 
	\be
	(a,n-1)\sigma (b,n-1)\in S(n-2)\subset S(n-1) 
	\ee
	if and only if 
	\be
	\sigma :b\mapsto a 
	\ee
	and from this it follows the statement of the theorem.
\end{proof}

\begin{proposition}
	\label{propoopo}
	Suppose that $\psi _{R}^{\nu }$ is a PRIR representation of the
	group $S(n)$, then we have 
	\be
	\sum_{a=1}^{n}(\psi _{R}^{\nu })_{j_{\alpha}k_{\alpha}}^{\alpha _{\nu }\alpha _{\nu }}[(a,n)]=\lambda _{\nu}(\alpha)\delta
	_{j_{\alpha}k_{\alpha}},
	\ee
	where $\alpha\in \nu$,  $j_{\alpha}$ are PRIR indices of
	the irrep $\psi _{R}^{\nu }$ and $\lambda _{\nu}(\alpha)$ is given
	Lemma~\ref{semAn}, Theorem~\ref{thm18} or equivalently in Corollary~\ref{cor_lamda}.
\end{proposition}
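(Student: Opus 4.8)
The plan is to recognise the sum as the value of $\psi_R^\nu$ on the element $C=\sum_{a=1}^{n}(a,n)\in\mathbb{C}[S(n)]$ and to use that a PRIR restricts to $S(n-1)$ as the completely reduced, multiplicity-free representation $\bigoplus_{\alpha\in\nu}\varphi^\alpha$. First I would split $C=\mathrm{id}+J_n$ with the Jucys--Murphy element $J_n=\sum_{a=1}^{n-1}(a,n)$. The decisive remark is that $C$ is invariant under conjugation by $S(n-1)$: for $\pi\in S(n-1)$ we have $\pi(n)=n$, so $\pi(a,n)\pi^{-1}=(\pi(a),n)$, and as $a$ ranges over $1,\dots,n-1$ so does $\pi(a)$, whence $\pi C\pi^{-1}=C$.

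It follows that $\psi_R^\nu(C)\in\End_{S(n-1)}(S(\psi^\nu))$. Since the branching $S(n)\downarrow S(n-1)$ is multiplicity-free, in the block indexation of~\eqref{eq27} one has $\psi_R^\nu|_{S(n-1)}=\bigoplus_{\alpha\in\nu}\varphi^\alpha$ with pairwise inequivalent summands, so Schur's lemma forces $\psi_R^\nu(C)$ to be block diagonal with a single scalar on each block: $(\psi_R^\nu)_{j_\alpha k_\beta}^{\alpha\beta}[C]=\lambda_\nu(\alpha)\,\delta^{\alpha\beta}\delta_{j_\alpha k_\beta}$. Restricting to $\beta=\alpha$ is exactly the asserted identity, and the vanishing of the off-diagonal blocks is obtained as a bonus.

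To identify the scalar I would compare the central class sums $T_m=\sum_{1\le i<j\le m}(i,j)$, each of which acts on an irrep $\lambda\vdash m$ by $\binom{m}{2}\chi^\lambda(\tau)/d_\lambda$, the sum of contents of $Y^\lambda$. From $T_n=T_{n-1}+J_n$, and because $T_n$ and $T_{n-1}$ act on the $\alpha$-block of $S(\psi^\nu)$ by the $\nu$- and $\alpha$-scalars respectively, the element $J_n$ acts there by their difference, namely the content of the unique box of $\nu\setminus\alpha$. Adding the identity contribution then yields the value $\lambda_\nu(\alpha)$ recorded in Lemma~\ref{semAn} (equivalently Corollary~\ref{cor_lamda}).

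The representation-theoretic core—commutation with $S(n-1)$ together with multiplicity-freeness of the branching, and the resulting application of Schur's lemma—is routine. The one point that needs care is the additive constant: the summation range includes $a=n$, whose term is the identity rather than a transposition, so the scalar is the Jucys--Murphy content of $\nu\setminus\alpha$ shifted by one, and the main obstacle is to check that this shifted content agrees with the closed form quoted for $\lambda_\nu(\alpha)$ in the earlier sections.
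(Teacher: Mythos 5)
Your representation-theoretic skeleton is in substance the paper's own route, made explicit: the paper's one-line proof invokes Proposition~\ref{Prir1}, which is precisely the statement that the Jucys--Murphy element $J_n=\sum_{a=1}^{n-1}(a,n)$ acts on each diagonal PRIR block $\alpha$ by the scalar $\frac{n(n-1)}{2}\frac{\chi^{\nu}[(1,2)]}{d_{\nu}}-\frac{(n-1)(n-2)}{2}\frac{\chi^{\alpha}[(1,2)]}{d_{\alpha}}$, i.e.\ the content $c_{\nu/\alpha}$ of the box $\nu\setminus\alpha$; your derivation of this via $S(n-1)$-invariance, multiplicity-free branching, Schur's lemma, and the class sums $T_n=T_{n-1}+J_n$ is a correct blind re-proof of that proposition, and even adds the vanishing of the off-diagonal blocks.

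However, the point you yourself flagged as ``the one point that needs care'' is exactly where the verification fails, and you asserted agreement instead of checking it. The $a=n$ term contributes $+1$, so your computation yields the scalar $1+c_{\nu/\alpha}$, whereas $\lambda_{\nu}(\alpha)$ as defined in Lemma~\ref{semAn} and Theorem~\ref{thm18} equals $d+c_{\nu/\alpha}$ (after the shift $n\mapsto n+1$ appropriate for $\nu\vdash n$, which your content formulation handles automatically); these coincide only for $d=1$. Moreover no purely group-theoretic sum of the matrices $(\psi_R^{\nu})[(a,n)]$ can depend on $d$ at all, while every cited formula for $\lambda_{\nu}(\alpha)$ does, so the claimed identity cannot hold as literally printed. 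The resolution is that the statement carries an implicit weight $d^{\delta_{an}}$ on the identity term, in line with the factor $d^{\delta_{ab}}$ in Definition~\ref{def_Q}: in the only place the proposition is invoked --- the evaluation of $V'P_{\rho}V'$ following~\eqref{char} --- the identity term arises from $V'V[(a,n-1)]V'=d^{\delta_{a,n-1}}V'$ and therefore enters with coefficient $d$, not $1$. With the corrected reading $\sum_{a=1}^{n}d^{\delta_{an}}(\psi_R^{\nu})_{j_{\alpha}k_{\alpha}}^{\alpha\alpha}[(a,n)]=\lambda_{\nu}(\alpha)\delta_{j_{\alpha}k_{\alpha}}$ your argument goes through verbatim; as written, the final identification is false, and a complete proof had to either detect this additive discrepancy or correct the constant, rather than declare that the shifted content ``yields the value $\lambda_{\nu}(\alpha)$.''
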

Proof of Proposition~\ref{propoopo} follows from Proposition~\ref{Prir1} of Appendix~\ref{AppAA} and Lemma~\ref{semAn}.

\section{Application of PRIRs to the representation theory of the algebra $\mathcal{A}_n^{t_n}(d)$}
\label{simplification}
In the following subsections we derive simpler form of the matrices $Q(\alpha)$,  $Z(\alpha)$, and projectors $F_{\nu}(\alpha)$ which where defined in paper~\cite{Moz1} (or see  Section~\ref{Sec3} of this manuscript) by use of the concept of PRIRs introduced in Section~\ref{AppA}. Additionally as a second result we present explicit and relatively simple expression for the matrix elements of the permutation operators $V'[(a,n)]$ for $a=1,\ldots,n-1$ which is now more convenient for practical use.

\subsection{Simplification of the matrices $Q(\alpha)$ and $Z(\alpha)$ and matrix representation of $V'[(a,n)]$}
\label{sim_QZ}
Let us  consider the induced representation of $S(n-1)$ $\Phi^{\alpha
}=\ind_{S(n-2)}^{S(n-1)}(\varphi^{\alpha})$, where $\varphi ^{\alpha }$ is a given irrep of $S(n-2)$. It is known that the decomposition into irreps $\psi ^{\nu }$ of $S(n-1)$  
\be
\Phi^{\alpha }=\ind_{S(n-2)}^{S(n-1)}(\varphi^{\alpha})\simeq \bigoplus _{\nu \in \Phi^{\alpha }}\psi ^{\nu }
\ee
is simple reducible. It means that there exist an unitary matrix $Z(\alpha )$, which reduces the representation to an irreducible block diagonal form. In the  paper~\cite{Moz1} such a matrix $Z(\alpha )$ was constructed, for arbitrary form of irreps $\psi ^{\nu }$ of $S(n-1)$ and irrep $\varphi ^{\alpha }$
of $S(n-2)$ (see Thm.~\ref{thm18} b) in Section~\ref{Sec3}). First let us observe  that on the $RHS$ of the main equality~\eqref{mianequality} of  Theorem~\ref{BL16} 
we get the matrix elements of induced representation $\Phi^{\alpha }=\ind_{S(n-2)}^{S(n-1)}(\varphi^{\alpha})$, i.e.  we have 
\be
\Phi^{\alpha }=\left( (\Phi^{\alpha })_{j_{\alpha }k_{\alpha }}^{ab}\right) =\left( \delta
_{a\sigma (b)}\varphi _{j_{\alpha }k_{\alpha }}^{\alpha }[(a,n-1)\sigma
(b,n-1)]\right) ,
\ee
which is a standard matrix form of the induced representation $%
\ind_{S(n-2)}^{S(n-1)}(\varphi^{\alpha} ).$
Having this we are in the position to formulate the following:
\begin{theorem}
	\label{simple_Z}
	Let $\Phi^{\alpha }=\ind_{S(n-2)}^{S(n-1)}(\varphi^{\alpha})\simeq \bigoplus _{\nu \in
		\Phi^{\alpha }}\psi _{R}^{\nu }$, where $\psi _{R}^{\nu }$ are PRIR
	representations of the group $S(n-1)$, then the corresponding matrix\ $%
	Z_{R}(\alpha )=(z_{R}(\alpha )_{kj_{\nu }}^{a\nu })$ which reduces the
	induced representation $\Phi ^{\alpha }=\ind_{S(n-2)}^{S(n-1)}\varphi
	^{\alpha }$ into the direct sum $\bigoplus _{\nu \in \Phi^{\alpha }}\psi
	_{R}^{\nu }$ has the following form:
	\be
	\label{BZR}
	z_{R}(\alpha )_{k_{\alpha }j_{\xi _{\nu }}}^{a\xi _{\nu }}=\frac{1}{\sqrt{n-1%
	}}\frac{\sqrt{d_{\nu }}}{\sqrt{d_{\alpha }}}(\psi _{R}^{\nu })_{k_{\alpha
		}j_{\xi _{\nu }}}^{\alpha \xi _{\nu }}[(a,n-1)],
	\ee
	where $(\alpha ,k_{\alpha })$ and  $(\xi _{\nu },j_{\xi _{\nu }})$  are the PRIR indices in $\psi _{R}^{\nu }$, corresponding to reducible structure for
	the subgroup $S(n-2)$ (see eq.~\eqref{eq27}). The irrep $\varphi ^{\alpha }$ of $S(n-2)$ is
	included in every $\psi _{R}^{\nu }\in \Phi ^{\alpha }$. The matrix  $%
	Z_{R}(\alpha )=\left( z_{R}(\alpha )_{k_{\alpha }j_{\xi _{\nu }}}^{a\xi _{\nu }}\right) $
	is unitary and satisfies
	\be
	\sum_{ak_{\alpha }}\sum_{bl_{\alpha }}z_R^{\dagger}(\alpha )_{j_{\xi _{\nu
		}}k_{\alpha }}^{\xi _{\nu }a}\Phi ^{\alpha }(\sigma )_{k_{\alpha }l_{\alpha
	}}^{ab}z_R(\alpha )_{l_{\alpha }j_{\zeta _{\mu }}}^{b\zeta _{\mu }}=\delta
	^{v\mu }(\psi _{R}^{\nu })_{j_{\xi _{\nu }}j_{\zeta _{\mu }}}^{\xi _{\nu
		}\zeta _{\mu }}(\sigma ),\qquad \forall \sigma \in S(n-1)
	\ee
	and 
	\be
	\sum_{ak_{\alpha }}z_R^{\dagger}(\alpha )_{j_{\xi _{\nu }}k_{\alpha }}^{\xi _{\nu
		}a}z_R(\alpha )_{k_{\alpha }j_{\zeta _{\mu }}}^{a\zeta _{\mu }}=\delta ^{v\mu
	}\delta ^{\xi _{\nu }\zeta _{\mu }}\delta _{j_{\xi _{\nu }}j_{\zeta _{\mu
	}}},
	\ee
	as well 
	\be
	\sum_{v,\xi _{\nu },j_{\xi _{\nu }}}z_R(\alpha )_{k_{\alpha }j_{\xi _{\mu
	}}}^{a\xi _{\mu }}z_R^{\dagger}(\alpha )_{j_{\xi _{\nu }}l_{\alpha }}^{\xi _{\nu
		}b}=\delta ^{a,b}\delta _{k_{\alpha }l_{\alpha }}.
	\ee
\end{theorem}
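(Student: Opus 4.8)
The plan is to verify the three asserted identities in turn, handling the orthonormality of columns and the completeness relation together through a dimension count, and isolating the intertwining (reduction) relation as the substantive step. Throughout I would rely on two structural facts. First, since $\alpha \in \nu$ for every $\psi_R^\nu \in \Phi^\alpha$, the PRIR block structure forces the diagonal $(\alpha,\alpha)$ block of $\psi_R^\nu$ restricted to $S(n-2)$ to coincide with $\varphi^\alpha$, i.e. $(\psi_R^\nu)^{\alpha\alpha}_{k_\alpha l_\alpha}(\gamma)=\varphi^\alpha_{k_\alpha l_\alpha}(\gamma)$ for all $\gamma\in S(n-2)$, and more generally $(\psi_R^\nu)[\gamma]$ is block-diagonal for such $\gamma$. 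Second, each $\psi_R^\nu$ is a genuine unitary representation, so $(\psi_R^\nu)[\tau_1\sigma\tau_2]=(\psi_R^\nu)[\tau_1](\psi_R^\nu)[\sigma](\psi_R^\nu)[\tau_2]$ as a matrix product.

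First I would establish the orthonormality of columns. Substituting \eqref{BZR} and rewriting the Hermitian conjugate via unitarity together with $(a,n-1)^{-1}=(a,n-1)$, the left-hand side becomes $\tfrac{1}{n-1}\tfrac{\sqrt{d_\nu d_\mu}}{d_\alpha}\sum_{a,k_\alpha}(\psi_R^\nu)^{\xi_\nu\alpha}_{j_{\xi_\nu}k_\alpha}[(a,n-1)]\,(\psi_R^\mu)^{\alpha\zeta_\mu}_{k_\alpha j_{\zeta_\mu}}[(a,n-1)]$. This is exactly the bilinear sum of Proposition~\ref{AP13} with $n$ replaced by $n-1$ and the summed middle block equal to $\alpha$; it evaluates to $(n-1)\tfrac{d_\alpha}{d_\nu}\delta^{\nu\mu}\delta^{\xi_\nu\zeta_\mu}\delta_{j_{\xi_\nu}j_{\zeta_\mu}}$, and the prefactors cancel to give the stated orthonormality. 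Because the decomposition $\Phi^\alpha\simeq\bigoplus_{\nu\in\Phi^\alpha}\psi_R^\nu$ is simple reducible, $\sum_{\nu\in\Phi^\alpha}d_\nu=(n-1)d_\alpha=\dim\Phi^\alpha$, so $Z_R(\alpha)$ is a square matrix with orthonormal columns and is therefore unitary; the completeness relation $Z_RZ_R^\dagger=\mathbf{1}$ (the third displayed identity) follows at once.

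The main step is the intertwining relation. Inserting the explicit induced-representation matrix $\Phi^\alpha(\sigma)^{ab}_{k_\alpha l_\alpha}=\delta_{a\sigma(b)}\varphi^\alpha_{k_\alpha l_\alpha}[(a,n-1)\sigma(b,n-1)]$ together with two copies of \eqref{BZR}, the Kronecker delta lets me set $a=\sigma(b)$ and sum only over $b$. I would then recognize the middle factor $\varphi^\alpha[(\sigma(b),n-1)\sigma(b,n-1)]$ as the $(\alpha,\alpha)$ block of $\psi_R^\nu$ evaluated at this $S(n-2)$ element; since $(\psi_R^\nu)[\gamma_b]$ is block-diagonal, the sum over $k_\alpha$ is precisely the $(\xi_\nu,\alpha)$ block of the matrix product $(\psi_R^\nu)[(\sigma(b),n-1)](\psi_R^\nu)[\gamma_b]$, and using $(\sigma(b),n-1)(\sigma(b),n-1)\sigma(b,n-1)=\sigma(b,n-1)$ this collapses to $(\psi_R^\nu)^{\xi_\nu\alpha}_{j_{\xi_\nu}l_\alpha}[\sigma(b,n-1)]$. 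Expanding this once more as $(\psi_R^\nu)[\sigma](\psi_R^\nu)[(b,n-1)]$ and summing the residual dependence on $b$ and $l_\alpha$ against the last factor $(\psi_R^\mu)^{\alpha\zeta_\mu}_{l_\alpha j_{\zeta_\mu}}[(b,n-1)]$ produces another instance of Proposition~\ref{AP13}, equal to $(n-1)\tfrac{d_\alpha}{d_\nu}\delta^{\nu\mu}\delta^{\beta\zeta_\mu}\delta_{m_\beta j_{\zeta_\mu}}$; after cancellation of the prefactors one is left with $\delta^{\nu\mu}(\psi_R^\nu)^{\xi_\nu\zeta_\mu}_{j_{\xi_\nu}j_{\zeta_\mu}}(\sigma)$, as required.

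The principal obstacle I anticipate is the bookkeeping of block versus full PRIR indices and the correct relabeling when invoking Proposition~\ref{AP13}, in particular the shift $n\mapsto n-1$ and the fact that the contracted index sits in the middle block $\alpha$. The conceptual heart, however, is the twofold use of the representation property of $\psi_R^\nu$: once to fold $\varphi^\alpha$ together with the transposition factor into a single PRIR matrix element (legitimate because the $S(n-2)$-element makes $\psi_R^\nu$ block-diagonal there, pinning the intermediate block to $\alpha$), and once to split $\sigma(b,n-1)$ so that the orthogonality relation of Proposition~\ref{AP13} applies. A minor point to justify is that $\varphi^\alpha$ may be identified with the $\nu$-block rather than the $\mu$-block, which is admissible precisely because $\varphi^\alpha$ is common to all $\psi_R^\nu\in\Phi^\alpha$.
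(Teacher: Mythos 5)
Your proof is correct, but it takes a genuinely different route from the paper's. The paper's own proof is a one-liner by design: it starts from the general formula \eqref{zzz} for $z(\alpha)_{kj_\nu}^{a\nu}$ in Theorem~\ref{thm18}~b) (including the normalization constant \eqref{NN}), and applies the PRIR summation rules --- equation \eqref{eqAP13} of Proposition~\ref{AP13} together with Theorem~\ref{BL16}, which expresses the induced-representation matrix elements $\delta_{a\sigma(b)}\varphi^{\alpha}_{j_\alpha k_\alpha}[(a,n-1)\sigma(b,n-1)]$ as a weighted sum of diagonal PRIR blocks --- to collapse the double sum over $S(n-1)$ and the constant $N^{\nu}_{j'_\nu}$; the unitarity and reduction properties are then inherited from Theorem~\ref{thm18}~b), so \eqref{BZR} is exhibited as the PRIR specialization of the previously constructed eigenvector matrix of $Q(\alpha)$. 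You instead verify from scratch that the explicit matrix \eqref{BZR} has orthonormal columns, is square by the multiplicity-free dimension count $\sum_{\nu\in\Phi^\alpha} d_\nu=(n-1)d_\alpha=\dim\Phi^\alpha$, and intertwines $\Phi^\alpha$ with $\bigoplus_\nu\psi_R^\nu$, using only Proposition~\ref{AP13} instantiated at level $n-1$ (for $S(n-1)\supset S(n-2)$ with transpositions $(a,n-1)$) plus the hermiticity of $\psi_R^\nu[(a,n-1)]$ and the PRIR block structure; Theorem~\ref{BL16} is never invoked. Your key manipulations check out: $(\sigma(b),n-1)\sigma(b,n-1)$ fixes $n-1$ and hence lies in $S(n-2)$, where $\psi_R^\nu$ is block-diagonal with $(\alpha,\alpha)$ block equal to $\varphi^\alpha$ in the same matrix realization (this is precisely the definition of PRIR, so your ``minor point'' is indeed harmless), which pins the intermediate block in the matrix product to $\alpha$ and lets the product collapse to $(\psi_R^\nu)^{\xi_\nu\alpha}_{j_{\xi_\nu}l_\alpha}[\sigma(b,n-1)]$; the final contraction is again \eqref{eqAP13}, with $\delta^{\nu\mu}$ forcing $d_\nu=d_\mu$ so the prefactors cancel exactly. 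What each approach buys: yours is self-contained --- it proves the theorem without relying on the complicated construction \eqref{zzz} from \cite{Moz1} with its implicit parameters $(j'_\nu,q',r')$ and normalization, and it doubles as an independent consistency check of the PRIR formalism; the paper's route is far shorter given the Section~\ref{AppA} machinery and, in addition, identifies the simplified matrix with the specific $Z(\alpha)$ constructed earlier, preserving the direct link to the spectral statements about $Q(\alpha)$ in Theorem~\ref{thm18}.
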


\begin{proof}
	An application of equation~\eqref{eqAP13} from Prop.~\ref{AP13} and Thm.~\ref{BL16} for PRIRs to the equation for the matrix $Z(\alpha)$ given in Thm.~\ref{thm18} b) leads directly to expressions from the statements of the theorem.
\end{proof}

One can see, that comparing expression for $Z_{R}(\alpha )$ given through equation~\eqref{BZR} with the general formula for $Z(\alpha )=(z(\alpha )_{kj_{\nu }}^{a\nu })$ in Theorem~\ref{thm18} we have substantial simplification which so important in the practical applications of our tools (see Section~\ref{Appl_PBT}).
The main advantage over the previous expression is that there is no sum over all  permutations, so we remove complexity of order $(n-1)!$. This will allow to produce expressions for matrix elements of algebra of operators  that are tractable (see Prop.~\ref{BP20}). Moreover results contained in Theorem~\ref{simple_Z} solve also the problem of the eigenvectors of the matrix $Q(\alpha)$. Namely they are given by the columns of the matrix $Z_{R}(\alpha
)=\left( z_{R}(\alpha )_{k_{\alpha }j_{\xi _{\nu }}}^{a\xi _{\nu }}\right) $, which are now relatively simple.

It is well known  that the columns of any unitary matrix form a set of
orthonormal vectors. Using this fact as a corollary from the properties of the
matrix $Z_{R}(\alpha )$ given in Thm.~\ref{simple_Z} we get the following:

\begin{corollary}
	\label{BC18}
	The set of $(n-1)d_{\alpha }$ vectors 
	\be
	T^{\nu }(\xi _{\nu },j_{\xi _{\nu }})=\left( T^{\nu }(\xi _{\nu },j_{\xi _{\nu
	}})_{k_{\alpha }}^{a}\right) =\left( z_{R}(\alpha )_{k_{\alpha }j_{\xi _{\nu }}}^{a\xi
		_{\nu }}\right) \in \mathbb{C}^{(n-1)d_{\alpha }}
	\ee
	forms an orthonormal basis of the space $\mathbb{C}^{(n-1)d_{\alpha }}.$
\end{corollary}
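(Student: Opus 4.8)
The plan is to recognize the vectors $T^{\nu}(\xi_{\nu},j_{\xi_{\nu}})$ as precisely the columns of the unitary matrix $Z_{R}(\alpha)$ constructed in Theorem~\ref{simple_Z}, so that the corollary becomes a direct reading of the orthonormality relations already established there, supplemented by a cardinality count. Indeed, the row index of $Z_{R}(\alpha)$ is the pair $(a,k_{\alpha})$ with $a=1,\ldots,n-1$ and $k_{\alpha}=1,\ldots,d_{\alpha}$, which ranges over the $(n-1)d_{\alpha}$ coordinates of $\mathbb{C}^{(n-1)d_{\alpha}}$, while the column index is the pair $(\xi_{\nu},j_{\xi_{\nu}})$. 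Fixing a column index thus produces exactly one of the vectors $T^{\nu}(\xi_{\nu},j_{\xi_{\nu}})$, and the corollary asserts that this family of columns is orthonormal and complete.

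First I would verify orthonormality. Writing out the standard Hermitian inner product on $\mathbb{C}^{(n-1)d_{\alpha}}$ and using that the dagger both transposes and conjugates, so that $z_{R}^{\dagger}(\alpha)_{j_{\xi_{\nu}}k_{\alpha}}^{\xi_{\nu}a}=\overline{z_{R}(\alpha)_{k_{\alpha}j_{\xi_{\nu}}}^{a\xi_{\nu}}}$, one obtains
\be
\langle T^{\nu}(\xi_{\nu},j_{\xi_{\nu}}),T^{\mu}(\zeta_{\mu},j_{\zeta_{\mu}})\rangle=\sum_{a,k_{\alpha}}z_{R}^{\dagger}(\alpha)_{j_{\xi_{\nu}}k_{\alpha}}^{\xi_{\nu}a}z_{R}(\alpha)_{k_{\alpha}j_{\zeta_{\mu}}}^{a\zeta_{\mu}}=\delta^{\nu\mu}\delta^{\xi_{\nu}\zeta_{\mu}}\delta_{j_{\xi_{\nu}}j_{\zeta_{\mu}}},
\ee
where the last equality is exactly the second displayed identity of Theorem~\ref{simple_Z}. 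Hence the vectors $T^{\nu}(\xi_{\nu},j_{\xi_{\nu}})$ are pairwise orthogonal and normalized, and in particular linearly independent.

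Finally I would argue that they span the whole space and hence form a basis. Since the induced representation $\Phi^{\alpha}=\ind_{S(n-2)}^{S(n-1)}(\varphi^{\alpha})$ has dimension $[S(n-1):S(n-2)]\,d_{\alpha}=(n-1)d_{\alpha}$, the column labels $(\xi_{\nu},j_{\xi_{\nu}})$ run over exactly $(n-1)d_{\alpha}$ values; an orthonormal set of $(n-1)d_{\alpha}$ vectors in $\mathbb{C}^{(n-1)d_{\alpha}}$ is automatically a basis. Equivalently, the completeness relation given as the third displayed identity of Theorem~\ref{simple_Z} states $Z_{R}(\alpha)Z_{R}^{\dagger}(\alpha)=\mathbf{1}$, which is precisely the assertion that these columns span. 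The only point requiring care is the bookkeeping that the number of column indices matches the ambient dimension $(n-1)d_{\alpha}$, which reduces to the standard dimension formula for an induced representation; this is the sole nontrivial ingredient, and given Theorem~\ref{simple_Z} it is immediate.
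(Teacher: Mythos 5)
Your proposal is correct and matches the paper's argument: the paper derives Corollary~\ref{BC18} precisely by noting that the vectors $T^{\nu}(\xi_{\nu},j_{\xi_{\nu}})$ are the columns of the unitary matrix $Z_{R}(\alpha)$ from Theorem~\ref{simple_Z}, whose columns automatically form an orthonormal set. Your additional checks (the explicit inner-product computation via the second identity of Theorem~\ref{simple_Z}, the completeness relation, and the count $\dim\Phi^{\alpha}=(n-1)d_{\alpha}$ from the induced-representation dimension formula) merely spell out what the paper leaves implicit.
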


Next from the properties of the matrix  $Z_{R}(\alpha )$ we get the following corollary which is direct consequence of and Corollary~\ref{BC18}:

\begin{corollary}
	\label{simple1}
	The matrix  $Z_{R}(\alpha )$ diagonalises the matrix $Q(\alpha )$:
	\be
	\sum_{ak_{\alpha }}\sum_{bl_{\alpha }}z_R^{\dagger}(\alpha )_{j_{\xi _{\nu
		}}k_{\alpha }}^{\xi _{\nu }a}Q(\alpha )_{k_{\alpha }l_{\alpha
	}}^{ab}z_R(\alpha )_{l_{\alpha }j_{\zeta _{\mu }}}^{b\zeta _{\mu }}=\delta
	^{v\mu }\delta ^{\xi _{\nu }\zeta _{\mu }}\delta _{j_{\xi _{\nu }}j_{\zeta
			_{\mu }}}\lambda _{\nu}(\alpha),
	\ee
	which means that the vectors $T^{\nu }(\xi _{\nu },j_{\xi _{\nu }})$ from Corollary~\ref{BC18} are
	eigenvectors of the matrix $Q(\alpha )$. The numbers $\lambda _{\nu}(\alpha)$ are given
	Lemma~\ref{semAn}, Theorem~\ref{thm18} or equivalently in Corollary~\ref{cor_lamda}.
\end{corollary}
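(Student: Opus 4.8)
The plan is to show that the columns of $Z_R(\alpha)$, i.e.\ the orthonormal vectors $T^\nu(\xi_\nu,j_{\xi_\nu})$ of Corollary~\ref{BC18}, are eigenvectors of $Q(\alpha)$ with eigenvalue $\lambda_\nu(\alpha)$; the asserted identity then follows by sandwiching and invoking the orthonormality relations already recorded in Theorem~\ref{simple_Z}. I would organise this around two facts established earlier: first, that $Z_R(\alpha)$ reduces the induced representation, $\sum_{ak_\alpha,bl_\alpha} z_R^\dagger(\alpha)^{\xi_\nu a}_{j_{\xi_\nu}k_\alpha}\Phi^\alpha(\sigma)^{ab}_{k_\alpha l_\alpha}z_R(\alpha)^{b\zeta_\mu}_{l_\alpha j_{\zeta_\mu}}=\delta^{\nu\mu}(\psi_R^\nu)^{\xi_\nu\zeta_\mu}_{j_{\xi_\nu}j_{\zeta_\mu}}(\sigma)$ for all $\sigma\in S(n-1)$ (Theorem~\ref{simple_Z}); and second, that the spectrum of $Q(\alpha)$ consists of the values $\lambda_\nu(\alpha)$, each with multiplicity $d_\nu$ and indexed by $\psi^\nu\in\ind_{S(n-2)}^{S(n-1)}(\varphi^\alpha)$ (the Proposition following Definition~\ref{def_Q}).

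The conceptual core is that $Q(\alpha)$ lies in the commutant of $\Phi^\alpha(S(n-1))$. Indeed, the stated spectral description is equivalent to $Q(\alpha)=\bigoplus_{\nu}\lambda_\nu(\alpha)\mathbf{1}_{d_\nu}$ in any basis adapted to the irreducible decomposition of $\Phi^\alpha$; since the induction $\ind_{S(n-2)}^{S(n-1)}(\varphi^\alpha)$ is multiplicity free, Schur's lemma forces $Q(\alpha)$ to act as the scalar $\lambda_\nu(\alpha)$ on each irreducible block $\psi_R^\nu$, independently of the chosen reducing matrix. Because $Z_R(\alpha)$ is such a reducing matrix by Theorem~\ref{simple_Z}, conjugating $Q(\alpha)$ by $Z_R(\alpha)$ yields $\bigoplus_\nu\lambda_\nu(\alpha)\mathbf{1}_{d_\nu}$, which is exactly $\sum z_R^\dagger Q z_R=\delta^{\nu\mu}\delta^{\xi_\nu\zeta_\mu}\delta_{j_{\xi_\nu}j_{\zeta_\mu}}\lambda_\nu(\alpha)$. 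The eigenvalue attached to block $\nu$ is pinned down by Proposition~\ref{propoopo}: summing the $\alpha\alpha$ diagonal block of $(\psi_R^\nu)[(a,n-1)]$ over $a=1,\dots,n-1$ returns $\lambda_\nu(\alpha)\mathbf{1}_{d_\alpha}$, matching the value required by the spectral theorem.

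For a self-contained computational alternative I would substitute the explicit PRIR form~\eqref{BZR} of $z_R(\alpha)$ together with $Q(\alpha)^{ab}_{k_\alpha l_\alpha}=d^{\delta_{ab}}\varphi^\alpha_{k_\alpha l_\alpha}[(a,n-1)(a,b)(b,n-1)]$ into the left-hand side, mirroring the proof of Theorem~\ref{simple_Z}. The key simplification is that $(a,n-1)(a,b)(b,n-1)$ always lies in $S(n-2)$, so its $\varphi^\alpha$-matrix coincides with the $\alpha\alpha$ diagonal block of $\psi_R^\mu$; the homomorphism property of $\psi_R^\mu$ then collapses the internal $l_\alpha$-sum together with the factor $(\psi_R^\mu)[(b,n-1)]$ into the single PRIR element $(\psi_R^\mu)^{\alpha\zeta_\mu}_{k_\alpha j_{\zeta_\mu}}[(a,n-1)(a,b)]$. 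One is left with a bilinear sum over $a$ of products of $\psi_R^\nu$ and $\psi_R^\mu$ matrix elements carrying the transpositions $(a,n-1)$, to which Proposition~\ref{AP13} applies to produce $\delta^{\nu\mu}\delta^{\xi_\nu\zeta_\mu}\delta_{j_{\xi_\nu}j_{\zeta_\mu}}$, while Proposition~\ref{propoopo} supplies the factor $\lambda_\nu(\alpha)$.

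I expect the main obstacle to be the index bookkeeping in the computational route: one must separate the three cases $b=a$ (weight $d$, identity), $b<n-1$ off-diagonal, and $b=n-1$, track how the conjugated transpositions reassemble, and verify that the weights recombine into exactly the sum over $a$ of $(\psi_R^\nu)[(a,n-1)]$ governed by Proposition~\ref{propoopo} rather than into non-central star-sums. The conceptual route sidesteps this entirely, its only nontrivial input being that the prior spectral description of $Q(\alpha)$ is equivalent to $Q(\alpha)$ commuting with $\Phi^\alpha(S(n-1))$; I would lead with that argument and relegate the explicit calculation to a verification.
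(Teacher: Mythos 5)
Your lead argument contains a genuine logical gap at precisely the point you identify as its ``only nontrivial input''. You justify the commutation $[Q(\alpha),\Phi^{\alpha}(\sigma)]=0$ by claiming that the spectral description in the Proposition following Definition~\ref{def_Q} ``is equivalent to'' $Q(\alpha)=\bigoplus_{\nu}\lambda_{\nu}(\alpha)\mathbf{1}_{d_{\nu}}$ in any basis adapted to the irreducible decomposition. That inference is invalid: knowing that a hermitian matrix has eigenvalues $\lambda_{\nu}(\alpha)$ with multiplicities $d_{\nu}$ says nothing about \emph{where} its eigenspaces lie, and multiplicities matching the dimensions $d_{\nu}$ is necessary but not sufficient for the eigenspaces to coincide with the isotypic components of $\Phi^{\alpha}$. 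The repair is available in the paper: Theorem~\ref{thm18}~b) asserts that the \emph{same} unitary $Z(\alpha)$ simultaneously reduces $\Phi^{\alpha}$ and brings $Q(\alpha)$ to block-scalar form, and these two facts together (not the spectrum alone) give $[Q(\alpha),\Phi^{\alpha}(\sigma)]=0$; then multiplicity-freeness, Schur's lemma, and the basis-independence of the scalar on each canonical isotypic component finish your route~1, making even Proposition~\ref{propoopo} unnecessary for the block--eigenvalue matching. Alternatively, commutation can be verified directly from Definition~\ref{def_Q}: summing $\Phi^{\alpha}(\tau)^{ab}_{kl}=\delta_{a\tau(b)}\varphi^{\alpha}_{kl}[(a,n-1)\tau(b,n-1)]$ over all transpositions $\tau\in S(n-1)$, only $\tau=(a,b)$ survives off-diagonally, while on the diagonal conjugation by $(a,n-1)$ maps the transpositions fixing $a$ bijectively onto those of $S(n-2)$, whence
\be
Q(\alpha)=\Phi^{\alpha}\Big(\sum_{\tau}\tau\Big)+\Big(d-\tfrac{(n-2)(n-3)}{2}\tfrac{\chi^{\alpha}[(1,2)]}{d_{\alpha}}\Big)\mathbf{1},
\ee
a central element plus a scalar; this gives commutation \emph{and}, via Lemma~\ref{semAn}, pins the scalar on the $\nu$-block to exactly $\lambda_{\nu}(\alpha)$.

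For context, the paper itself proves the corollary by pure inheritance, which sidesteps all of this: Theorem~\ref{simple_Z} shows that $Z_{R}(\alpha)$ \emph{is} the general reducing matrix $Z(\alpha)$ of Theorem~\ref{thm18}~b) evaluated in PRIR form, and Theorem~\ref{thm18}~b) already contains the diagonalisation identity, so nothing remains to prove. Your computational route~2, by contrast, is viable and genuinely different, and the recombination you flag as the main risk does work out: after your (correct) collapse to $(\psi_{R}^{\mu})^{\alpha\zeta_{\mu}}[(a,n-1)(a,b)]$, one uses $\sum_{b}d^{\delta_{ab}}(a,b)=d\,e+X_{a}$ with $X_{a}$ the sum of transpositions through $a$ in $S(n-1)$, together with $(a,n-1)X_{a}=J_{n-1}(a,n-1)$ where $J_{n-1}=\sum_{c\leq n-2}(c,n-1)$ is the Jucys--Murphy element; since $J_{n-1}$ commutes with $S(n-2)$ it is block-scalar in the PRIR basis with $\alpha$-block value $\lambda_{\mu}(\alpha)-d$ (Proposition~\ref{Prir1}), so the sandwiched projector onto the $\alpha$-block turns $d\mathbf{1}+\psi_{R}^{\mu}[J_{n-1}]$ into the scalar $\lambda_{\mu}(\alpha)$, and Proposition~\ref{AP13} then yields the Kronecker deltas. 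With either repair your proposal is correct; as written, the commutant claim is unsupported.
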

Next important consequence of simplification of matrix $Z(\alpha)$ by PRIR approach are relatively handy expressions for the matrix representations of the generators of algebra $\A$, especially for $V'$. Namely we have the following
\begin{proposition}
	\label{BP20}
	In the irrep $\Phi^{\alpha }$ of the algebra $%
	\mathcal{A}_{n}^{t_{n}}(d)$ we have the following matrix representation of elements $%
	V'[(a,n)]$  
	\be
	\label{blee1}
	M_{f}^{\alpha }\left[V'[(a,n)]\right]_{j_{\xi_{\omega}} \ j_{\xi_{\nu}}}^{\xi_{\omega} \ \xi_{\nu}}=\frac{1}{n-1}\frac{\sqrt{d_{\xi}d_{\omega}}}{d_{\alpha}}\sum_{k_{\alpha}}\sqrt{\lambda_{\omega}(\alpha)}\psi_{R \ j_{\xi_{\omega}} \ k_{\alpha}}^{\omega \ \xi_{\omega} \ \alpha}[(a, n-1)]\psi_{R \ k_{\alpha} \ j_{\xi_{\nu}}}^{\nu \ 
\alpha \ \xi_{\nu}}[(a, n-1)]\sqrt{\lambda_{\nu}(\alpha)},
	\ee
	where $\omega ,\nu \neq \theta$ and the subscript $f$ (see Prop.~\ref{f_basis}) means that the matrix representation is calculated
	in reduced basis $%
	f\equiv\{f_{j_{\nu} }^{\nu }:h(\nu )\leq d,\quad j_{\nu} =1,\ldots,d_{\nu }\}$ of the ideal $\Phi^{\alpha }$. 
	
	In particular for $a=n-1$ expression~\eqref{blee1} reduces to
	\be
	\label{blee2}
	M_f^{\alpha}\left(V' \right)_{j_{\xi_{\omega}} \ j_{\xi_{\nu}}}^{\xi_{\omega} \ \xi_{\nu}}=\frac{1}{n-1}\frac{\sqrt{d_{\xi}d_{\omega}}}{d_{\alpha}}\sqrt{\lambda_{\omega}(\alpha)\lambda_{\nu}(\alpha)}\delta^{\xi_{\omega}\alpha}\delta^{\xi_{\nu}\alpha}\delta_{j_{\xi_{\omega} j_{\xi_{\nu}}}}.
	\ee
\end{proposition}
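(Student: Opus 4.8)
The plan is to substitute the simplified form of the matrix $Z_R(\alpha)$ from Theorem~\ref{simple_Z} directly into the general matrix-representation formula for $V'[(a,n)]$ given in Proposition~\ref{f_basis}, and then collapse the resulting sum using the block-diagonal PRIR structure. Recall that Proposition~\ref{f_basis} states
\be
M_{f}^{\alpha }\left[ V'[(a,n)]\right] _{j_{\xi }j_{\nu }}^{\xi \nu
}=\sum_{k=1,\ldots,d_{\alpha }}\sqrt{\lambda _{\xi }}\,z^{\dagger}(\alpha
)_{j_{\xi }k}^{\xi a}\,z(\alpha )_{kj_{\nu }}^{a\nu }\,\sqrt{\lambda _{\nu }},
\ee
while Theorem~\ref{simple_Z} gives the explicit PRIR form
\be
z_{R}(\alpha )_{k_{\alpha }j_{\xi _{\nu }}}^{a\xi _{\nu }}=\frac{1}{\sqrt{n-1}}\frac{\sqrt{d_{\nu }}}{\sqrt{d_{\alpha }}}(\psi _{R}^{\nu })_{k_{\alpha }j_{\xi _{\nu }}}^{\alpha \xi _{\nu }}[(a,n-1)].
\ee
First I would take the Hermitian conjugate of the second expression; since the PRIR matrices can be chosen real and symmetric (as noted in the remark following Definition~\ref{def_Q}), the dagger simply reorders the PRIR indices, turning $z_R^{\dagger}(\alpha)_{j_{\xi_\omega}k_\alpha}^{\xi_\omega a}$ into a factor proportional to $(\psi_R^{\omega})_{j_{\xi_\omega}k_\alpha}^{\xi_\omega\alpha}[(a,n-1)]$.

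Next I would carry out the substitution and collect the scalar prefactors. Each of the two $z_R$ factors contributes $\tfrac{1}{\sqrt{n-1}}$ and a ratio of square-root dimensions, so the combined numerical coefficient becomes $\tfrac{1}{n-1}\cdot\tfrac{\sqrt{d_\omega d_\nu}}{d_\alpha}$ (using $\omega,\nu$ to label the two irreps of $S(n-1)$ appearing on the two sides). The two eigenvalue factors $\sqrt{\lambda_\omega(\alpha)}$ and $\sqrt{\lambda_\nu(\alpha)}$ are retained verbatim from the original formula, and the internal sum over $k_\alpha=1,\dots,d_\alpha$ remains as the contraction of the two PRIR matrix blocks. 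Matching indices carefully — the row index $j_{\xi_\omega}$ of the conjugated block and the column index $j_{\xi_\nu}$ of the second block — reproduces exactly equation~\eqref{blee1}. The constraint $\omega,\nu\neq\theta$ records that only the non-trivial (first-kind) irreps enter, since the semi-trivial representations $\Psi^\nu$ send the transposed generators to zero.

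For the special case $a=n-1$ of equation~\eqref{blee2}, the transposition $(a,n-1)$ degenerates to the identity, so each PRIR block $(\psi_R^\bullet)^{\bullet\bullet}[(n-1,n-1)]=(\psi_R^\bullet)^{\bullet\bullet}(e)$ becomes an identity matrix restricted to its block structure. By the block-diagonal PRIR convention of equation~\eqref{eq27}, this forces the labels $\xi_\omega$ and $\xi_\nu$ of the reducible $S(n-2)$-structure to coincide with $\alpha$, yielding the Kronecker deltas $\delta^{\xi_\omega\alpha}\delta^{\xi_\nu\alpha}$, while the sum over $k_\alpha$ collapses to $\delta_{j_{\xi_\omega}j_{\xi_\nu}}$; the surviving eigenvalue factors combine into $\sqrt{\lambda_\omega(\alpha)\lambda_\nu(\alpha)}$. \textbf{The main obstacle} I anticipate is purely bookkeeping rather than conceptual: keeping the PRIR block indices $(\alpha,k_\alpha)$ and $(\xi_\nu,j_{\xi_\nu})$ aligned correctly through the conjugation and contraction, and verifying that the reality/symmetry of the chosen PRIRs legitimately converts the $\dagger$ into an index transposition without spurious complex conjugates. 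Once the index conventions from equation~\eqref{eq27} and Theorem~\ref{simple_Z} are pinned down, the derivation is an immediate substitution with no residual summation over group elements.
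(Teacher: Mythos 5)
Your proposal is correct and follows exactly the derivation the paper intends: Proposition~\ref{BP20} is stated as an immediate consequence of substituting the PRIR form of $z_{R}(\alpha)$ from Theorem~\ref{simple_Z} into the general formula for $M_{f}^{\alpha}\left[V'[(a,n)]\right]$ in Proposition~\ref{f_basis}, with the $a=n-1$ case collapsing because $(n-1,n-1)$ is the identity and the PRIR blocks of the identity are $\delta^{\xi_{\nu}\alpha}\delta_{k_{\alpha}j_{\xi_{\nu}}}$. One small remark: you do not even need the reality convention to convert the dagger into an index swap, since $(a,n-1)$ is an involution and the representation is unitary, so $\psi_{R}^{\omega}[(a,n-1)]$ is Hermitian and $\overline{(\psi_{R}^{\omega})_{k_{\alpha}j_{\xi_{\omega}}}^{\alpha\xi_{\omega}}}=(\psi_{R}^{\omega})_{j_{\xi_{\omega}}k_{\alpha}}^{\xi_{\omega}\alpha}$ holds automatically.
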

Later in this paper we use simplified notation for the matrix elements $V'[(an)]$ in the reduced basis $f$:
\be
\label{notation1}
M_f^{\alpha}\left[ V'[(a,n)] \right] _{j_{\xi_{\omega}} \ j_{\xi_{\nu}}}^{\xi_{\omega} \ \xi_{\nu}}\equiv M_f^{\alpha}\left[ V'[(a,n)] \right] ^{\omega \nu}_{\xi_{\omega}j_{\xi_{\omega}} \ \xi_{\nu}j_{\xi_{\nu}}},   
\ee
where 
\begin{itemize}
	\item $\xi_{\omega}$ labels irreps of $S(n-2)$ included in $\psi^{\omega} \in \widehat{S}_d(n-1)$,
	\item $j_{\xi_{\omega}}$ labels indices in $\xi_{\omega}$,
	\item $\xi_{\nu}$ labels irreps of $S(n-2)$ included in $\psi^{\nu}\in \widehat{S}_d(n-1)$,
	\item $j_{\xi_{\nu}}$ labels indices in $\xi_{\nu}$.
\end{itemize}

Next we exploit  the idea of  PRIRs introduced in~\cite{Stu2017}  with an additional results presented in Section~\ref{AppA} to the simplification of the set of the orthogonal projections $F_{\nu}(\alpha)$ given in Prop.~\ref{complicated} onto non-trivial irreducible spaces of the algebra $\mathcal{A}_n^{t_n}(d)$. Results proven below will allow us to use them in  Section~\ref{spec_prop}, where the  properties for the PBT scheme, and its connection with the matrix $Q(\alpha)$ are delivered.  As we can see  eq.~\eqref{F_comp} is not explicit since we have to compute separately coefficients $z(\alpha)_{j_{\nu }i}^{\nu a}$ and $N^{\nu}_{j'_{\nu}}$ which are given by highly complicated equations (see Remark~\ref{Hard}). Using PRIRs  we have the following simplification:

\begin{proposition}
	\label{F_simple}
	Using PRIR representation  we can simplify form of the operators $F^{\nu }(\alpha )$  from Prop.~\ref{complicated} to explicit expression 
	 type 
	\be
	\label{FF1}
	F_{\nu }(\alpha )=\frac{1}{\lambda _{\nu}(\alpha)}\frac{d_{\nu }}{%
		(n-1)!}\sum_{b=1}^{n-1}\sum_{k_{\alpha }}\sum_{\gamma \in S(n-1)}(\psi
	_{R}^{\nu })_{k_{\alpha }k_{\alpha }}^{\alpha \alpha }(\gamma
	^{-1})V'(b,n)V[(b,n-1)\gamma (b,n-1)].
	\ee
\end{proposition}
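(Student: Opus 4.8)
The plan is to start from the complicated projector formula~\eqref{F_comp} of Proposition~\ref{complicated} and substitute everywhere the PRIR‑reduced matrix $Z_R(\alpha)$ of Theorem~\ref{simple_Z} in place of the general $Z(\alpha)$. Writing $z^{-1}(\alpha)=z_R^{\dagger}(\alpha)$ (the matrix is unitary, and since the PRIRs are chosen real orthogonal the conjugation is trivial), both of the $z$‑factors in~\eqref{F_comp} collapse, up to the scalar $\tfrac{1}{\sqrt{n-1}}\sqrt{d_\nu/d_\alpha}$, to single matrix elements $(\psi_R^\nu)^{\alpha\xi_\nu}_{\,\cdot\,j_{\xi_\nu}}[(a,n-1)]$ of the PRIR at a transposition. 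This step already discards the opaque normalisation $N^\nu_{j'_\nu}$ and the hidden sums over $S(n-1)$ flagged in Remark~\ref{Hard}, which is the whole point of invoking PRIRs.

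Next I would perform the sum over the image index $j_\nu=(\xi_\nu,j_{\xi_\nu})$ of $\psi^\nu$ shared by the two $z$‑factors. Contracting the full column index of a real orthogonal representation matrix is just matrix multiplication with the transpose, and the transpose of $\psi_R^\nu[(b,n-1)]$ is its inverse $\psi_R^\nu[(b,n-1)]$ (a transposition is self‑inverse), so
\be
\sum_{\xi_\nu,j_{\xi_\nu}}(\psi_R^\nu)^{\alpha\xi_\nu}_{i_\alpha j_{\xi_\nu}}[(a,n-1)](\psi_R^\nu)^{\alpha\xi_\nu}_{k_\alpha j_{\xi_\nu}}[(b,n-1)]=(\psi_R^\nu)^{\alpha\alpha}_{i_\alpha k_\alpha}[(a,n-1)(b,n-1)],
\ee
i.e. the product collapses to a single element of the $\alpha$‑diagonal block. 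At this point I substitute the explicit $u^{ba}_{k_\alpha i_\alpha}(\alpha)$ of Proposition~\ref{complicated}, which contributes $V'[(b,n)]$, a sum over $\sigma\in S(n-2)$, the factor $\varphi^\alpha_{i_\alpha k_\alpha}(\sigma^{-1})$ and the operator $V[(b,n-1)\sigma(a,n-1)]$; collecting prefactors gives $\tfrac{1}{n-1}\cdot\tfrac{d_\nu}{d_\alpha}\cdot\tfrac{d_\alpha}{(n-2)!}=\tfrac{d_\nu}{(n-1)!}$, the constant appearing in~\eqref{FF1}.

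The decisive step, and the one I expect to be the main obstacle, is to merge the two remaining discrete sums — over the induction label $a=1,\dots,n-1$ and over $\sigma\in S(n-2)$ — into a single sum over $\gamma\in S(n-1)$. For this I use the coset decomposition $S(n-1)=\bigsqcup_{a=1}^{n-1}S(n-2)\,(a,n-1)$ and set $\gamma=\sigma(a,n-1)(b,n-1)$, a bijection of $\{1,\dots,n-1\}\times S(n-2)$ onto $S(n-1)$ for each fixed $b$; under it $V[(b,n-1)\sigma(a,n-1)]=V[(b,n-1)\gamma(b,n-1)]$ because $(b,n-1)^2=e$. Simultaneously $(a,n-1)(b,n-1)=\sigma^{-1}\gamma$, so the defining block‑diagonality of the PRIR on $S(n-2)$ gives
\be
(\psi_R^\nu)^{\alpha\alpha}_{i_\alpha k_\alpha}(\sigma^{-1}\gamma)=\sum_{m_\alpha}\varphi^\alpha_{i_\alpha m_\alpha}(\sigma^{-1})(\psi_R^\nu)^{\alpha\alpha}_{m_\alpha k_\alpha}(\gamma).
\ee

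Finally I contract the internal $S(n-2)$ indices: $\sum_{i_\alpha}\varphi^\alpha_{i_\alpha m_\alpha}(\sigma^{-1})\varphi^\alpha_{i_\alpha k_\alpha}(\sigma^{-1})=\delta_{m_\alpha k_\alpha}$ by orthogonality of $\varphi^\alpha$, which forces $m_\alpha=k_\alpha$ and collapses the character factor to the diagonal $\sum_{k_\alpha}(\psi_R^\nu)^{\alpha\alpha}_{k_\alpha k_\alpha}(\gamma)$; since the diagonal entries of a real orthogonal matrix are invariant under inversion this equals $\sum_{k_\alpha}(\psi_R^\nu)^{\alpha\alpha}_{k_\alpha k_\alpha}(\gamma^{-1})$, exactly the factor in~\eqref{FF1}. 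Assembling the prefactor with the single $S(n-1)$‑sum yields the claimed formula. The only genuinely delicate points are the index bookkeeping in the collapse of the two $z$‑factors and verifying that the coset substitution really turns the two sums into one sum over all of $S(n-1)$; the rest is orthogonality of $\varphi^\alpha$ and $\psi_R^\nu$.
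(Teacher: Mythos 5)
Your proposal is correct and takes essentially the same route the paper intends: substituting the PRIR-reduced matrix $Z_R(\alpha)$ of Theorem~\ref{simple_Z} into the projector formula~\eqref{F_comp} of Proposition~\ref{complicated}, collapsing the two $z$-factors by unitarity into a single $\alpha\alpha$-block element, and merging the sums over $a$ and $\sigma\in S(n-2)$ into one sum over $\gamma\in S(n-1)$ via the coset substitution $\gamma=\sigma(a,n-1)(b,n-1)$. The paper merely asserts the result (``Using PRIRs we have the following simplification''), and your write-up supplies the omitted computation correctly, including the two genuinely delicate points: the pointwise (single-matrix) orthogonality $\sum_{i_\alpha}\varphi^{\alpha}_{i_\alpha m_\alpha}(\sigma^{-1})\varphi^{\alpha}_{i_\alpha k_\alpha}(\sigma^{-1})=\delta_{m_\alpha k_\alpha}$, which kills the $\gamma$-dependent $\sigma(\gamma)$, and the invariance of diagonal entries of a real orthogonal representation matrix under $\gamma\mapsto\gamma^{-1}$, which is needed because the operator factor $V[(b,n-1)\gamma(b,n-1)]$ forbids relabelling the group sum.
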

Indeed reader notices that we simplified complicated expressions for $z(\alpha)_{j_{\nu }i}^{\nu a}$ and $N^{\nu}_{j'_{\nu}}$, and now we have  only one sum over all permutations from $S(n-1)$ instead of two of them. Next, directly from result contained in Prop.~\ref{F_simple}  we get

\begin{lemma}
	\label{lemma_Mf}
	The matrix form of the projector $F_{\nu }(\alpha )$ on non-trivial irreducible spaces of the algebra $\mathcal{A}_{n}^{t_{n}}(d)$, in the reduced basis $f$ has
	the following form
	\be
	M_{f}^{\alpha }[F_{\nu }(\alpha )]_{\xi _{\eta }j_{\xi _{\eta }}\xi _{\mu
		}j_{\xi _{\mu }}}^{\eta \qquad \mu }=\delta ^{\eta \nu }\delta ^{\nu \mu
	}\delta _{\xi _{\eta }\xi _{\mu }}\delta _{j_{\xi _{\eta }}j_{\xi _{\mu }}},
	\ee
	i.e. in the irrep $\Phi^{\alpha}$ of the algebra $\mathcal{A}_{n}^{t_{n}}(d)$, in the
	reduced basis $f$ the projector $F_{\nu }(\alpha )$ takes its canonical
	form with one$'$s on the diagonal in the position of the irrep $%
	\psi ^{\nu }$ of the group $S(n-1)$ only.
\end{lemma}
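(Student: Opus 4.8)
The plan is to substitute the simplified expression for $F_\nu(\alpha)$ from Proposition~\ref{F_simple} into the reduced basis $f$ and carry out the index contractions using the summation rules established earlier. Since $\Phi^\alpha$ is an algebra homomorphism, $M_f^\alpha$ respects products and sums, so I would first write
\be
M_f^{\alpha}[F_\nu(\alpha)]=\frac{1}{\lambda_\nu(\alpha)}\frac{d_\nu}{(n-1)!}\sum_{b=1}^{n-1}\sum_{k_\alpha}\sum_{\gamma\in S(n-1)}(\psi_R^\nu)^{\alpha\alpha}_{k_\alpha k_\alpha}(\gamma^{-1})\,M_f^{\alpha}[V'(b,n)]\,M_f^{\alpha}[V((b,n-1)\gamma(b,n-1))].
\ee
The second factor is the image of an element of $S(n-1)$, so by Proposition~\ref{f_basis} it is block-diagonal, its $(\eta,\mu)$ block being $\delta^{\eta\mu}\psi_R^{\mu}[(b,n-1)\gamma(b,n-1)]$; the first factor is supplied by Proposition~\ref{BP20}. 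Inserting both and writing out the matrix product over the intermediate index of the block $\mu$ (which, summed over all refined $S(n-2)$ labels, is the full index of $\psi^\mu$, so the homomorphism property applies), the column factor $\psi_R^{\mu}[(b,n-1)]$ of $V'(b,n)$ combines with $\psi_R^{\mu}[(b,n-1)\gamma(b,n-1)]$, and using $(b,n-1)^2=e$ this collapses to a single matrix element $\psi_R^{\mu}[\gamma(b,n-1)]$.

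The next step is the sum over $\gamma\in S(n-1)$. After the previous simplification the $\gamma$-dependence sits in the product $(\psi_R^\nu)^{\alpha\alpha}_{k_\alpha k_\alpha}(\gamma^{-1})\,\psi_R^{\mu}[\gamma(b,n-1)]$; expanding the second factor by the homomorphism property isolates $(\psi_R^\nu)(\gamma^{-1})\,(\psi_R^{\mu})(\gamma)$, to which I would apply the Schur orthogonality relations for the irreps of $S(n-1)$. This produces a factor $\tfrac{(n-1)!}{d_\nu}\delta^{\nu\mu}$ together with Kronecker deltas that force the surviving inner $S(n-2)$ row-blocks to equal $\alpha$ and identify the internal summation indices, in particular setting the intermediate $m_\alpha=k_\alpha$ and fixing $\mu=\nu$.

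What remains is a single sum over $b$ and $k_\alpha$ of a product of two matrix elements of the transposition $(b,n-1)$, each having $\alpha$ as its inner $S(n-2)$ block; this is exactly the bilinear summation rule of Proposition~\ref{AP13}, applied to the group $S(n-1)$ (so the $n$ there becomes $n-1$ and the middle irrep is $\alpha$), which evaluates to $(n-1)\tfrac{d_\alpha}{d_\eta}\delta^{\eta\nu}\delta_{\xi_\eta\xi_\mu}\delta_{j_{\xi_\eta}j_{\xi_\mu}}$. Collecting all scalar prefactors — $\tfrac{1}{\lambda_\nu}\tfrac{d_\nu}{(n-1)!}$ from the start, $\tfrac{1}{n-1}\tfrac{\sqrt{d_\eta d_\mu}}{d_\alpha}\sqrt{\lambda_\eta\lambda_\mu}$ from Proposition~\ref{BP20}, $\tfrac{(n-1)!}{d_\nu}$ from Schur orthogonality, and $(n-1)\tfrac{d_\alpha}{d_\eta}$ from Proposition~\ref{AP13} — and using the deltas $\delta^{\eta\nu}\delta^{\nu\mu}$ to set $\eta=\mu=\nu$ (hence $d_\eta=d_\nu$ and $\lambda_\eta=\lambda_\nu$), every factor of $(n-1)!$, $(n-1)$, $d_\alpha$, $d_\nu$ and $\lambda_\nu$ cancels and the prefactor reduces to exactly $1$, yielding the claimed canonical form $\delta^{\eta\nu}\delta^{\nu\mu}\delta_{\xi_\eta\xi_\mu}\delta_{j_{\xi_\eta}j_{\xi_\mu}}$.

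I expect the main obstacle to be purely bookkeeping rather than conceptual: the indices carry both an $S(n-1)$ irrep label and its PRIR refinement into $S(n-2)$ blocks, and one must be careful that the intermediate sum in the matrix product runs over the full index of $\mu$ while the external indices stay refined, and that the several dimensional factors $d_\nu$, $d_\alpha$, $d_\eta$ together with the eigenvalue $\lambda_\nu(\alpha)$ cancel precisely, leaving no spurious constant on the diagonal.
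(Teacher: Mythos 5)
Your proposal is correct and is exactly the ``direct calculation using PRIRs'' that the paper invokes without spelling out: you substitute the simplified form of $F_{\nu}(\alpha)$ from Proposition~\ref{F_simple}, use the homomorphism property of $M_{f}^{\alpha}$ together with the block-diagonal form~\eqref{expr1a} and Proposition~\ref{BP20}, collapse $(b,n-1)^2=e$, apply Schur orthogonality over $S(n-1)$, and finish with Proposition~\ref{AP13} shifted to $S(n-1)$. I verified the index bookkeeping and the prefactor cancellation $\frac{1}{\lambda_{\nu}}\frac{d_{\nu}}{(n-1)!}\cdot\frac{d_{\nu}\lambda_{\nu}}{(n-1)d_{\alpha}}\cdot\frac{(n-1)!}{d_{\nu}}\cdot\frac{(n-1)d_{\alpha}}{d_{\nu}}=1$; both check out, so no gaps remain.
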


This result is obtained by a direct calculation using PRIRs. From the statement of  
Lemma~\ref{lemma_Mf}, using a decomposition of the natural representation of the
algebra $\mathcal{A}_{n}^{t_{n}}(d)$ into its irreps, we deduce easily 

\begin{corollary}
	\label{CM4}
	\be
	\tr M_{f}^{\alpha }[F_{\nu }(\alpha )]=d_{\nu },
	\ee
	and from this we get  
	\be
	\label{FF2}
	\tr_{\mathcal{H}}F_{\nu }(\alpha )=m_{\alpha }d_{\nu },
	\ee
	where $\mathcal{H}=(\mathbb{C}^d)^{\ot n}$, and $m_{\alpha}$ is the multiplicity  the irreps $\varphi ^{\alpha }$ of $S(n-2)$ in the representation $V_{d}(S(n-2))$ (see Proposition~\ref{mult_a}).
\end{corollary}

The trace $\tr_{\mathcal{H}}F_{\nu }(\alpha )$ can be computed also in another way,
directly from expression~\eqref{FF1}
\be
\tr_{\mathcal{H}}F_{\nu }(\alpha )=\frac{1}{\lambda _{\nu}(\alpha)}\frac{d_{\nu }}{(n-1)!}\sum_{b=1}^{n-1}\sum_{k_{\alpha }}\sum_{\gamma \in
	S(n-1)}(\psi _{R}^{\nu })_{k_{\alpha }k_{\alpha }}^{\alpha \alpha }(\gamma
^{-1})\tr_{\mathcal{H}}\left( V'[(b,n)]V[(b,n-1)\gamma (b,n-1)]\right) ,
\ee
where 
\be
\tr_{\mathcal{H}}\left( V'[(b,n)]V[(b,n-1)\gamma (b,n-1)]\right) =\tr_{\mathcal{H}}V[(n-1,n)\gamma].
\ee
Using the well known fact that $\tr_{\mathcal{H}}V_{n}(\sigma )=d^{l(\sigma )}=\chi
^{V_{n}(d)}(\sigma )$ is the character of the permutation representation of $S(n)$,
where $l(\sigma )$ is the number of cycles in the permutation $\sigma \in
S(n)$  we get 
\be
\tr_{\mathcal{H}}F_{\nu }(\alpha )=\frac{1}{\lambda _{\nu}(\alpha)}\frac{d_{\nu }}{(n-1)!}\sum_{b=1}^{n-1}\sum_{k_{\alpha }}\sum_{\gamma \in
	S(n-1)}(\psi _{R}^{\nu })_{k_{\alpha }k_{\alpha }}^{\alpha \alpha }(\gamma
^{-1})d^{l[(n-1,n)\gamma ]},
\ee
and it is easy to check that $l[(n-1,n)\gamma ]=l(\gamma ),\quad \gamma \in
S(n-1)$.
Further using the orthogonality relations for irreps we get one more 
	\be
	\tr_{\mathcal{H}}F_{\nu }(\alpha )=\frac{1}{\lambda _{\nu}(\alpha)}\frac{d_{\nu }}{(n-1)!}\sum_{b=1}^{n-1}\sum_{k_{\alpha }}\sum_{\gamma \in
		S(n-1)}(\psi _{R}^{\nu })_{k_{\alpha }k_{\alpha }}^{\alpha \alpha }(\gamma
	^{-1})d^{l(\gamma )}=\frac{(n-1)}{\lambda _{\nu}(\alpha)}d_{\alpha
	}m_{\nu }.
	\ee

As a corollary from  Cor.~\ref{CM4} we get the following

\begin{corollary}
	\label{cor_lamda}
The eigenvalues $\lambda_{\nu}(\alpha)$ of the matrix $Q(\alpha)$ (see Definition~\ref{def_Q}) are of the form
\be
\lambda_{\nu}(\alpha)=(n-1)\frac{m_{\nu}d_{\alpha}}{m_{\alpha}d_{\mu}}.
\ee
\end{corollary}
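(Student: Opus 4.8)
The plan is to obtain the corollary by comparing two independent evaluations of the Hilbert-space trace $\tr_{\cH} F_{\nu}(\alpha)$ of the projector and then reading off $\lambda_{\nu}(\alpha)$ from the resulting scalar identity. Both evaluations are already available from the material immediately preceding the statement, so the argument amounts to equating them and isolating the eigenvalue.

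First I would invoke Corollary~\ref{CM4}. There the decomposition of the natural representation of $\A$ into irreducible components (Theorem~\ref{decompA}), together with the canonical form of $M_{f}^{\alpha}[F_{\nu}(\alpha)]$ from Lemma~\ref{lemma_Mf}, gives the clean value
\[
\tr_{\cH} F_{\nu}(\alpha) = m_{\alpha} d_{\nu},
\]
namely the multiplicity $m_{\alpha}$ of the irrep $\Phi^{\alpha}$ times the trace $d_{\nu}$ of the block-diagonal projector inside $\Phi^{\alpha}$.

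Second I would use the direct trace of the simplified projector~\eqref{FF1} from Proposition~\ref{F_simple}. Taking $\tr_{\cH}$ of~\eqref{FF1}, rewriting $\tr_{\cH}\!\left(V'[(b,n)]V[(b,n-1)\gamma(b,n-1)]\right)=\tr_{\cH} V[(n-1,n)\gamma]=d^{\,l[(n-1,n)\gamma]}$, applying the cycle-count identity $l[(n-1,n)\gamma]=l(\gamma)$ for $\gamma\in S(n-1)$, and then collapsing the remaining sum via the character orthogonality relations (which reproduce the multiplicity $m_{\nu}$ from Proposition~\ref{mult_a}) yields
\[
\tr_{\cH} F_{\nu}(\alpha) = \frac{(n-1)}{\lambda_{\nu}(\alpha)}\, d_{\alpha}\, m_{\nu}.
\]

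Equating the two evaluations gives $m_{\alpha} d_{\nu} = \tfrac{(n-1)}{\lambda_{\nu}(\alpha)} d_{\alpha} m_{\nu}$, and solving for $\lambda_{\nu}(\alpha)$ produces the asserted closed form (with $d_{\nu}$ in the denominator). There is no genuine obstacle here, since both trace formulas have already been established; the only point deserving care is that the eigenvalue $\lambda_{\nu}(\alpha)$ appearing as a prefactor in~\eqref{FF1} must be nonzero for the division to be legitimate. This is exactly the standing assumption recorded after Lemma~\ref{semAn}, where $F_{\nu}(\alpha)$ is considered only for those $\nu\in\alpha$ with $\lambda_{\nu}(\alpha)>0$, so the step is justified and the argument is complete.
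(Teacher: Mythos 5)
Your proposal is correct and is essentially the paper's own argument: the text immediately preceding Corollary~\ref{cor_lamda} evaluates $\tr_{\mathcal{H}}F_{\nu}(\alpha)$ once via Corollary~\ref{CM4} (giving $m_{\alpha}d_{\nu}$) and once directly from~\eqref{FF1} using $l[(n-1,n)\gamma]=l(\gamma)$ and character orthogonality (giving $(n-1)d_{\alpha}m_{\nu}/\lambda_{\nu}(\alpha)$), then equates the two, exactly as you do. Your parenthetical correction that the denominator should read $d_{\nu}$ is also right, since the $d_{\mu}$ in the printed statement is a typo.
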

This an equivalent expression for the eigenvalues of $Q(\alpha)$ given in Lemma~\ref{semAn} and Thm.~\ref{thm18} a). This formula has been obtained in~\cite{Stu2017} using different method.
\subsection{New matrix operators in algebra $\A$}
\label{obiekty_sec}
In this paragraph we define new set of matrix operators which give  useful  description of the generator $V'$ of underlying algebra $\A$ in the matrix form. Derived expressions are similar to those which can be obtained for groups. Description of the latter can be find in classical textbooks~\cite{Curtis},~\cite{FHa} or in Appendix F of~\cite{Stu2017}. 
As we have shown in Section~\ref{simplification} 
in irrep of the ideal $\mathcal{M}$ labelled by partition $\alpha$  we have basis labelled by $\mu,\xi_\mu, i_{\xi_\mu}$, 
where $\mu$ is partition of $n-1$, $\xi$ is partition of $n-2$ differing from $\mu$ by one block, and $i_{\xi_\mu}$ labels basis in 
irrep $\xi_\mu$. We thus have 
a vector basis $\{|\phi^{\mu}_{\xi_{\mu} i_{\xi_{\mu}}}(\alpha,r)\>\}$, where $r$ labels multiplicity in our representation. 
With this basis we can associate flip operators (or, matrix basis operators):
\be
\label{przerzut-ogolnie}
E^{\mu \nu}_{\xi_{\mu}i_{\xi_{\mu}},\xi_{\nu}j_{\xi_{\nu}}}(\alpha)\equiv \sum_{r} |\phi^{\mu}_{\xi_{\mu} i_{\xi_{\mu}}}(\alpha,r)\>\<\phi^{\nu}_{\xi_{\nu}j_{\xi_{\nu}}}(\alpha,r)|.
\ee
In the reduced matrix basis $f$ given in Proposition~\ref{f_basis} the matrix elements of above operators are of the form
\be
	\label{mat_ele}
M_{f}^{\alpha } \left[ E^{\mu \nu}_{\xi_{\mu}i_{\xi_{\mu}},\xi_{\nu}j_{\xi_{\nu}}}(\alpha)\right]^{\rho \gamma}_{\xi_{\rho}i_{\xi_{\rho}},\xi_{\gamma}j_{\xi_{\gamma}}} =\delta^{\mu\rho}\delta^{\nu \gamma}\delta_{\xi_{\mu}\xi_{\rho}}\delta_{\xi_{\nu}\xi_{\gamma}}\delta_{i_{\xi_{\mu}}i_{\xi_{\rho}}}\delta_{j_{\xi_{\nu}}j_{\xi_{\gamma}}}.
\ee
Below we will show explicit form of some of those flip operators in terms of the elements of our algebra $\A$.

\begin{theorem}
	\label{przerzutniki}
	Let us define the following set of operators in the algebra $\A$
	\label{d1}
	\be
	\label{eq:przerzutniki_wybrane}
	E_{i_{\alpha }j_{\alpha }}^{\mu \nu }(\alpha )\equiv\frac{m_{\alpha }}{\sqrt{%
			m_{\mu }m_{\nu }}}P_{\mu }E_{i_{\alpha }j_{\alpha }}^{\alpha }V'P_{\nu }, 
	\ee
	where 
	\be
	P_{\mu }=\frac{d_{\mu }}{(n-1)!}\sum_{\sigma \in S(n-1)}\chi ^{\mu }(\sigma
	^{-1})V(\sigma ),\qquad E_{i_{\alpha }j_{\alpha }}^{\alpha }=\frac{d_{\alpha
	}}{(n-2)!}\sum_{\pi \in S(n-2)}\varphi _{j_{\alpha }i_{\alpha }}^{\alpha
	}(\pi ^{-1})V(\pi ). 
	\ee
	Then operators in~\eqref{eq:przerzutniki_wybrane} form a subset in the set of operators given through~\eqref{przerzut-ogolnie}. Clearly the operators $E_{i_{\alpha }j_{\alpha }}^{\mu \nu }(\alpha )$
	belong to the ideal $\MM$.
\end{theorem}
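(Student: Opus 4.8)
The plan is to compute the matrix form $M_f^\alpha[\,\cdot\,]$ of the product $P_\mu E_{i_\alpha j_\alpha}^\alpha V' P_\nu$ in the reduced basis $f$ of the irrep $\Phi^\alpha$ (Proposition~\ref{f_basis}) factor by factor, and to check that after the normalisation $m_\alpha/\sqrt{m_\mu m_\nu}$ it coincides with the matrix~\eqref{mat_ele} of a flip operator from the family~\eqref{przerzut-ogolnie}. Throughout I would use the refined PRIR indexing $(\nu,\xi_\nu,j_{\xi_\nu})$ of the carrier space, in which $S(n-1)$ acts through $\bigoplus_{\nu\in\Phi^\alpha}\psi_R^\nu$ (simple reducibility) and the further restriction to $S(n-2)$ is block diagonal with blocks $\varphi^{\xi_\nu}$.

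First I would dispose of the two $S(n-1)$-idempotents: since $\Phi^\alpha|_{S(n-1)}$ is simply reducible, the central projector $P_\mu$ acts in $\Phi^\alpha$ as the identity on the $\psi^\mu$-block and as zero elsewhere, and likewise for $P_\nu$. Next, $E_{i_\alpha j_\alpha}^\alpha$ is a Wedderburn matrix unit for $S(n-2)$, so by Schur orthogonality it acts inside every $\psi_R^\nu$-block as $|i_\alpha\rangle\langle j_\alpha|$ on the $\varphi^\alpha$ sub-block, annihilating all sub-blocks $\varphi^{\xi_\nu}$ with $\xi_\nu\neq\alpha$ and preserving the block label $\nu$. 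Finally, by Proposition~\ref{BP20}, equation~\eqref{blee2} (recall $V'=V'[(n-1,n)]$), $M_f^\alpha[V']$ is supported only on the $\varphi^\alpha$ sub-blocks: it sends $(\nu,\alpha,j)$ to $(\omega,\alpha,j)$, preserving the index $j$, with coefficient $\tfrac{1}{n-1}\tfrac{\sqrt{d_\omega d_\nu}}{d_\alpha}\sqrt{\lambda_\omega(\alpha)\lambda_\nu(\alpha)}$.

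Multiplying the four factors (reading right to left through $P_\nu$, $V'$, $E^\alpha_{i_\alpha j_\alpha}$, $P_\mu$) collapses all free indices: the input is forced into the $\varphi^\alpha$ sub-block of the $\nu$-block with column index $j_\alpha$, the output into the $\varphi^\alpha$ sub-block of the $\mu$-block with row index $i_\alpha$, and the surviving scalar is $\tfrac{1}{n-1}\tfrac{\sqrt{d_\mu d_\nu}}{d_\alpha}\sqrt{\lambda_\mu(\alpha)\lambda_\nu(\alpha)}$. The decisive step is then to insert the eigenvalue formula $\lambda_\nu(\alpha)=(n-1)m_\nu d_\alpha/(m_\alpha d_\nu)$ of Corollary~\ref{cor_lamda}: the powers of $n-1$, $d_\alpha$ and the dimensions $d_\mu,d_\nu$ cancel and leave exactly $\sqrt{m_\mu m_\nu}/m_\alpha$, which is inverted by the prefactor in~\eqref{eq:przerzutniki_wybrane}. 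The resulting matrix in $\Phi^\alpha$ is $\delta^{\rho\mu}\delta^{\gamma\nu}\delta_{\xi_\rho\alpha}\delta_{\xi_\gamma\alpha}\delta_{i_{\xi_\rho}i_\alpha}\delta_{j_{\xi_\gamma}j_\alpha}$, which is precisely~\eqref{mat_ele} specialised to the flip operator $E^{\mu\nu}_{\alpha i_\alpha,\alpha j_\alpha}(\alpha)$, establishing the first assertion.

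For the membership in $\mathcal{M}$ I would argue representation-theoretically. The factor $V'$ is a non-trivially transposed generator, hence is represented by the zero operator in every semi-trivial irrep $\Psi^\nu$; therefore $E_{i_\alpha j_\alpha}^{\mu\nu}(\alpha)$ vanishes in all $\Psi^\nu$. The same multiplication also shows it vanishes in every $\Phi^\beta$ with $\beta\neq\alpha$, since there $M_f^\beta[V']$ outputs into the $\varphi^\beta$ sub-block while $E^\alpha_{i_\alpha j_\alpha}$ is supported on $\varphi^\alpha$, so the composite is zero. By the Wedderburn decomposition~\eqref{decompA_eq}, an element whose components in all $\Psi^\nu$-blocks vanish lies in the ideal $\mathcal{M}=\bigoplus_\alpha m_\alpha\Phi^\alpha$, giving the desired conclusion. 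The only genuinely delicate point is the bookkeeping of the PRIR indices across the product together with the cancellation driven by Corollary~\ref{cor_lamda}; everything else is Schur orthogonality applied to $P_\mu$ and $E^\alpha_{i_\alpha j_\alpha}$.
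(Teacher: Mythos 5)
Your proposal is correct and follows essentially the same route as the paper's proof: both compute the matrix elements of $P_{\mu}E^{\alpha}_{i_{\alpha}j_{\alpha}}V'P_{\nu}$ in the reduced basis $f$ via the PRIR block structure, use Schur orthogonality to evaluate $P_{\mu}$, $P_{\nu}$ and $E^{\alpha}_{i_{\alpha}j_{\alpha}}$ blockwise, invoke equation~\eqref{blee2} together with the eigenvalue formula of Corollary~\ref{cor_lamda} to produce the factor $\sqrt{m_{\mu}m_{\nu}}/m_{\alpha}$ cancelled by the prefactor, and conclude by comparison with~\eqref{mat_ele}. Your explicit representation-theoretic argument for membership in the ideal $\MM$ (vanishing of $V'$, hence of the whole product, in every semi-trivial irrep $\Psi^{\nu}$) is a sound supplement to the paper, which states that claim only as ``clearly''.
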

\begin{proof}
	To prove statement of theorem we have to show that operators given in~\eqref{eq:przerzutniki_wybrane} form a subset contained in the set composed of operators given through~\eqref{przerzut-ogolnie}. To do so we compute matrix elements of $E_{i_{\alpha }j_{\alpha }}^{\mu \nu }(\alpha )$ in reduced basis $f$ and compare them with expression~\eqref{mat_ele}. 
	In order to compute desired matrix elements first we have to calculate in PRIR
	representation  matrix elements $\psi _{R}^{\nu }$ of $S(n-1)$
	\be
	\begin{split}
	\psi _{R}^{\nu }[E_{i_{\alpha }j_{\alpha }}^{\alpha }]_{i_{\xi _{\nu
		}}j_{\zeta _{\nu }}}^{\xi _{\nu }\zeta _{\nu }}&=\delta ^{\xi _{\nu }\zeta
		_{\nu }}\varphi _{{i_{\xi _{\nu }}}j_{_{i_{\xi _{\nu }}}}}^{\xi _{\nu
	}}[E_{i_{\alpha }j_{\alpha }}^{\alpha }]=\delta ^{\xi _{\nu }\zeta _{\nu }}%
	\frac{d_{\alpha }}{(n-2)!}\sum_{\pi \in S(n-2)}\varphi _{j_{\alpha
		}i_{\alpha }}^{\alpha }(\pi ^{-1})\varphi _{{i_{\xi _{\nu }}}j_{_{\xi _{\nu
	}}}}^{\xi _{\nu }}(\pi )\\
	&=\delta ^{\xi _{\nu }\zeta _{\nu }}\delta ^{\alpha \xi _{\nu }}\delta
	_{i_{\alpha }i_{\xi _{\nu }}}\delta _{j_{\alpha }j_{\xi _{\nu }}}=\delta
	^{\alpha \zeta _{\nu }}\delta ^{\alpha \xi _{\nu }}(e_{i_{\alpha }j_{\alpha
	}})_{i_{\xi _{\nu }}j_{\xi _{\nu }}}.
\end{split}
	\ee
	Expression above means that in PRIR representation  $\psi _{R}^{\nu }$ of $S(n-1)$
	the operator $E_{i_{\alpha }j_{\alpha }}^{\alpha }$ is represented in such a
	way that among  the diagonal blocks $(\psi _{R}^{\nu })^{\xi _{\nu }\xi
		_{\nu }}\sim \varphi ^{\xi _{\nu }}$, the block $(\psi _{R}^{\nu })^{\alpha
		\alpha }\sim \varphi ^{\alpha }$ is nonzero and in this blok the operator $%
	E_{i_{\alpha }j_{\alpha }}^{\alpha }$ is represented by standard matrix
	basis $e_{i_{\alpha }j_{\alpha }}.$ From this it follows that in the irrep 
	$M_{f}^{\alpha }$ of the algebra $A_{n}^{t_{n}}(d)$ we have  
	\be
	M_{f}^{\alpha }[E_{i_{\alpha }j_{\alpha }}^{\alpha }]_{\xi _{\kappa }j_{\xi
			_{\kappa }}\xi _{\sigma }j_{\xi _{\sigma }}}^{\kappa \qquad \sigma }=\delta
	^{\kappa \sigma }\delta _{\xi _{\kappa }\alpha }\delta _{\alpha \xi _{\sigma
	}}\delta _{i_{\alpha }j_{\xi _{\kappa }}}\delta _{j_{\alpha }j_{\xi _{\sigma
	}}}=\delta ^{\kappa \sigma }\delta _{\xi _{\kappa }\alpha }\delta _{\alpha
		\xi _{\sigma }}(e_{i_{\alpha }j_{\alpha }})_{j_{\xi _{\kappa }}j_{\xi
			_{\sigma }}}.
	\ee
	Next using equation~\eqref{blee2} form Proposition~\ref{BP20} and
	\be
	M_{f}^{\alpha }[P_{\mu }]_{\xi _{\sigma }i_{\xi _{\sigma }}\xi _{\gamma
		}^{\prime }j_{\xi _{\gamma }^{\prime }}}^{\sigma \qquad \gamma }=\delta
	^{\sigma \mu }\delta ^{\mu \gamma }\delta _{\xi _{\sigma }\xi _{\gamma
		}^{\prime }}\delta _{i_{_{\xi _{\sigma }}}j_{\xi _{\gamma }^{\prime
	}}}=\delta ^{\sigma \mu }\delta ^{\mu \gamma }\delta _{\xi _{\mu }\xi _{\mu
		}^{\prime }}\delta _{i_{_{\xi _{\mu }}}j_{\xi _{\mu }}}
	\ee
	we calculate 
	\be
	\begin{split}
	&M_{f}^{\alpha }[P_{\mu }V'P_{\nu }]_{\xi _{\sigma }j_{\xi
			_{\sigma }}\xi _{\tau }j_{\xi _{\tau }}}^{\sigma \qquad \tau }=\sum_{\gamma
		\xi _{\gamma }^{\prime }j_{\xi _{\gamma }^{\prime }}}\sum_{\theta \xi
		_{\theta }j_{\xi _{\theta }}}\delta ^{\sigma \mu }\delta ^{\mu \gamma
	}\delta _{\xi _{\sigma }\xi _{\gamma }^{\prime }}\delta _{j_{_{\xi _{\sigma
		}}}j_{\xi _{\gamma }^{\prime }}}\\ 
	&\times \frac{\sqrt{m_{\gamma }m_{\theta }}}{m_{\alpha }}\delta _{\xi
		_{\gamma }^{\prime }\alpha }\delta _{\alpha \xi _{\theta }}\delta _{j_{\xi
			_{\gamma }^{\prime }}j_{\xi _{\theta }}} \delta ^{\theta \nu }\delta
	^{\nu \tau }\delta _{\xi _{\theta }\xi _{\tau }}\delta _{j_{_{\xi \theta
		}}j_{\xi _{\tau }}}=\frac{\sqrt{m_{\mu }m_{\nu }}}{m_{\alpha }}\delta
	^{\sigma \mu }\delta ^{\nu \tau }\delta _{\xi _{\sigma }\alpha }\delta
	_{\alpha \xi _{\tau }}\delta _{j_{\xi _{\sigma }}j_{\xi _{\tau }}}.
	\end{split}
	\ee
	Taking all together we write
	\be
	M_{f}^{\alpha }[P_{\mu }E_{i_{\alpha }j_{\alpha }}^{\alpha }V'P_{\nu }]_{\xi _{\rho }j_{\xi _{\rho }}\xi _{\gamma }j_{\xi _{\gamma
	}}}^{\rho \qquad \gamma }=\frac{\sqrt{m_{\mu }m_{\nu }}}{m_{\alpha }}\delta
	^{\rho \mu }\delta ^{\nu \gamma }\delta _{\xi _{\rho }\alpha }\delta
	_{\alpha \xi _{\gamma }}(e_{i_{\alpha }j_{\alpha }})_{j_{_{\xi _{\rho
		}}}j_{\xi _{\gamma }}}.
	\ee
	Finally matrix elements of the operators given in~\eqref{eq:przerzutniki_wybrane} are of the form
	\be
	\label{p1}
	\begin{split}
	M_{f}^{\alpha }[E_{i_{\alpha }j_{\alpha }}^{\mu \nu }(\alpha )]_{\xi _{\rho
		}j_{\xi _{\rho }},\xi _{\gamma }j_{\xi _{\gamma }}}^{\rho \gamma }&=\frac{\sqrt{m_{\mu }m_{\nu }}}{m_{\alpha }}\delta
	^{\rho \mu }\delta ^{\mu \gamma }\delta _{\xi _{\rho }\alpha }\delta
	_{\alpha \xi _{\gamma }}\delta _{i_{\alpha }j_{_{\xi _{\rho }}}}\delta
	_{j_{\alpha }j_{\xi _{\gamma }}}\\
	&=\frac{\sqrt{m_{\mu }m_{\nu }}}{m_{\alpha }}\delta ^{\rho \mu }\delta ^{\mu \gamma
	}\delta _{\xi _{\rho }\alpha }\delta _{\alpha \xi _{\gamma }}(e_{i_{\alpha
		}j_{\alpha }})_{j_{_{\xi _{\rho }}}j_{\xi _{\gamma }}}.
	\end{split}
	\ee
	We see that in the matrix $M_{f}^{\alpha }[E_{i_{\alpha }j_{\alpha
	}}^{\mu \nu }(\alpha )]$ the only non-zero block is the block with indices $%
	(\mu ,\nu )$ and inside this block the only non-zero subblock has indices $%
	(\alpha ,\alpha )$ and this subblock is equal to the standard matrix basis
	element $e_{i_{\alpha }j_{\alpha }}$. Comparing equation~\eqref{p1} with expression for matrix elements in~\eqref{mat_ele} we see that indeed operators from~\eqref{eq:przerzutniki_wybrane} form a subset in the set of operators given  in~\eqref{przerzut-ogolnie}. This finishes the proof.
\end{proof}
If we define in standard way 
\begin{definition}
	\be
	P_{\alpha }\equiv \sum_{i_{\alpha }}E_{i_{\alpha }i_{\alpha }}^{\alpha }=\frac{%
		d_{\alpha }}{(n-2)!}\sum_{\pi \in S(n-2)}\chi ^{\alpha }(\pi ^{-1})V(\pi ).
	\ee
\end{definition}
Then we get the following corollary from equation~\eqref{p1}
\begin{corollary}
	\be
	M_{f}^{\alpha }(P_{\mu }P_{\alpha }V'P_{\nu })=M_{f}^{\alpha
	}(P_{\mu })M_{f}^{\alpha }(V')M_{f}^{\alpha }(P_{\nu }),
	\ee
	where 
	\be
	M_{f}^{\alpha }(P_{\mu })=(\delta ^{\mu \rho }\delta ^{\mu \gamma }\delta
	_{\xi _{\rho }\xi _{\gamma }}\delta _{j_{_{\xi _{\rho }}}j_{\xi _{\gamma
	}}}),
	\ee
	so it is a block diagonal matrix with only one non-zero diagonal block with
	indices $(\mu,\mu )$. The operator $P_{\alpha }$ vanishes on $RHS$
	because, from structure of $M_{f}^{\alpha }(V')$ we have 
	\be
	M_{f}^{\alpha }(P_{\alpha })M_{f}^{\alpha }(V')=M_{f}^{\alpha
	}(V').
	\ee
\end{corollary}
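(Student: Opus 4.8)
The plan is to use that $M_f^\alpha$ is a matrix representation of $\A$, hence an algebra homomorphism, and then to absorb the central factor $P_\alpha$ by means of the explicit matrix elements already at our disposal. Since $P_\mu,P_\alpha,V'$ and $P_\nu$ all lie in the algebra, I would first write
\[
M_f^\alpha(P_\mu P_\alpha V' P_\nu)=M_f^\alpha(P_\mu)\,M_f^\alpha(P_\alpha)\,M_f^\alpha(V')\,M_f^\alpha(P_\nu),
\]
so that the corollary reduces to identifying the individual factors and showing that the middle factor $M_f^\alpha(P_\alpha)$ can be dropped.

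Next I would read off the two projector matrices. The formula for $M_f^\alpha[P_\mu]$ obtained inside the proof of Theorem~\ref{przerzutniki} is exactly the block-diagonal matrix claimed in the statement: its single nonzero block sits at position $(\mu,\mu)$ and equals the identity on the $S(n-2)$ subindices. For $P_\alpha=\sum_{i_\alpha}E^\alpha_{i_\alpha i_\alpha}$ I would sum the diagonal of the matrix element of $E^\alpha_{i_\alpha j_\alpha}$ derived there; using $\sum_{i_\alpha}(e_{i_\alpha i_\alpha})_{j_{\xi_\kappa}j_{\xi_\sigma}}=\delta_{j_{\xi_\kappa}j_{\xi_\sigma}}$ this gives
\[
M_f^\alpha[P_\alpha]^{\kappa\ \sigma}_{\xi_\kappa j_{\xi_\kappa},\xi_\sigma j_{\xi_\sigma}}=\delta^{\kappa\sigma}\delta_{\xi_\kappa\alpha}\delta_{\alpha\xi_\sigma}\delta_{j_{\xi_\kappa}j_{\xi_\sigma}},
\]
i.e. $M_f^\alpha(P_\alpha)$ is the orthogonal projector onto the subblocks whose $S(n-2)$ label is $\alpha$.

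The decisive step is the absorption $M_f^\alpha(P_\alpha)M_f^\alpha(V')=M_f^\alpha(V')$. Here I would invoke equation~\eqref{blee2} of Proposition~\ref{BP20}: the factors $\delta^{\xi_\omega\alpha}\delta^{\xi_\nu\alpha}$ in $M_f^\alpha(V')$ say that, as a matrix, $V'$ is supported entirely on rows and columns whose $S(n-2)$ label equals $\alpha$, which is precisely the range of the projector $M_f^\alpha(P_\alpha)$. Multiplying the two explicit matrices, the Kronecker deltas of $M_f^\alpha(P_\alpha)$ merely pin the contracted (row) index to $\alpha$, a condition already enforced by $M_f^\alpha(V')$, so left multiplication acts as the identity. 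Substituting this back into the four-factor product deletes the $P_\alpha$ factor and yields the asserted identity $M_f^\alpha(P_\mu P_\alpha V' P_\nu)=M_f^\alpha(P_\mu)M_f^\alpha(V')M_f^\alpha(P_\nu)$.

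I do not expect a genuine obstacle: the homomorphism property is immediate and the two matrix elements are already in hand. The only point needing attention is the two-level index bookkeeping — the coarse $S(n-1)$ block labels paired with the finer $S(n-2)$ subblock labels — and checking that the support condition $\xi=\alpha$ of $M_f^\alpha(V')$ is matched against $M_f^\alpha(P_\alpha)$ on the row side rather than the column side, which is exactly what makes the left-absorption (and not some right-absorption) the statement one needs.
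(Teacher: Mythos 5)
Your proposal is correct and follows essentially the same route as the paper: the corollary is obtained there as a direct consequence of the block computations inside the proof of Theorem~\ref{przerzutniki} (the matrices of $P_{\mu}$ and $E_{i_{\alpha}j_{\alpha}}^{\alpha}$, the homomorphism property of $M_{f}^{\alpha}$, and the support structure $\delta^{\xi_{\omega}\alpha}\delta^{\xi_{\nu}\alpha}$ of $M_{f}^{\alpha}(V')$ from equation~\eqref{blee2}), which are exactly the ingredients you assemble. Your identification of $M_{f}^{\alpha}(P_{\alpha})$ as the projector onto the $\alpha$-labelled subblocks and the resulting left-absorption $M_{f}^{\alpha}(P_{\alpha})M_{f}^{\alpha}(V')=M_{f}^{\alpha}(V')$ match the paper's deduction from~\eqref{p1}, the only cosmetic difference being that you make the homomorphism step explicit rather than leaving it implicit in the matrix-element calculation.
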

Directly from Theorem~\ref{przerzutniki} and properties of $P_{\mu }$, and $E_{i_{\alpha
	}j_{\alpha }}^{\alpha }$ it follows
\begin{remark}
	\label{RR39}
	The operator $E_{i_{\alpha }j_{\alpha }}^{\mu \nu }(\alpha )$ is non-zero if and only if
	the irreps labelled by partitions $\alpha $ and $\mu ,\nu $ are in the relation 
	\be
	\mu =\alpha +\square \quad \wedge \quad \nu =\alpha +\square.
	\ee
	The total number of non-zero operators, for a given $\alpha ,$ is equal to 
	$(d_{\alpha }N[M_{f}^{\alpha }:S(n-1)])^{2}$, where $N[M_{f}^{\alpha
	}:S(n-1)] $ is the number of irreps of $S(n-1)$ in the irrep $%
	M_{f}^{\alpha }$ of the algebra $\A$.
\end{remark}
By Theorem~\ref{przerzutniki}, the operators $E_{i_{\alpha }j_{\alpha }}^{\mu \nu }(\alpha )$ satisfy the
following multiplication rule
\be
\label{mult_E}
	E_{i_{\alpha }j_{\alpha }}^{\mu \nu }(\alpha )E_{k_{\beta }l_{\beta }}^{\xi
		\omega }(\beta )=\delta _{\alpha \beta }\delta ^{\nu \xi }\delta
	_{j_{\alpha }k_{\beta }}E_{i_{\alpha }l_{\beta }}^{\mu \omega }(\alpha ),
\ee
We shall now show it also directly, by using their expression in terms of algebra elements 
\eqref{eq:przerzutniki_wybrane}.
	We have
	\be
	\begin{split}
	E_{i_{\alpha }j_{\alpha }}^{\mu \nu }(\alpha )E_{_{k\beta }l_{\beta }}^{\rho
		\sigma }(\beta )&=\frac{m_{\alpha }}{\sqrt{m_{\mu }m_{\nu }}}\frac{m_{\beta }}{\sqrt{m_{\rho
			}m_{\sigma }}}P_{\mu }E_{i_{\alpha }j_{\alpha }}^{\alpha }V'\delta _{\nu \rho }P_{\rho }E_{k_{\beta }l_{\beta }}^{\beta
	}V'P_{\sigma }\\
	&=\delta _{\nu \rho }\delta ^{\alpha \beta }\delta _{j_{\alpha }k_{\beta }}%
	\frac{m_{\alpha }}{\sqrt{m_{\mu }m_{\nu }}}\frac{m_{\beta }}{\sqrt{m_{\rho
			}m_{\sigma }}}P_{\mu }E_{i_{\alpha }l_{\alpha }}^{\alpha }V'P_{\rho }V'P_{\sigma }.
\end{split}
	\ee
	To obtain desired result the most important is to calculate $V'P_{\rho }V'$:
	\be
	\label{char}
	\begin{split}
	V'P_{\rho }V'&=\frac{d_{\rho }}{(n-1)!}%
	\sum_{\sigma \in S(n-1)}\chi ^{\rho }(\sigma ^{-1})V'V(\sigma
	)V'\\
	=&\frac{d_{\rho }}{(n-1)!}\sum_{a=1}^{n-1}\sum_{\pi \in S(n-2)}\chi ^{\rho
	}[(a,n-1)\pi ]V'V[\pi ^{-1}(a,n-1)]V'.
\end{split}
	\ee
	Now using the PRIR structure we expand the character on RHS of~\eqref{char} 
	\be
	\chi ^{\rho }[(a,n-1)\pi ]=\sum_{\xi _{\rho }j_{\xi _{\rho }}}\sum_{\gamma
		_{\rho }k_{\gamma _{\rho }}}(\psi _{R}^{\rho })_{j_{\xi _{\rho }}k_{\gamma
			_{\rho }}}^{\xi _{\rho }\gamma _{\rho }}[(a,n-1)]\delta ^{\xi _{\rho }\gamma
		_{\rho }}\varphi _{k_{\gamma _{\rho }}j_{\xi _{\rho }}}^{\gamma _{\rho
	}}(\pi )
	\ee
	and together with the identity $V'V[(a,n-1)]V'=d^{\delta _{a,n-1}}V'$
	we obtain
	\be
	\begin{split}
	&V'P_{\rho }V'=\\
	&=\frac{d_{\rho }}{(n-1)!}\sum_{a=1}^{n-1} \sum_{\pi \in S(n-2)}\sum_{\xi
		_{\rho }j_{\xi _{\rho }}}\sum_{\gamma _{\rho }k_{\gamma _{\rho }}}(\psi
	_{R}^{\rho })_{j_{\xi _{\rho }}k_{\gamma _{\rho }}}^{\xi _{\rho }\gamma
		_{\rho }}[(a,n-1)]\delta ^{\xi _{\rho }\gamma _{\rho }}\varphi _{k_{\gamma
			_{\rho }}j_{\xi _{\rho }}}^{\gamma _{\rho }}(\pi )d^{\delta
		_{a,n-1}}V'V(\pi ^{-1}).
	\end{split}
	\ee
	Further using  Proposition~\ref{propoopo} and the definition of $P_{\xi _{\rho }}$
	we find 
	\be
	V'P_{\rho }V'=\sum_{\xi _{\rho }\in \rho }%
	\frac{m_{\rho }}{m_{\xi _{\rho }}}P_{\xi _{\rho }}V',
	\ee
	Finally all together  implies that 
	\be
	E_{i_{\alpha }j_{\alpha }}^{\mu \nu }(\alpha )E_{_{k\beta }l_{\beta }}^{\rho
		\sigma }(\beta )=\delta _{\nu \rho }\delta ^{\alpha \beta }\delta
	_{j_{\alpha }k_{\beta }}\frac{m_{\alpha }}{\sqrt{m_{\mu }m_{\sigma }}}P_{\mu
	}E_{i_{\alpha }l_{\alpha }}^{\alpha }V'P_{\sigma }=\delta
	_{\nu \rho }\delta ^{\alpha \beta }\delta _{j_{\alpha }k_{\beta
	}}E_{_{i_{\alpha }}l_{\alpha }}^{\mu \sigma }(\alpha ).
	\ee
Of course since operators in~\eqref{eq:przerzutniki_wybrane} form a subset in the set of operators given by~\eqref{przerzut-ogolnie} we can also prove composition rule~\eqref{mult_E} using directly definition from~\eqref{przerzut-ogolnie}.
Using properties of $P_{\mu },P_{\alpha}$, and $E_{i_{\alpha }j_{\alpha }}^{\alpha }$ one
can deduce decomposition of $V'$ in terms of matrix operators given in Definition~\ref{d1}, namely we have
\begin{proposition}
	Operator $V'$ have the following decomposition in terms of operators $E_{i_{\alpha }i_{\alpha }}^{\alpha }$ given in Theorem~\ref{d1}
	\be
	V'=\sum_{\alpha }\sum_{\mu, \nu \in \alpha }\sum_{i_{\alpha }}\frac{\sqrt{%
			m_{\mu }m_{\nu }}}{m_{\alpha }}E_{i_{\alpha }i_{\alpha }}^{\mu \nu }(\alpha
	),
	\ee
	where summations are over partitions for which $\varphi^{\alpha} \in \widehat{S}_d(n-2)$ and 
	$\psi^{\mu} ,\psi^{\nu} \in \widehat{S}_d(n-1)$.
\end{proposition}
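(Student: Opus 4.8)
The plan is to use the semisimplicity of $\A$ recorded in Remark~\ref{split}: an element of $\A$ is determined uniquely by its images under all irreducible representations, so it suffices to check that the two sides of the claimed identity agree in every first-kind irrep $\Phi^{\beta}$ and every second-kind irrep $\Psi^{\nu}$. I would dispose of the second-kind irreps first. The left-hand side $V'=V^{t_{n}}[(n-1,n)]$ is a non-trivially partially transposed generator, hence is represented by the zero operator in every $\Psi^{\nu}$. On the right-hand side each operator $E^{\mu\nu}_{i_{\alpha}i_{\alpha}}(\alpha)$ lies in the ideal $\MM$ by Theorem~\ref{przerzutniki}, so it too vanishes on the semi-trivial part of the decomposition. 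Thus both sides vanish on all $\Psi^{\nu}$, and only the irreps $\Phi^{\beta}$ remain.

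Next I would evaluate the right-hand side inside a fixed $\Phi^{\beta}$. Since the flip operators are constructed from the basis of the representation space of $\Phi^{\alpha}$ (see~\eqref{przerzut-ogolnie}), the operator $E^{\mu\nu}_{i_{\alpha}i_{\alpha}}(\alpha)$ is supported on the single irrep $\Phi^{\alpha}$, so only the term $\alpha=\beta$ survives. Feeding the matrix elements~\eqref{mat_ele} with the block choice $\xi_{\mu}=\xi_{\nu}=\beta$ and the diagonal entry $i_{\alpha}=j_{\alpha}=i_{\beta}$, the factors $\delta^{\mu\rho}\delta^{\nu\gamma}$ collapse the double sum over $\mu,\nu\in\beta$ to $\mu=\rho$, $\nu=\gamma$, while the sum over $i_{\beta}$ gives $\sum_{i_{\beta}}\delta_{i_{\beta}i_{\xi_{\rho}}}\delta_{i_{\beta}j_{\xi_{\gamma}}}=\delta_{i_{\xi_{\rho}}j_{\xi_{\gamma}}}$. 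This produces
\[
M_{f}^{\beta}[\mathrm{RHS}]^{\rho\gamma}_{\xi_{\rho}i_{\xi_{\rho}},\,\xi_{\gamma}j_{\xi_{\gamma}}}=\frac{\sqrt{m_{\rho}m_{\gamma}}}{m_{\beta}}\,\delta_{\xi_{\rho}\beta}\,\delta_{\xi_{\gamma}\beta}\,\delta_{i_{\xi_{\rho}}j_{\xi_{\gamma}}}.
\]

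Finally I would compare this with the image of $V'$ in $\Phi^{\beta}$. Since $V'$ is the case $a=n-1$, Proposition~\ref{BP20} via~\eqref{blee2} represents $V'$ by a matrix supported exactly on the blocks with $\xi_{\omega}=\xi_{\nu}=\beta$, with value $\frac{1}{n-1}\frac{\sqrt{d_{\rho}d_{\gamma}}}{d_{\beta}}\sqrt{\lambda_{\rho}(\beta)\lambda_{\gamma}(\beta)}\,\delta_{i_{\xi_{\rho}}j_{\xi_{\gamma}}}$ after the identification of $S(n-1)$-labels $\omega\leftrightarrow\rho$, $\nu\leftrightarrow\gamma$. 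Substituting the closed form of the eigenvalues from Corollary~\ref{cor_lamda}, $\lambda_{\rho}(\beta)=(n-1)\frac{m_{\rho}d_{\beta}}{m_{\beta}d_{\rho}}$ and likewise for $\gamma$, one gets $\sqrt{\lambda_{\rho}(\beta)\lambda_{\gamma}(\beta)}=(n-1)\frac{d_{\beta}}{m_{\beta}}\frac{\sqrt{m_{\rho}m_{\gamma}}}{\sqrt{d_{\rho}d_{\gamma}}}$, so that the whole prefactor reduces to $\frac{\sqrt{m_{\rho}m_{\gamma}}}{m_{\beta}}$. Hence $M_{f}^{\beta}(V')$ agrees with $M_{f}^{\beta}[\mathrm{RHS}]$ in every $\Phi^{\beta}$, and by semisimplicity the operator identity follows. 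The main obstacle is purely bookkeeping: keeping the PRIR block indices $(\rho,\xi_{\rho},i_{\xi_{\rho}})$ straight through the collapse of the sums, and verifying that the eigenvalue-normalized prefactor of $V'$ from~\eqref{blee2}, once rewritten through Corollary~\ref{cor_lamda}, cancels precisely into the matrix-unit weight $\frac{\sqrt{m_{\mu}m_{\nu}}}{m_{\alpha}}$; this exact cancellation is what pins down the coefficient in the stated decomposition.
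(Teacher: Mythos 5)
Your proof is correct, but it takes a different route from the paper's. The paper deduces the decomposition directly at the operator level: by completeness of the Young projectors, $\sum_{\mu}P_{\mu}=\iden=\sum_{\alpha}P_{\alpha}$ and $P_{\alpha}=\sum_{i_{\alpha}}E^{\alpha}_{i_{\alpha}i_{\alpha}}$, so $V'=\sum_{\mu,\alpha,\nu}P_{\mu}P_{\alpha}V'P_{\nu}=\sum_{\alpha,\mu,\nu,i_{\alpha}}\frac{\sqrt{m_{\mu}m_{\nu}}}{m_{\alpha}}E^{\mu\nu}_{i_{\alpha}i_{\alpha}}(\alpha)$ straight from the normalization in the definition~\eqref{eq:przerzutniki_wybrane}, with Remark~\ref{RR39} killing the terms where $\mu\notin\alpha$ or $\nu\notin\alpha$ --- a two-line algebraic identity that never invokes the classification of irreps or the eigenvalue formula. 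You instead verify the identity irrep-by-irrep via semisimplicity: both sides vanish on the $\Psi^{\nu}$'s, and in each $\Phi^{\beta}$ you match the matrix elements~\eqref{mat_ele} against~\eqref{blee2}, with Corollary~\ref{cor_lamda} supplying exactly the cancellation $\frac{1}{n-1}\frac{\sqrt{d_{\rho}d_{\gamma}}}{d_{\beta}}\sqrt{\lambda_{\rho}(\beta)\lambda_{\gamma}(\beta)}=\frac{\sqrt{m_{\rho}m_{\gamma}}}{m_{\beta}}$; I checked this computation and it is right. Two remarks. First, your claim that only $\alpha=\beta$ survives leans on the full strength of Theorem~\ref{przerzutniki} (the $E$'s are flip operators, hence supported on the $\alpha$-isotypic component); if you prefer not to invoke that, note it also follows directly because $M_{f}^{\beta}(E^{\alpha}_{i_{\alpha}j_{\alpha}})$ is supported on the $S(n-2)$-subblocks labelled $\alpha$ while $M_{f}^{\beta}(V')$ is supported on those labelled $\beta$, so the product vanishes unless $\alpha=\beta$. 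Second, your choice to use~\eqref{mat_ele} rather than~\eqref{p1} is the consistent one: as printed,~\eqref{p1} retains a spurious prefactor $\frac{\sqrt{m_{\mu}m_{\nu}}}{m_{\alpha}}$ that should be absorbed by the normalization $\frac{m_{\alpha}}{\sqrt{m_{\mu}m_{\nu}}}$ in~\eqref{eq:przerzutniki_wybrane}, and your verification in fact serves as an independent consistency check of~\eqref{blee2},~\eqref{mat_ele} and Corollary~\ref{cor_lamda} that exposes this typo --- the trade-off being that your argument requires substantially more of the paper's machinery than the paper's own one-step computation.
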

Now if we define new set of operators as
\be
\label{P}
\begin{split}
	P^{\mu \nu}(\alpha)\equiv \sum_{i_{\alpha}}E_{i_{\alpha}i_{\alpha}}^{\mu \nu}(\alpha)=\frac{m_{\alpha}}{\sqrt{m_{\mu}m_{\nu}}}P_{\mu}\left(\sum_{i_{\alpha}}E_{i_{\alpha}i_{\alpha}}^{\alpha} \right)V'P_{\nu}
	=\frac{m_{\alpha}}{\sqrt{m_{\mu}m_{\nu}}}P_{\mu}P_{\alpha}V'P_{\nu},
\end{split}
\ee
then together with expression~\eqref{mult_E} we can formulate the following
\begin{corollary}
	The operators $P^{\mu \nu}(\alpha)$ given in~\eqref{P} satisfy
	\be
	P^{\mu \nu}(\alpha)P^{\xi \omega}(\beta)=\delta^{\nu \xi}\delta_{\alpha \beta}P^{\mu \omega}(\beta).
	\ee
\end{corollary}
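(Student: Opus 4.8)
The plan is to reduce the corollary directly to the multiplication rule~\eqref{mult_E} already established for the elementary operators $E_{i_{\alpha }j_{\alpha }}^{\mu \nu }(\alpha )$. Since $P^{\mu \nu}(\alpha)$ is by definition~\eqref{P} the diagonal sum $\sum_{i_{\alpha}}E_{i_{\alpha}i_{\alpha}}^{\mu \nu}(\alpha)$, I would start by writing the product as a double sum
\be
P^{\mu \nu}(\alpha)P^{\xi \omega}(\beta)=\sum_{i_{\alpha}}\sum_{k_{\beta}}E_{i_{\alpha}i_{\alpha}}^{\mu \nu}(\alpha)\,E_{k_{\beta}k_{\beta}}^{\xi \omega}(\beta),
\ee
and then substitute the composition law~\eqref{mult_E} term by term, specializing the general rule to the case $j_{\alpha}=i_{\alpha}$ and $l_{\beta}=k_{\beta}$.

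The key step is the bookkeeping of Kronecker deltas. Applying~\eqref{mult_E} gives
\be
E_{i_{\alpha}i_{\alpha}}^{\mu \nu}(\alpha)\,E_{k_{\beta}k_{\beta}}^{\xi \omega}(\beta)=\delta_{\alpha \beta}\delta^{\nu \xi}\delta_{i_{\alpha}k_{\beta}}\,E_{i_{\alpha}k_{\beta}}^{\mu \omega}(\alpha),
\ee
so the product is governed by three independent constraints: $\delta_{\alpha\beta}$ forces the two PRIR labels of $S(n-2)$ to coincide, $\delta^{\nu\xi}$ matches the inner $S(n-1)$ labels, and $\delta_{i_{\alpha}k_{\beta}}$ couples the two summation variables. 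I would emphasize that the factor $\delta_{\alpha\beta}$ is what makes the collapse of $\delta_{i_{\alpha}k_{\beta}}$ meaningful: only when $\alpha=\beta$ do $i_{\alpha}$ and $k_{\beta}$ range over the same index set $\{1,\dots,d_{\alpha}\}$, so that the double sum over $i_{\alpha},k_{\beta}$ reduces to a single diagonal sum.

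Carrying out the summation then yields
\be
P^{\mu \nu}(\alpha)P^{\xi \omega}(\beta)=\delta_{\alpha \beta}\delta^{\nu \xi}\sum_{i_{\alpha}}E_{i_{\alpha}i_{\alpha}}^{\mu \omega}(\alpha)=\delta_{\alpha \beta}\delta^{\nu \xi}\,P^{\mu \omega}(\alpha),
\ee
where in the last equality I re-apply the definition~\eqref{P}. Since the overall prefactor already contains $\delta_{\alpha\beta}$, writing $P^{\mu \omega}(\alpha)$ or $P^{\mu \omega}(\beta)$ is immaterial, which matches the stated form of the corollary.

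I do not expect any genuine obstacle here: the result is a formal consequence of~\eqref{mult_E}, and the only thing requiring care is the index matching across the Kronecker deltas and the justification that the two summation variables may be identified only on the diagonal block $\alpha=\beta$. If one wished to avoid relying on~\eqref{mult_E}, the alternative route would be to expand both operators through their definitions~\eqref{eq:przerzutniki_wybrane} in terms of $P_{\mu}$, $E_{i_{\alpha}j_{\alpha}}^{\alpha}$ and $V'$, and reuse the computation of $V'P_{\rho}V'$ performed just above in the proof of the multiplication rule; but the route through~\eqref{mult_E} is strictly shorter and is the one I would present.
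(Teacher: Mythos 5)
Your proof is correct and follows exactly the route the paper intends: the corollary is stated there as an immediate consequence of the multiplication rule~\eqref{mult_E} applied to the diagonal sums defining $P^{\mu\nu}(\alpha)$ in~\eqref{P}, which is precisely your double-sum argument with the Kronecker deltas $\delta_{\alpha\beta}$, $\delta^{\nu\xi}$, $\delta_{i_{\alpha}k_{\beta}}$ collapsing the sum. Your care about identifying the summation indices only on the block $\alpha=\beta$ is a correct (and slightly more explicit) rendering of what the paper leaves implicit.
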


The operators $E_{i_{\alpha }j_{\alpha }}^{\mu \nu }(\alpha )$ for a
given $\alpha ,$ form an algebra isomorphic with the matrix algebra $%
\M(d_{\alpha }N[M_{f}^{\alpha }:S(n-1)],\mathbb{C})\subset \M(d^{n},\mathbb{C})$. 
Directly from the multiplication rule for the operators $E_{i_{\alpha
	}j_{\alpha }}^{\mu \nu }(\alpha )$ we have also

\begin{corollary}
	The subset of $d_{\alpha}^{2}$ operators of the form $E_{i_{\alpha
		}j_{\alpha }}^{\nu \nu }(\alpha )$ satisfy 
	\be
	E_{i_{\alpha }j_{\alpha }}^{\nu \nu }(\alpha )E_{_{k_{\alpha }}l_{\alpha
	}}^{\nu \nu }(\alpha )=\delta _{j_{\alpha }k_{\alpha }}E_{i_{\alpha
		}l_{\alpha }}^{\nu \nu }(\alpha ), 
	\ee
	so it forms a subalgebra isomorphic with the matrix algebra $\M(d_{\alpha },\mathbb{C})$, but again as matrices the operators $E_{i_{\alpha }j_{\alpha }}^{\nu \nu
	}(\alpha )$ belong to matrix algebra $\M(d^{n},\mathbb{C})$.
\end{corollary}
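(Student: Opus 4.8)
The plan is to obtain this corollary as a pure specialisation of the multiplication rule~\eqref{mult_E}, which has already been established through Theorem~\ref{przerzutniki}. I would start from that identity and fix a single irrep $\nu$ of $S(n-1)$, setting the upper index pair of \emph{both} factors equal to $(\nu,\nu)$ and forcing the second factor to live over the same $\alpha$ (i.e. $\beta=\alpha$). The outer Kronecker symbols in~\eqref{mult_E} then collapse, $\delta_{\alpha\alpha}=\delta^{\nu\nu}=1$, and the only surviving constraint is $\delta_{j_{\alpha}k_{\alpha}}$, so that
\be
E_{i_{\alpha }j_{\alpha }}^{\nu \nu }(\alpha )E_{k_{\alpha }l_{\alpha
}}^{\nu \nu }(\alpha )=\delta _{j_{\alpha }k_{\alpha }}E_{i_{\alpha
}l_{\alpha }}^{\nu \nu }(\alpha ),
\ee
which is exactly the asserted relation. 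This is the entire computational content of the corollary.

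The second step is to read this as the defining relation of the matrix units of $\M(d_{\alpha},\mathbb{C})$. I would introduce the linear map determined by $E_{i_{\alpha }j_{\alpha }}^{\nu \nu }(\alpha )\mapsto e_{i_{\alpha }j_{\alpha }}$, where $e_{i_{\alpha }j_{\alpha }}$ is the standard matrix unit carrying a single $1$ in position $(i_{\alpha},j_{\alpha})$ and the indices run over $1,\ldots,d_{\alpha}$. The relation just derived shows the map is multiplicative, it is linear by construction, and since there are $d_{\alpha}^{2}$ operators matched against the $d_{\alpha}^{2}$ matrix units $e_{i_{\alpha}j_{\alpha}}$, the isomorphism claim reduces to verifying that the map is a bijection, equivalently that the $E_{i_{\alpha }j_{\alpha }}^{\nu \nu }(\alpha )$ are linearly independent.

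For independence I would invoke the explicit matrix elements~\eqref{p1}, specialised to $\mu=\nu$: in the reduced basis $f$ the operator $E_{i_{\alpha }j_{\alpha }}^{\nu \nu }(\alpha )$ is represented by a matrix whose single non-zero entry sits in the $(\nu,\nu)$ block and, inside it, in the $(\alpha,\alpha)$ subblock at position $(i_{\alpha},j_{\alpha})$. Distinct pairs $(i_{\alpha},j_{\alpha})$ therefore produce matrices with disjoint support, so the family is linearly independent, the map is bijective, and hence an algebra isomorphism onto $\M(d_{\alpha},\mathbb{C})$. The closing clause is immediate: being endomorphisms of $(\mathbb{C}^{d})^{\otimes n}$, these operators sit inside $\M(d^{n},\mathbb{C})$ as matrices, even though they close up into the much smaller matrix algebra $\M(d_{\alpha},\mathbb{C})$.

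I do not anticipate any genuine obstacle, since all the substantive work resides in establishing~\eqref{mult_E}; the corollary is a specialisation together with the textbook identification of matrix units. The only point requiring a line of care is that these $d_{\alpha}^{2}$ operators are genuinely non-zero and independent rather than collapsing—this is guaranteed by Remark~\ref{RR39} (non-vanishing forces $\nu=\alpha+\square$) and by the single-entry form of~\eqref{p1}—so that the homomorphism is a true isomorphism and not merely a surjection onto a quotient.
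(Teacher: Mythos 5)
Your proposal is correct and takes essentially the same route as the paper, which obtains the relation by direct specialisation of the multiplication rule~\eqref{mult_E} to $\mu=\nu=\xi=\omega$ and $\beta=\alpha$. Your extra verification that the $d_{\alpha}^{2}$ operators are non-zero and linearly independent (via Remark~\ref{RR39} and the single-entry matrix form~\eqref{p1}) only makes explicit what the paper leaves implicit in asserting the isomorphism with $\M(d_{\alpha},\mathbb{C})$.
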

From equation~\eqref{mult_E} it follows that operators $E_{i_{\alpha }j_{\alpha }}^{\mu
	\nu }(\alpha )$ may be represented by the standard elementary matrices $%
e_{ij}\in \M(d_{\alpha }N[M_{f}^{\alpha }:S(n-1)],\mathbb{C})$, 
but this matrix representation is not equivalent to matrix representation
in the irreps $M_{f}^{\alpha }$ of the algebra $\A$.
% Because the operators $E_{i_{\alpha }j_{\alpha }}^{\mu \nu }(\alpha )$
%belong to the ideal $\MM$, we may also see how they represent in the basis irreps $M_{f}^{\alpha }$ of the algebra $\A$.

\section{Application to deterministic port-based teleportation}
\label{Appl_PBT}
%\label{intro_PBT}
\bigskip As we mentioned at the beginning algebraic tools described in previous sections have explicit connection with a novel  port-based teleportation protocols (PBT) introduced and analysed for qubit case in the series of papers~\cite{ishizaka_asymptotic_2008,ishizaka_quantum_2009,ishizaka_remarks_2015} and extended to the qudit case: partially using graphical representation of Temperley-Lieb algebra in~\cite{wang_higher-dimensional_2016} and fully in~\cite{Stu2017,Moz2017b}. It can be shown that description  of the probabilistic and deterministic PBT can be write down purely in terms of characteristic of the algebra $\A$ and it is not only different description of the problem. Namely only using representation theory of $\A$ we are able to describe PBT in dimensions higher than two in a efficient way. 

% Moreover algebra of the partially transposed permutation operators plays crucial role in the description  $1\rightarrow N$ universal quantum cloning machines~\cite{Stu2014} where an admissible region of fidelities allowed by Quantum Mechanics was obtained.
%Because of the wide range of applications better understanding and simplification of the underlying algebra is needed. 
\subsection{Port-based Teleportation Protocol and Algebra $\mathcal{A}_n^{t_n}(d)$}
Hereunder  we give a brief description of the deterministic port-based teleportation (dPBT) protocol and its connection with the algebra of partially transposed permutation operators  with respect to last subsystem. 
%$\A$~\cite{Stu1,Moz1}. 
%For detailed discussion we recommend original papers devoted to qubit PBT~\cite{ishizaka_remarks_2015,ishizaka_asymptotic_2008,ishizaka_quantum_2009} with extensions to the qudit case~\cite{Moz2017b},~\cite{Stu2017}. 
We present connection between PBT operator encoding the performance of teleportation protocol and matrices $Q(\alpha)$ which encode properties of underlying algebra $\A$. Further we apply simplified formalism of the algebra $\A$ to present alternative proofs of theorems for entanglement fidelity $F$ in the case of dPBT.  Using PRIR basis we solve an eigen-problem for the generators $V'[(a,n)]$, where $a=1,\ldots,n-1$ which allows us to present discussion about asymptotic behaviour of $F$ as well re-derive lower bound on $F$ founded previously in~\cite{beigi_konig}. It is worth to mention that having full spectral analysis of $V'[(a,n)]$ gives us possibilities for further investigations of hybrid scheme of PBT for qudits~\cite{ishizaka_remarks_2015}. 

In the standard version of the PBT protocol Alice and Bob share a large resource state composed of $N$ copies of the maximally entangled state $|\psi^+\>$. Each copy is a two-qudit state named as port. Alice wishes to teleport to Bob an unknown state $\theta$. Do do so she performs a joint measurement from the set of POVM $\{\Pi_i\}_{i=1}^N$ on her half of the resource state and the unknown state $\theta$ and obtains an outcome $i\in\{1,\ldots,N\}$, and communicates it to Bob. When Bob receives the information from Alice he discards all the ports except $i$-th which is the teleported state. The most important features of this kind of protocol is lack of unitary correction as in the ordinary teleportation scheme~\cite{bennett_teleporting_1993} and fact that teleported state is always successfully teleported but it is distorted. It means that the fidelity $F$ between unknown state $\theta$ and teleported state is smaller than one and is function of $N$ as well as local Hilbert space dimension $d$.
Namely we have the following expression for the fidelity in the mentioned scenario
\be
\label{F_1}
F=\frac{1}{d^2}\sum_{i=1}^N\tr\left[\sigma_i \widetilde{\rho}^{-1/2}\sigma_i \widetilde{\rho}^{-1/2} \right], 
\ee 
where $N=n+1$.
In the above $\widetilde{\rho}$ is called PBT operator and together with operators $\sigma_i$ has the following representation in terms of partially transposed permutation operators $V'[(i,n)]$ for $i=1,\ldots,n-1$
\be
\label{PBT_op}
\sigma_i=\frac{1}{d^N}V'[(i,n)],\quad \widetilde{\rho}=\sum_{i=1}^N\sigma_i=\frac{1}{d^N}\sum_{i=1}^NV'[(i,n)].
\ee 
From this construction it follows clearly that PBT operator $\rho$ is an element of previously studied algebra of the partially transposed permutation operators with respect to last subsystem $\A$.   We know that algebra $\A$ decomposes into direct sum of two ideals, i.e. $\A=\mathcal{M}\oplus \mathcal{S}$ and for analysis of the dPBT scheme knowledge only about ideal $\mathcal{M}$ is crucial. What is the most important (see~\cite{Stu2017}) that due to symmetries in the system operator $\rho$ is diagonal in the blocks represented by projectors $F_{\nu}(\alpha)$ spanning irreps  contained in $\mathcal{M}$ of the algebra $\A$ and can be written as
\be
\label{pbt1}
\widetilde{\rho}=\frac{1}{d^N}\rho=\frac{1}{d^N}\sum_{\alpha:h(\alpha)\leq d} \ \sum_{\mu \in \Phi^{\alpha}}\lambda_{\nu}(\alpha)F_{\nu}(\alpha),
\ee
where the numbers
\be
\lambda_{\nu}(\alpha)=N\frac{m_{\nu}d_{\alpha}}{m_{\alpha}d_{\nu}}
\ee
are eigenvalues of the sum $\rho=\sum_{i=1}^NV'[(i,n)]$.
\subsection{Spectrum of the PBT operator and its connection with matrix $Q(\alpha)$}
\label{spec_prop}
In this section we focus on spectral analysis of the operator $\rho$ given in equation~\eqref{pbt1}.
The first spectral property of the operator $\rho$, which one
can easily calculate using Prop.~\ref{mult_a} is the following

\begin{proposition}
\be
\tr \rho =(n-1)d^{n-1}, 
\ee
where the trace is taken in the space $(\mathbb{C}^{d})^{\otimes n}.$
\end{proposition}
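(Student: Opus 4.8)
The plan is to compute the trace directly, summand by summand, exploiting the fact that $\rho$ is a sum of the $n-1$ partially transposed transposition operators $V'[(i,n)]$ for $i=1,\ldots,n-1$. By linearity of the trace it suffices to evaluate $\tr V'[(i,n)]$ for a single $i$ and to note that all $n-1$ contributions are identical.

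First I would reduce the partially transposed operator to an ordinary permutation operator at the level of the trace. The key observation is that the partial transposition $t_n$ on the last tensor factor leaves the full trace invariant: writing out the matrix elements in the product basis one has $(V^{t_n}[(i,n)])_{j_1\cdots j_n,\,k_1\cdots k_n}=(V[(i,n)])_{j_1\cdots j_{n-1}k_n,\,k_1\cdots k_{n-1}j_n}$, so setting $\mathbf{k}=\mathbf{j}$ and summing reproduces exactly $\tr V[(i,n)]$. Equivalently, one may count directly: the diagonal entries of $V'[(i,n)]$ reduce to $\delta_{j_i j_n}$, hence are supported precisely on the tuples with $j_i=j_n$, of which there are $d^{\,n-1}$. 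Either way, $\tr V'[(i,n)]=\tr V[(i,n)]$.

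Next I would invoke Proposition~\ref{mult_a}, which records that the character of the permutation representation is $\chi^{V}(\sigma)=\tr V(\sigma)=d^{\,l(\sigma)}$, where $l(\sigma)$ is the number of cycles of $\sigma$. Since $(i,n)$ is a single transposition in $S(n)$, its cycle decomposition consists of one $2$-cycle together with $n-2$ fixed points, so $l((i,n))=n-1$ and therefore $\tr V'[(i,n)]=\tr V[(i,n)]=d^{\,n-1}$. Adding the identical contribution of each of the $n-1$ generators gives $\tr\rho=(n-1)\,d^{\,n-1}$, as claimed.

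There is essentially no obstacle here: the argument is routine once the trace invariance of the partial transpose is recorded, and the \emph{only} step meriting a line of justification is precisely that invariance (or, alternatively, the direct count of fixed diagonal tuples); thereafter the cycle count and the final summation are immediate. I would deliberately avoid the more roundabout route through the spectral decomposition~\eqref{pbt1}, which would force one to combine Corollary~\ref{cor_lamda} for the eigenvalues $\lambda_\nu(\alpha)$ with the projector trace~\eqref{FF2} and then resum over $\alpha$ and $\nu$; the generator-by-generator calculation is far shorter and transparent.
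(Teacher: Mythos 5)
Your proof is correct and takes essentially the same route as the paper, which states that the proposition is ``easily calculated using Prop.~\ref{mult_a}'': since partial transposition on the last factor preserves the full trace, $\tr V'[(i,n)]=\tr V[(i,n)]=d^{\,l((i,n))}=d^{\,n-1}$ by the character formula $\chi^{V}(\sigma)=d^{\,l(\sigma)}$, and summing over the $n-1$ generators gives $(n-1)d^{\,n-1}$. Your direct diagonal count via $\delta_{j_i j_n}$ is a fine equivalent justification, and avoiding the spectral decomposition~\eqref{pbt1} is indeed the right call here.
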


The next step to describe the spectrum of $\rho$ is to find the
matrix form $M^{R}(\rho)$, where $R\in\{\Phi ^{\alpha },\Psi ^{\nu }\}$, of this the operator $\rho$ in the irreps $\Phi ^{\alpha },\Psi ^{\nu }$ of the
algebra $\mathcal{A}_{n}^{t_{n}}(d)$. This of course depends on choice of the basis in the irreps $\Phi ^{\alpha },\Psi ^{\nu }$. For example in the irrep $\Phi^{\alpha }$ in the basis $E(\alpha )=\{e_{ij}^{ab}(\alpha ):a,b=1,\ldots,n-1, \ i,j=1,\ldots,d_{\alpha }\}$ (see~\cite{Moz1}) we have 
\be
M_{E(\alpha )}^{\Phi ^{\alpha }}(\rho)=Q(\alpha ).
\ee
Above result holds only when $d\geq n-1,$ nevertheless it shows a
connection between spectra of matrices $Q(\alpha )$ and $\rho$.
Using so called reduced basis $%
f\equiv\{f_{j_{\nu} }^{\nu }:h(\nu )\leq d,\quad j_{\nu} =1,\ldots,d_{\nu }\}$ in the irreps $\Phi ^{\alpha }$
and arbitrary basis in the irreps $\Psi ^{\nu }$ one can prove much
stronger result

\begin{proposition}
For any $n,d\geq 2$ we have 
\be
M_{f}^{\Phi ^{\alpha }}(\rho )=\operatorname{diag}(\lambda _{\nu
}(\alpha ))\in \M(\rank Q(\alpha ),\mathbb{C}), 
\ee
where $\lambda _{\nu }(\alpha )$ are all non-zero eigenvalues of the matrix $%
Q(\alpha )$ including their multiplicities. For the irreps of second kind 
$\Psi ^{\nu }$\ we have 
\be
M_{B}^{\Psi ^{\nu }}(\rho )=0\in \M(d_{\nu },\mathbb{C}), 
\ee
for any basis $B$ in the irrep $\Psi ^{\nu }$. 
%The second statement
%follows trivially from the structure or irreps of the second kind (see
%Prop.10).
\end{proposition}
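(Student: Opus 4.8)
The plan is to treat the two kinds of irreps separately, using that $\rho=\sum_{a=1}^{n-1}V'[(a,n)]$ is the sum of exactly the non-trivially transposed generators (consistent with $\tr\rho=(n-1)d^{n-1}$ from the preceding proposition). For the second-kind irreps $\Psi^{\nu}$ the claim is immediate: by the defining property of $\Psi^{\nu}$, each generator $V'[(a,n)]$ with $a=1,\dots,n-1$ lies outside $S(n-1)$ and is therefore represented by the zero operator, so $M_{B}^{\Psi^{\nu}}(\rho)=\sum_{a=1}^{n-1}\Psi^{\nu}(V'[(a,n)])=0$ for any basis $B$.

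For the first-kind irreps $\Phi^{\alpha}$ I would compute $M_{f}^{\Phi^{\alpha}}(\rho)$ directly from the explicit matrix elements in Proposition~\ref{BP20}. Summing \eqref{blee1} over $a=1,\dots,n-1$ pulls out the prefactor $\tfrac{1}{n-1}\tfrac{\sqrt{d_{\nu}d_{\omega}}}{d_{\alpha}}\sqrt{\lambda_{\omega}(\alpha)\lambda_{\nu}(\alpha)}$ and leaves the bilinear sum $\sum_{a=1}^{n-1}\sum_{k_{\alpha}}(\psi_{R}^{\omega})^{\xi_{\omega}\alpha}_{j_{\xi_{\omega}}k_{\alpha}}[(a,n-1)]\,(\psi_{R}^{\nu})^{\alpha\xi_{\nu}}_{k_{\alpha}j_{\xi_{\nu}}}[(a,n-1)]$. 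This is exactly the left-hand side of the summation rule \eqref{eqAP13} of Proposition~\ref{AP13}, applied to $S(n-1)$ (that is, with $n$ replaced by $n-1$ and the distinguished point $n$ by $n-1$): the PRIRs $\psi_{R}^{\omega},\psi_{R}^{\nu}$ play the role of the $S(n-1)$-irreps, while the contracted middle label $\alpha$ plays the role of the inner $S(n-2)$-irrep $\beta$ in \eqref{eqAP13}.

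Applying Proposition~\ref{AP13} collapses the sum to $(n-1)\tfrac{d_{\alpha}}{d_{\omega}}\delta^{\omega\nu}\delta^{\xi_{\omega}\xi_{\nu}}\delta_{j_{\xi_{\omega}}j_{\xi_{\nu}}}$. Substituting back, the $(n-1)$ cancels the $\tfrac{1}{n-1}$, the factor $\delta^{\omega\nu}$ forces $\omega=\nu$ (hence $d_{\omega}=d_{\nu}$ and $\lambda_{\omega}(\alpha)=\lambda_{\nu}(\alpha)$), and the remaining ratio $\tfrac{\sqrt{d_{\nu}d_{\omega}}}{d_{\alpha}}\cdot\tfrac{d_{\alpha}}{d_{\omega}}=1$, leaving $\lambda_{\nu}(\alpha)\,\delta^{\omega\nu}\delta^{\xi_{\omega}\xi_{\nu}}\delta_{j_{\xi_{\omega}}j_{\xi_{\nu}}}$. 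This is a diagonal matrix whose entries are the $\lambda_{\nu}(\alpha)$; for fixed $\nu$ the value $\lambda_{\nu}(\alpha)$ occurs once per pair $(\xi_{\nu},j_{\xi_{\nu}})$, i.e. with total multiplicity $\sum_{\xi_{\nu}\in\nu}d_{\xi_{\nu}}=d_{\nu}$, matching the multiplicity of the eigenvalue of $Q(\alpha)$ recorded in Theorem~\ref{thm18}. Because the reduced basis $f$ retains only $\nu$ with $h(\nu)\le d$, these are precisely the non-zero eigenvalues, so $M_{f}^{\Phi^{\alpha}}(\rho)=\operatorname{diag}(\lambda_{\nu}(\alpha))\in\M(\rank Q(\alpha),\mathbb{C})$.

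The only real obstacle is the index bookkeeping in matching \eqref{blee1} to \eqref{eqAP13} — in particular identifying the contracted index $\alpha$ with the inner label $\beta$ of Proposition~\ref{AP13} and verifying that the dimensional prefactors cancel exactly to leave $\lambda_{\nu}(\alpha)$ with no residual constant. As a cross-check and a slicker alternative, one can instead use the spectral decomposition $\rho=\sum_{\alpha}\sum_{\nu}\lambda_{\nu}(\alpha)F_{\nu}(\alpha)$ together with Lemma~\ref{lemma_Mf}: since $M_{f}^{\Phi^{\alpha}}(F_{\nu}(\beta))$ vanishes for $\beta\neq\alpha$ and equals the canonical diagonal projector onto the $\psi^{\nu}$-block when $\beta=\alpha$, the diagonal form $\operatorname{diag}(\lambda_{\nu}(\alpha))$ follows at once, while the vanishing on $\Psi^{\nu}$ reflects that the ideal $\cS$ carries no $F_{\nu}(\alpha)$.
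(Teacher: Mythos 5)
Your proof is correct and follows essentially the route the paper intends: the paper states this proposition without an explicit proof, but the justification supplied by its own machinery is exactly your computation --- sum the matrix elements of Proposition~\ref{BP20} over $a=1,\dots,n-1$ and collapse the bilinear sum via the PRIR summation rule of Proposition~\ref{AP13} applied to $S(n-1)$ (with inner label $\alpha$ playing the role of $\beta$), while the vanishing on the second-kind irreps $\Psi^{\nu}$ is immediate since the non-trivially transposed generators are represented by zero there. Your bookkeeping also checks out: the shifted summation rule yields $(n-1)\tfrac{d_{\alpha}}{d_{\omega}}\delta^{\omega\nu}$, which cancels the prefactor to leave exactly $\lambda_{\nu}(\alpha)$, each $\nu$ contributes multiplicity $\sum_{\xi_{\nu}\in\nu}d_{\xi_{\nu}}=d_{\nu}$, and the restriction $h(\nu)\le d$ in the reduced basis $f$ corresponds precisely to the non-zero eigenvalues of $Q(\alpha)$, giving total dimension $\rank Q(\alpha)$.
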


From the above proposition and from Theorem~\ref{decompA}  we
deduce one of the main result of this manuscript, the structure of the spectrum of the
operator $\rho$.

\begin{theorem}
Let $\mathcal{A}_{n}^{t_{n}}(d)$ be the algebra of partially transposed operators in $(
\mathbb{C}^{d})^{\otimes n}$. Then for any $n,d\geq 2$ the non-zero eigenvalues of the
operator $\rho$ are the nonzero eigenvalues $\lambda _{\nu
}(\alpha ),$ including multiplicity, of the matrices $Q(\alpha )$, where irreps $\alpha $ of $S(n-2)$ are those which appear in the decomposition of $
\mathcal{A}_{n}^{t_{n}}(d)$ in Theorem~\ref{decompA} i.e. when $h(\alpha )\leq d$. The multiplicity $%
m_{\nu ,\alpha }$ of the eigenvalue $\lambda _{\nu }(\alpha )$ of operator $
\rho$ is equal to
\be
m_{\nu ,\alpha }=m_{\alpha }d_{\nu }, 
\ee
where $m_{\alpha}$ is given in Theorem~\ref{decompA} and the multiplicity $m_{0}$ of the
eigenvalue $0$ of operator $\rho$ is equal to%
\be
m_{0}=d^{n-1}(d-n+1). 
\ee
\end{theorem}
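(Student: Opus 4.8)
The plan is to diagonalise $\rho$ one irreducible block at a time, combining the decomposition of the natural representation of $\A$ from Theorem~\ref{decompA} with the block forms of $\rho$ established just above. Since $\rho\in\A$ and the natural action on $(\mathbb{C}^{d})^{\otimes n}$ splits as
\be
\bigoplus_{\alpha:h(\alpha)\le d} m_{\alpha}\,S(\Phi^{\alpha})\ \oplus\ \bigoplus_{\nu:h(\nu)<d} m_{\nu}\,S(\Psi^{\nu}),
\ee
the spectrum of $\rho$, counted with multiplicity, is the union of the spectra of the matrices $M^{R}(\rho)$ over all irreducible summands $R$.

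I would first settle the non-zero eigenvalues. The preceding Proposition gives $M_{f}^{\Phi^{\alpha}}(\rho)=\operatorname{diag}(\lambda_{\nu}(\alpha))$ in the reduced basis $f$, whose entries are exactly the non-zero eigenvalues of $Q(\alpha)$, whereas $M_{B}^{\Psi^{\nu}}(\rho)=0$ in every basis $B$. Thus the semi-trivial blocks $\Psi^{\nu}$ contribute only the eigenvalue $0$, and each non-zero eigenvalue of $\rho$ is some $\lambda_{\nu}(\alpha)$ arising from a block $\Phi^{\alpha}$ with $h(\alpha)\le d$. Because $\rho=\sum_{a}V'[(a,n)]$ is invariant under conjugation by $V(\sigma_{n-1})$ for $\sigma_{n-1}\in S(n-1)$ (such a conjugation only permutes the summands), it acts as the scalar $\lambda_{\nu}(\alpha)$ on the $d_{\nu}$-dimensional $\psi^{\nu}$-component of $\Phi^{\alpha}$, which by simple reducibility of the induced representation occurs exactly once. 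Hence $\lambda_{\nu}(\alpha)$ has multiplicity $d_{\nu}$ within one copy of $\Phi^{\alpha}$, and the $m_{\alpha}$ copies present yield $m_{\nu,\alpha}=m_{\alpha}d_{\nu}$.

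It remains to pin down the multiplicity $m_{0}$ of the eigenvalue $0$, the only genuine computation. Because $M_{f}^{\Phi^{\alpha}}(\rho)$ retains only the non-zero eigenvalues, $\rho$ is invertible on each $\Phi^{\alpha}$-block, so its kernel is the span of the $\Psi^{\nu}$-blocks and
\be
m_{0}=d^{\,n}-\rank\rho,\qquad \rank\rho=\sum_{\alpha:h(\alpha)\le d} m_{\alpha}\,\rank Q(\alpha).
\ee
I would evaluate this by writing $\rank Q(\alpha)=\sum_{\nu\in\alpha,\,h(\nu)\le d} d_{\nu}$ and summing against the multiplicities $m_{\alpha}$ of Proposition~\ref{mult_a}; collapsing the resulting dimension sums through the induced-module dimension $\dim\ind_{S(n-2)}^{S(n-1)}(\varphi^{\alpha})=(n-1)d_{\alpha}$ and the Schur--Weyl identity $\sum_{\alpha}m_{\alpha}d_{\alpha}=d^{\,n-2}$ then produces the closed form for $m_{0}$.

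The delicate step is exactly this dimension count. For small $d$ one cannot simply replace $\rank Q(\alpha)$ by the full induced dimension $(n-1)d_{\alpha}$: when $h(\alpha)=d$ the spectral theorem (Theorem~\ref{thm18}) permits one vanishing eigenvalue of $Q(\alpha)$, corresponding to the induced constituent $\psi^{\nu}$ with $h(\nu)=d+1$ that the height bound deletes from $S(\Phi^{\alpha})$. Correctly tracking these truncated terms --- so that the honest count of non-zero eigenvalues, and not the naive induced dimension, enters $\rank\rho$ --- is the crux of the argument. As an independent check I would verify that the weighted sum $\sum_{\alpha,\nu}m_{\alpha}d_{\nu}\lambda_{\nu}(\alpha)$ reproduces $\tr\rho=(n-1)d^{\,n-1}$ from the earlier Proposition, which constrains the eigenvalue-weighted total separately from the rank.
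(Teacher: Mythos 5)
Your overall strategy coincides with the paper's: the paper obtains this theorem in one line from the preceding proposition (which gives $M_{f}^{\Phi^{\alpha}}(\rho)=\operatorname{diag}(\lambda_{\nu}(\alpha))$ and $M_{B}^{\Psi^{\nu}}(\rho)=0$) together with Theorem~\ref{decompA}, and your treatment of the non-zero spectrum --- Schur's lemma on the multiplicity-free $\psi^{\nu}$-isotypic components inside each $\Phi^{\alpha}$, multiplied by the $m_{\alpha}$ copies --- is a correct filling-in of that deduction. The genuine gap is the step you left unexecuted. You assert that the rank count ``produces the closed form for $m_{0}$'', but if you actually carry it out it produces a \emph{different} closed form. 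For $d\geq n-1$ no induced constituent is truncated and no eigenvalue of $Q(\alpha)$ vanishes, so $\rank Q(\alpha)=(n-1)d_{\alpha}$ and
\be
\rank\rho=\sum_{\alpha:h(\alpha)\leq d}m_{\alpha}(n-1)d_{\alpha}=(n-1)d^{\,n-2},
\qquad
m_{0}=d^{\,n}-(n-1)d^{\,n-2}=d^{\,n-2}\left(d^{2}-n+1\right),
\ee
which does not equal the stated $d^{\,n-1}(d-n+1)=d^{\,n}-(n-1)d^{\,n-1}$; the two agree only when $d=1$ or $n=1$. The stated value is in fact impossible on elementary grounds: each $V'[(a,n)]$ is $d$ times a rank-$d^{\,n-2}$ projector (a maximally entangled state on systems $a,n$ tensored with the identity), so $\rank\rho\leq(n-1)d^{\,n-2}$, whereas the stated $m_{0}$ would force $\rank\rho=(n-1)d^{\,n-1}$. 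A sanity check at $n=2$: $\rho=V'[(1,2)]$ has the single non-zero eigenvalue $d$, so $m_{0}=d^{2}-1$, matching $d^{\,n-2}(d^{2}-n+1)$ and not $d(d-1)$. Your correct method, and the paper's own Corollary~\ref{cor7} (which yields $m_{0}=\sum_{\nu:h(\nu)<d}M_{\nu}\dim\psi^{\nu}$), both point to $d^{\,n-2}\left(d^{2}-(n-1)\right)$; the formula printed in the statement is evidently a misprint, and a blind proof attempt cannot honestly ``derive'' it.

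Your instinct about the delicate regime $d\leq n-2$ is right and worth finishing: when $h(\alpha)=d$ the deleted constituent $\nu^{*}(\alpha)$ with $h(\nu^{*})=d+1$ is precisely the zero eigenvalue $\lambda=d+0+1-(d+1)=0$ of $Q(\alpha)$ from Theorem~\ref{thm18}~a), so $\rank Q(\alpha)$ always equals $\dim S(\Phi^{\alpha})$, and one gets
\be
m_{0}=d^{\,n-2}\left(d^{2}-n+1\right)+\sum_{\alpha:h(\alpha)=d}m_{\alpha}\,d_{\nu^{*}(\alpha)},
\ee
so no single polynomial formula of the advertised kind survives there; e.g.\ at $n=4$, $d=2$ one finds $\rank\rho=11$ and $m_{0}=5$, while the statement's formula gives the negative value $-8$ (and at $n=3$, $d=2$ it gives $0$ against the true $m_{0}=4$). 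Finally, note that your proposed consistency check $\sum_{\alpha,\nu}m_{\alpha}d_{\nu}\lambda_{\nu}(\alpha)=\tr\rho=(n-1)d^{\,n-1}$ indeed passes, but it constrains only the eigenvalue-weighted sum and is insensitive to $m_{0}$, so it cannot detect this discrepancy; the rank bound $\rank\rho\leq(n-1)d^{\,n-2}$ is the check that does.
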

The explicit equation for the eigenvalues $\lambda _{\nu }(\alpha )$ of the
operator $\rho$ are given in the Lemma~\ref{semAn}, Theorem~\ref{thm18} b), and Corollary~\ref{cor_lamda}.
From the properties of the eigenvalues of the matrices $Q(\alpha )$ one can deduce several properties of the spectrum of the operator $\rho$. The first
such a result concerns the spectral radius of the operator $\rho$.

\begin{proposition}
For any $n,d\geq 2$ the biggest eigenvalue of the operator $\rho$ is of the form
\be
\lambda _{\max }=d+n-2, 
\ee
and has multiplicity $m_{\max }=\frac{1}{(n-2)!}\sum_{\sigma \in
S(n-2)}d^{l(\sigma )}.$
\end{proposition}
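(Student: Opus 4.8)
The plan is to combine the preceding spectral theorem, which identifies the nonzero eigenvalues of $\rho$ with the nonzero eigenvalues $\lambda_{\nu}(\alpha)$ of the matrices $Q(\alpha)$ ranging over all irreps $\alpha$ of $S(n-2)$ with $h(\alpha)\leq d$, together with the explicit eigenvalue formula of Theorem~\ref{thm18} a). First I would recall that in case i) the eigenvalues read $\lambda_{\nu}(\alpha)=d+\alpha_{i}+1-i$ when the box is added in the $i$-th row, whereas in case ii) they equal $d$. Since $d\leq d+n-2$ with equality only when $n=2$, the spectral radius can only be produced by case i), so it suffices to maximise the quantity $\alpha_{i}+1-i$ over all partitions $\alpha\vdash n-2$ with $h(\alpha)\leq d$ and all admissible box positions $i$.

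Next I would carry out this elementary maximisation. Because the partition is non-increasing we have $\alpha_{i}\leq\alpha_{1}\leq n-2$, and since $i\geq 1$ this gives $\alpha_{i}+1-i\leq\alpha_{1}\leq n-2$. For $i\geq 2$ one even has $\alpha_{i}+1-i\leq\alpha_{1}-1<\alpha_{1}$, so equality forces $i=1$ and then $\alpha_{1}=n-2$. Hence the bound $\lambda_{\max}=d+n-2$ is attained by the single-row diagram $\alpha=(n-2)$, with the added box in the first row producing $\nu=(n-1)$. Both diagrams have rank one, so case i) genuinely applies, and the value $d+n-2>0$ is a bona fide nonzero eigenvalue of $\rho$. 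The constraint $h((n-2))=1\leq d$ holds for all $d\geq 2$, so this block really occurs in the decomposition.

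Finally I would read off the multiplicity from the preceding spectral theorem, which asserts $m_{\nu,\alpha}=m_{\alpha}d_{\nu}$. For the maximising pair, $\psi^{(n-1)}$ is the trivial irrep of $S(n-1)$, so $d_{\nu}=1$, while $\varphi^{(n-2)}$ is the trivial irrep of $S(n-2)$, whose character is identically $1$. Substituting $\chi^{\alpha}\equiv 1$ into the multiplicity formula of Proposition~\ref{mult_a} (equivalently Theorem~\ref{decompA}) yields $m_{\alpha}=\frac{1}{(n-2)!}\sum_{\sigma\in S(n-2)}d^{l(\sigma)}$. Since the maximiser is unique, no summation over several coinciding blocks is needed, and therefore $m_{\max}=m_{\alpha}\cdot 1=\frac{1}{(n-2)!}\sum_{\sigma\in S(n-2)}d^{l(\sigma)}$, as claimed.

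The only genuinely delicate point is the \emph{uniqueness} of the maximiser; everything else is direct substitution. I would take care to rule out every competitor attaining the value $d+n-2$: the case ii) value $d$ is strictly smaller for $n>2$ (and coincides harmlessly for $n=2$), and for positions $i\geq 2$ the equation $\alpha_{i}=n-3+i$ forces $\alpha_{i}\geq n-1>|\alpha|$, which is impossible. Establishing this exhaustively is what guarantees the clean closed form for $m_{\max}$, since any second maximising pair would require adding its multiplicity and would spoil the stated expression.
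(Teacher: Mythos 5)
Your proof is correct and follows exactly the route the paper intends: the paper merely remarks that the proposition ``easily follows from the form of $\lambda_{\nu}(\alpha)$ given in Thm.~\ref{thm18} a),'' and you supply precisely the omitted details --- maximising $d+\alpha_i+1-i$ to locate the unique maximiser $\alpha=(n-2)$, $\nu=(n-1)$, and reading off $m_{\max}=m_{\alpha}d_{\nu}$ with $d_{\nu}=1$ and $\chi^{\alpha}\equiv 1$. Your explicit uniqueness check (ruling out case ii) and positions $i\geq 2$), which justifies that no multiplicities need to be added, is a welcome addition the paper leaves implicit; the only microscopic slip is the claim that both diagrams have rank one, which fails for $n=2$ (where case ii) applies instead), but you already note that the two formulas coincide harmlessly there.
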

Above proposition easily follows from the form of $\lambda_{\nu}(\alpha)$ given in Thm.~\ref{thm18} a).
We can also calculate the minimal nonzero eigenvalue of the operator $
\rho$.

\begin{proposition}
The minimal nonzero eigenvalue of of the operator $\rho$ has the
following value
\be
\lambda _{\min }=d-h(\alpha ):h(\alpha )<d.
\ee
In particular we have:
\begin{enumerate}[a)]
\item if $d\geq n-1,$ then $\lambda _{\min }=d-h(\alpha )\geq 1$ with the multiplicity equal to 
\be
\label{mmm}
m_{\min }=\frac{1}{(n-2)!}\sum_{\sigma \in S(n-2)}\operatorname{sgn}(\sigma )d^{l(\sigma
)}, 
\ee
\item if $d\leq n-2,$ then $\lambda _{\min }=d-h(\alpha )=1$ with the multiplicity given thorough expression~\eqref{mmm}.
\end{enumerate}
\end{proposition}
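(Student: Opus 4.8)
The plan is to read the nonzero eigenvalues of $\rho$ off the spectral theorem for $\rho$ established above: they are exactly the numbers $\lambda_{\nu}(\alpha)$ with $\alpha\vdash n-2$, $h(\alpha)\le d$, $\nu\in\alpha$, $h(\nu)\le d$, each with multiplicity $m_{\alpha}d_{\nu}$. By Theorem~\ref{thm18}~a) one has $\lambda_{\nu}(\alpha)=d+(\alpha_i+1-i)$, where the box added to $Y^{\alpha}$ to form $Y^{\nu}$ sits in row $i$, column $\alpha_i+1$; thus $\alpha_i+1-i$ is precisely the \emph{content} of the added box, and $\lambda_{\nu}(\alpha)=d+\mathrm{content}$. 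Minimising a nonzero eigenvalue therefore amounts to making the content of an admissible addable box as negative as possible, subject to $h(\nu)\le d$.

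First I would locate the extremal box. Among the addable boxes of a fixed $\alpha$ the most negative content $-h(\alpha)$ is attained by the corner opening a new bottom row in the first column ($i=h(\alpha)+1$, $\alpha_i=0$); any other addable corner lies in a row $i\le h(\alpha)$ with $\alpha_i\ge 1$ and so has content $\ge 2-h(\alpha)$. This bottom corner is admissible ($h(\nu)=h(\alpha)+1\le d$) precisely when $h(\alpha)\le d-1$; if instead $h(\alpha)=d$ the bottom corner would give $h(\nu)=d+1>d$, producing exactly the single zero eigenvalue of $Q(\alpha)$ permitted by the positivity statement recorded after Definition~\ref{def_Q}, and all surviving corners then have eigenvalue $\ge 2$. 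Hence over admissible pairs the smallest nonzero eigenvalue is $d-h(\alpha)$ with $h(\alpha)$ as large as possible subject to $h(\alpha)\le n-2$ and $h(\alpha)\le d-1$, i.e. $h(\alpha)=\min(n-2,d-1)$. This yields $\lambda_{\min}=d-(n-2)\ge 1$ when $d\ge n-1$ and $\lambda_{\min}=1$ when $d\le n-2$, which is the asserted value in both cases.

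For the multiplicity in case~a) the argument is clean because the minimum is attained by a \emph{unique} pair: the only partition of $n-2$ of height $n-2$ is the single column $\alpha=(1^{n-2})$, and its bottom corner produces $\nu=(1^{n-1})$ with $d_{\nu}=1$. Hence $m_{\min}=m_{(1^{n-2})}\cdot 1$, and Proposition~\ref{mult_a} gives $m_{(1^{n-2})}=\frac{1}{(n-2)!}\sum_{\sigma\in S(n-2)}\chi^{(1^{n-2})}(\sigma^{-1})d^{l(\sigma)}=\frac{1}{(n-2)!}\sum_{\sigma\in S(n-2)}\sgn(\sigma)\,d^{l(\sigma)}$, since the character of the one-dimensional sign irrep $(1^{n-2})$ is $\sgn$. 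This is exactly~\eqref{mmm}; one may also record the closed form $\frac{1}{(n-2)!}\sum_{\sigma}\sgn(\sigma)d^{l(\sigma)}=\binom{d}{n-2}$, obtained from the signed cycle identity $\sum_{\sigma\in S(m)}\sgn(\sigma)x^{l(\sigma)}=\prod_{j=0}^{m-1}(x-j)$ (equivalently the Schur--Weyl count $\dim\wedge^{\,n-2}\mathbb{C}^{d}$), which is manifestly positive for $d\ge n-1$, consistent with $\lambda_{\min}\ge 1$ genuinely occurring.

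The main obstacle is the multiplicity in case~b). Here the minimising content $-(d-1)$ is realised by the bottom corner of \emph{every} partition $\alpha\vdash n-2$ with $h(\alpha)=d-1$ (each giving $\nu=(\alpha_1,\dots,\alpha_{d-1},1)$ of height $d$), so the uniqueness that trivialised case~a) is lost and the total multiplicity is $\sum_{\alpha\vdash n-2,\ h(\alpha)=d-1} m_{\alpha}\,d_{\nu(\alpha)}$. The delicate step is to evaluate this sum and reconcile it with~\eqref{mmm}: I would approach it through the branching $\ind_{S(n-2)}^{S(n-1)}$ together with Proposition~\ref{mult_a} and the alternative eigenvalue descriptions in Lemma~\ref{semAn} and Corollary~\ref{cor_lamda}, reorganising the sum over height-$(d-1)$ diagrams into a single signed character sum. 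Carefully tracking which diagrams survive the height truncation $h(\nu)\le d$ is where the computation is most error-prone, and it is the step I would first test against the small cases $d=2,n=4$ and $d=3,n=6$ before committing to the general identity.
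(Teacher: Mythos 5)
Your derivation of the value of $\lambda_{\min}$ and the whole of case~a) are correct and follow essentially the paper's (implicit) route: the paper prints this proposition without proof, remarking only that such statements follow from the form of $\lambda_{\nu}(\alpha)$ in Theorem~\ref{thm18}~a), and your content argument --- $\lambda_{\nu}(\alpha)=d+(\alpha_i+1-i)$, minimised at the new-bottom-row corner of content $-h(\alpha)$, with the $h(\alpha)=d$ corner excluded as the unique zero eigenvalue --- is exactly what that remark presupposes. Your uniqueness argument for the minimising pair $(\alpha,\nu)=\bigl((1^{n-2}),(1^{n-1})\bigr)$ when $d\geq n-1$, and the identification $\chi^{(1^{n-2})}=\sgn$ yielding \eqref{mmm} via Proposition~\ref{mult_a}, are sound.

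The genuine gap is case~b), which you explicitly leave open, and the ``reconciliation'' you plan there would fail because the identity you hope to prove is false. By the paper's own spectral theorem the multiplicity of $\lambda_{\min}=1$ is $\sum_{\alpha\vdash n-2,\ h(\alpha)=d-1}m_{\alpha}\,d_{\nu(\alpha)}$, exactly as you set it up, but this does not equal \eqref{mmm}. Your own proposed test case $d=2$, $n=4$ already exposes this: the only $\alpha\vdash 2$ with $h(\alpha)=1$ is $\alpha=(2)$, with $m_{(2)}=\tfrac{1}{2}(d^{2}+d)=3$ and $\nu=(2,1)$, $d_{\nu}=2$, so the eigenvalue $1$ has multiplicity $6$ (consistently, $4\cdot 3+3\cdot 2+1\cdot 6=24=(n-1)d^{n-1}=\tr\rho$), whereas \eqref{mmm} gives $\tfrac{1}{2}(d^{2}-d)=1$. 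Indeed \eqref{mmm} equals $m_{(1^{n-2})}=\binom{d}{n-2}$, which even vanishes for $d<n-2$ --- impossible for an eigenvalue that genuinely occurs --- so part~b) of the statement as printed is inconsistent with the paper's own multiplicity rule $m_{\nu,\alpha}=m_{\alpha}d_{\nu}$, and \eqref{mmm} should be read as pertaining to case~a) only. The correct completion of your proof is to stop at the closed sum $\sum_{\alpha:\,h(\alpha)=d-1}m_{\alpha}d_{\nu(\alpha)}$ and record it as the case~b) multiplicity, noting the discrepancy with the printed claim, rather than attempting to reorganise it into the signed character sum, which small cases refute.
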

%\subsection{Application to deterministic port-based teleportation}

\subsection{Fidelity calculation in case of maximally entangled state as a resource state}
\label{det_PBT}
Having all tools developed in the previous chapters we are ready to apply them to description of deterministic port-based teleportation.
In the first step using Proposition~\ref{BP20} we calculate the following quantity which appears in  the expression for  the fidelity $F$ given in equation~\eqref{F_1} of the deterministic version of the protocol:
\be
\tr\left[ M_{f}^{\alpha }(\rho ^{-1/2})M_{f}^{\alpha
}(V'[(a,n)])M_{f}^{\alpha }(\rho ^{-1/2})M_{f}^{\alpha
}(V'[(a,n)])\right] \equiv \tr_{\Phi _{f}^{\alpha }}\left[\rho ^{-1/2
}V'[(a,n)]\rho ^{-1/2}V'[(a,n)]\right].
\ee
Namely we have the following:
\be
\begin{split}
	&\tr_{\Phi _{f}^{\alpha }}\left[\rho ^{-1/2
	}V'[(a,n)]\rho^{-1/2}V'[(a,n)]\right] =\\
	&=\sum_{\nu ,\xi _{\nu },j_{\xi _{\nu }}} \ \sum_{_{\mu },\zeta _{\mu },j_{\zeta
			_{\mu }}} \ \sum_{k_{\alpha },l_{\alpha }}\frac{1}{(n-1)^{2}}\frac{d_{\nu
		}d_{\mu }}{d_{\alpha }^{2}}(\psi _{R}^{\nu })_{l_{\alpha }j_{\xi _{\nu
	}}}^{\alpha \xi _{\nu }}[(a,n-1)]\sqrt{\lambda _{\nu}(\alpha)}(\psi
	_{R}^{\nu })_{j_{\xi _{\nu }}k_{\alpha }}^{\xi _{\nu }\alpha }[(a,n-1)]\times \\
	&\times (\psi _{R}^{\nu })_{k_{\alpha }j_{\zeta _{\mu }}}^{\alpha \zeta _{\mu
	}}[(a,n-1)]\sqrt{\lambda _{\mu}(\alpha)}(\psi _{R}^{\nu })_{j_{\zeta
			_{\mu }}k_{\alpha }}^{\zeta _{\mu }\alpha }[(a,n-1)],
\end{split}
\ee
which, after summation over $\xi _{\nu },j_{\xi _{\nu }}$ and $\zeta _{\mu
},j_{\zeta _{\mu }}$  reduces  to 
\be
\sum_{\nu ,\mu } \ \sum_{k_{\alpha },l_{\alpha }}\frac{1}{(n-1)^{2}}\frac{
	d_{\nu }d_{\mu }}{d_{\alpha }^{2}}\sqrt{\lambda _{\nu}(\alpha)\lambda _{\mu}(\alpha)}\delta _{k_{\alpha }l_{\alpha }}\delta
_{k_{\alpha },l_{\alpha }}=\frac{1}{(n-1)^{2}}\sum_{\nu ,\mu }\frac{d_{\nu
	}d_{\mu }}{d_{\alpha }}\sqrt{\lambda _{\nu}(\alpha)\lambda _{\mu}(\alpha)}.
\ee
We can summarize above calculations in the following proposition:

\begin{proposition}
	In the irrep $\Phi^{\alpha }$ of the algebra $%
	\mathcal{A}_{n}^{t_{n}}(d)$ we have 
	\be
	\tr_{\Phi^{\alpha }}\left[\rho ^{-1/2}V'[(a,n)]
	\rho ^{-1/2}V'[(a,n)]\right] =\frac{1}{(n-1)^{2}}\sum_{\nu ,\mu }\frac{
		d_{\nu }d_{\mu }}{d_{\alpha }}\sqrt{\lambda _{\nu}(\alpha)\lambda _{\mu}(\alpha)}.
	\ee
\end{proposition}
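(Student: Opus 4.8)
The plan is to compute the trace directly in the reduced matrix basis $f$, where both factors are as explicit as possible. First I would note that by \eqref{pbt1} together with Lemma~\ref{lemma_Mf}, the operator $\rho$ acts on the irrep $\Phi^{\alpha}$ as the diagonal matrix $\operatorname{diag}(\lambda_{\nu}(\alpha))$, so $M_f^{\alpha}(\rho^{-1/2})$ is diagonal with entry $\lambda_{\nu}(\alpha)^{-1/2}$ on the block labelled by the $S(n-1)$-irrep $\nu$. Hence the trace splits as a sum over a pair of basis states, one lying in a block $\nu$ and the other in a block $\mu$, and the two $\rho^{-1/2}$ factors simply contribute the scalars $\lambda_{\nu}(\alpha)^{-1/2}$ and $\lambda_{\mu}(\alpha)^{-1/2}$.

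Next I would substitute the explicit matrix elements of $V'[(a,n)]$ from Proposition~\ref{BP20}, equation~\eqref{blee1}, for both of the $V'$ factors. Each such matrix element connecting a block $\nu$ to a block $\mu$ carries the prefactor $\tfrac{1}{n-1}\tfrac{\sqrt{d_{\nu}d_{\mu}}}{d_{\alpha}}$, the two eigenvalue square roots $\sqrt{\lambda_{\nu}(\alpha)}\sqrt{\lambda_{\mu}(\alpha)}$, an internal summation over one index $k_{\alpha}$ of the distinguished subblock $\alpha$, and a product of two PRIR matrix elements of the transposition $(a,n-1)$. Writing out the four PRIR factors coming from the two copies of $V'$ gives the long intermediate expression that precedes the statement; collecting the eigenvalue powers, the two diagonal $\rho^{-1/2}$ contributions cancel exactly one $\sqrt{\lambda_{\nu}(\alpha)}$ and one $\sqrt{\lambda_{\mu}(\alpha)}$, leaving the overall factor $\sqrt{\lambda_{\nu}(\alpha)\lambda_{\mu}(\alpha)}$ together with the combined prefactor $\tfrac{1}{(n-1)^2}\tfrac{d_{\nu}d_{\mu}}{d_{\alpha}^2}$.

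The decisive step, and the one requiring the most care, is the collapse of the two intermediate summations over the trace-basis labels $(\xi_{\nu},j_{\xi_{\nu}})$ and $(\zeta_{\mu},j_{\zeta_{\mu}})$. Here I would observe that each such summation has the form $\sum (\psi_R^{\nu})^{\alpha\,\xi_{\nu}}[(a,n-1)]\,(\psi_R^{\nu})^{\xi_{\nu}\,\alpha}[(a,n-1)]$, i.e. a full matrix product of $\psi_R^{\nu}[(a,n-1)]$ with itself over the complete index of the irrep $\nu$. Since $(a,n-1)$ is an involution and $\psi_R^{\nu}$ is a representation, this product equals $\psi_R^{\nu}[(a,n-1)^2]=\psi_R^{\nu}[e]=\mathbf{1}$, whose matrix elements between two states of the $\alpha$-subblock reduce to $\delta_{k_{\alpha}l_{\alpha}}$. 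The same applies to the $\mu$-summation. The main obstacle is purely one of bookkeeping: keeping straight which index is an external trace label (to be summed to full range, so that the telescoping to the identity applies) and which is the internal $\alpha$-index of \eqref{blee1} that survives as a free endpoint.

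Finally, with both intermediate sums reduced to $\delta_{k_{\alpha}l_{\alpha}}$, I would carry out the remaining summation over the surviving subblock indices $k_{\alpha},l_{\alpha}$. The product $\delta_{k_{\alpha}l_{\alpha}}\delta_{k_{\alpha}l_{\alpha}}$ summed over both indices yields the factor $d_{\alpha}=\dim\varphi^{\alpha}$, which cancels one power of $d_{\alpha}$ in the prefactor. Assembling the pieces leaves $\tfrac{1}{(n-1)^2}\sum_{\nu,\mu}\tfrac{d_{\nu}d_{\mu}}{d_{\alpha}}\sqrt{\lambda_{\nu}(\alpha)\lambda_{\mu}(\alpha)}$, which is the claimed identity. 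As an alternative to the involution argument in the decisive step, one could instead invoke the bilinear summation rule of Proposition~\ref{AP13}, but the involution route avoids summing over the transposition index $a$ (which is fixed here) and so is the more direct.
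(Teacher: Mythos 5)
Your proposal is correct and takes essentially the same approach as the paper: the trace is computed in the reduced basis $f$ with $M_{f}^{\alpha}(\rho)$ diagonal, the matrix elements of $V'[(a,n)]$ from Proposition~\ref{BP20} are inserted for both factors, the intermediate sums over $(\xi_{\nu},j_{\xi_{\nu}})$ and $(\zeta_{\mu},j_{\zeta_{\mu}})$ collapse to $\delta_{k_{\alpha}l_{\alpha}}$, and the final sum over $k_{\alpha},l_{\alpha}$ produces the factor $d_{\alpha}$ that cancels one power in the prefactor. Your explicit justification of the collapse via the involution, $\bigl(\psi_{R}^{\nu}[(a,n-1)]\bigr)^{2}=\psi_{R}^{\nu}[e]=\mathbf{1}$ restricted to the $(\alpha,\alpha)$ subblock, is precisely the mechanism the paper leaves implicit when it states that the long intermediate expression ``reduces'' after those summations.
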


In order to calculate the trace of the operator $\rho ^{-1/2}V'[(a,n)]\rho ^{-1/2}V'[(a,n)]$ in
arbitrary representation $R(\mathcal{A}_{n}^{t_{n}}(d))$ it is enough to multiply $
\tr_{\Phi^{\alpha }}\left[\rho ^{-1/2}V'[(a,n)]
\rho ^{-1/2}V'[(a,n)]\right] $ by the multiplicity of the irrep $\Phi^{\alpha }$ in the representation $R(\mathcal{A}_{n}^{t_{n}}(d))$, i.e. we have 

\begin{corollary}
	\label{Acor48}
	For any representation $R(\mathcal{A}_{n}^{t_{n}}(d))$ of the algebra $\mathcal{A}_{n}^{t_{n}}(d)
	$ we have 
	\be
	\tr_{R(\mathcal{A}_{n}^{t_{n}}(d))}\left[\rho ^{-1/2
	}V'[(a,n)]\rho ^{-1/2}V'[(a,n)]\right] =\frac{1}{
		(n-1)^{2}}\sum_{\Phi^{\alpha }\in R(\mathcal{A}_{n}^{t_{n}}(d))} \ \sum_{\nu ,\mu \in
		\Phi^{\alpha }}\frac{d_{\nu }d_{\mu }}{d_{\alpha }}\sqrt{\lambda _{\nu}(\alpha)\lambda _{\mu}(\alpha)}.
	\ee
	In particular we have:
	
	\begin{enumerate}[(a)]
		\item for regular representation of the algebra $\mathcal{A}_{n}^{t_{n}}(d)$ 
		\be
		\tr_{\mathcal{A}_{n}^{t_{n}}(d)}\left[\rho ^{-1/2}V'[(a,n)]
		\rho ^{-1/2}V'[(a,n)]\right] =\frac{1}{n-1}\sum_{\alpha:h(\alpha )\leq
			d} \ \sum_{\nu ,\mu \in \Phi^{\alpha }}d_{\nu }d_{\mu }\sqrt{\lambda _{\nu}(\alpha)\lambda _{\mu}(\alpha)},
		\ee
		
		\item for the natural representation of the algebra $\mathcal{A}_{n}^{t_{n}}(d)$ in the
		space $\mathcal{H}=(\mathbb{C}^{d})^{\otimes n}$
		\be
		\label{Acor48b}
		\tr_{\mathcal{H}}\left[\rho ^{-1/2}V'[(a,n)]
		\rho ^{-1/2}V'[(a,n)]\right] =\frac{1}{(n-1)^{2}}\sum_{\alpha:h(\alpha )\leq
			d} \ \sum_{\nu ,\mu \in \Phi^{\alpha }}\frac{m_{\alpha }}{d_{\alpha }}%
		d_{\nu }d_{\mu }\sqrt{\lambda _{\nu}(\alpha)\lambda _{\mu}(\alpha)},
		\ee
		where 
		\be
		m_{\alpha }=\frac{1}{(n-2)!}\sum_{\sigma \in S(n-2)}\chi ^{\alpha }(\sigma
		^{-1})d^{l(\sigma )}.
		\ee
	\end{enumerate} 
Equation~\eqref{Acor48b} leads to the following expression for the fidelity in the deterministic PBT scheme when the resource state is a maximally entangled state, which was obtained independently in~\cite{Stu2017}:
\be
F=\frac{1}{d^{N+2}}\sum_{\alpha:h(\alpha)\leq d}\left(\sum_{\mu \in \Phi^{\alpha}}\sqrt{d_{\mu}m_{\mu}} \right)^2 ,
\ee
where sums over $\alpha$ and $\mu$ are taken, whenever number of rows in corresponding Young diagrams is not greater than  the dimension of the local Hilbert space $d$. 
\end{corollary}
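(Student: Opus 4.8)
The plan is to reduce the fidelity~\eqref{F_1} to a sum of traces of the algebra element $\rho=\sum_i V'[(i,n)]$, and then feed those traces into the master formula of Corollary~\ref{Acor48}. First I would substitute the explicit operators~\eqref{PBT_op} into~\eqref{F_1}. Writing $\widetilde{\rho}=d^{-N}\rho$, so that $\widetilde{\rho}^{-1/2}=d^{N/2}\rho^{-1/2}$, the factors $\sigma_i=d^{-N}V'[(i,n)]$ contribute $d^{-2N}$ and the two $\widetilde{\rho}^{-1/2}$ contribute $d^{N}$, leaving a net $d^{-N}$ inside the sum. Together with the $1/d^2$ in front this gives
\be
F=\frac{1}{d^{N+2}}\sum_{i}\tr_{\mathcal{H}}\left[\rho^{-1/2}V'[(i,n)]\rho^{-1/2}V'[(i,n)]\right],
\ee
where the trace is taken in the natural representation on $\mathcal{H}=(\mathbb{C}^d)^{\otimes n}$, and the port index $i$ runs over the $n-1$ nontrivial transpositions $(i,n)$.

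Next I would invoke Corollary~\ref{Acor48}(b), i.e. formula~\eqref{Acor48b}, which evaluates each summand explicitly in terms of the eigenvalues $\lambda_\nu(\alpha)$ and the multiplicities/dimensions $m_\alpha,d_\alpha,m_\nu,d_\nu,m_\mu,d_\mu$. Since the right-hand side of~\eqref{Acor48b} is manifestly independent of the port label $i$, the outer sum over ports merely multiplies by $n-1$. The essential step is then to insert the closed eigenvalue formula $\lambda_\nu(\alpha)=N\,m_\nu d_\alpha/(m_\alpha d_\nu)$ supplied by Corollary~\ref{cor_lamda} (equivalently eq.~\eqref{pbt1}) into the inner double sum over $\mu,\nu\in\Phi^\alpha$.

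After this substitution the combination appearing in~\eqref{Acor48b} collapses: computing $\sqrt{\lambda_\nu(\alpha)\lambda_\mu(\alpha)}=N\frac{d_\alpha}{m_\alpha}\sqrt{m_\nu m_\mu/(d_\nu d_\mu)}$ and multiplying by $\frac{m_\alpha}{d_\alpha}d_\nu d_\mu$, all dependence on $m_\alpha,d_\alpha$ cancels and each term becomes $N\sqrt{d_\nu m_\nu}\sqrt{d_\mu m_\mu}$. Hence the double sum factorises as a perfect square,
\be
\sum_{\mu,\nu\in\Phi^\alpha}\frac{m_\alpha}{d_\alpha}d_\nu d_\mu\sqrt{\lambda_\nu(\alpha)\lambda_\mu(\alpha)}=N\left(\sum_{\mu\in\Phi^\alpha}\sqrt{d_\mu m_\mu}\right)^2,
\ee
and recognising this square is the heart of the calculation.

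Finally I would assemble the prefactors. Each trace equals $\frac{N}{(n-1)^2}\sum_{\alpha:h(\alpha)\leq d}\left(\sum_{\mu\in\Phi^\alpha}\sqrt{d_\mu m_\mu}\right)^2$; multiplying by the port count $n-1$ produces the ratio $N/(n-1)$, which is unity for the convention $N=n-1$ used throughout (the same convention that makes Corollary~\ref{cor_lamda} and~\eqref{pbt1} agree). What remains is exactly the overall $1/d^{N+2}$, giving
\be
F=\frac{1}{d^{N+2}}\sum_{\alpha:h(\alpha)\leq d}\left(\sum_{\mu\in\Phi^\alpha}\sqrt{d_\mu m_\mu}\right)^2,
\ee
with the summation ranges dictated by $h(\alpha)\leq d$ and $h(\mu)\leq d$ in the decomposition of $\mathcal{A}_n^{t_n}(d)$. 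The main obstacle I anticipate is bookkeeping rather than conceptual difficulty: one must keep the prefactor $1/(n-1)^2$, the port multiplicity, the factor $N$ hidden inside $\lambda_\nu(\alpha)$, and the precise relation between $N$ and $n$ all mutually consistent, so that the cancellations are clean and the perfect-square structure emerges.
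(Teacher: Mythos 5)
Your proposal only proves the \emph{last} claim of Corollary~\ref{Acor48} (the closed formula for $F$), and it does so by invoking part (b) of the very corollary under consideration --- ``Next I would invoke Corollary~\ref{Acor48}(b), i.e.\ formula~\eqref{Acor48b}'' --- which is circular as a proof of the full statement. The actual content of the corollary is the general trace identity for an arbitrary representation $R(\mathcal{A}_n^{t_n}(d))$ together with its two specializations, and none of that is established in your argument. The paper obtains it in one step from the immediately preceding Proposition: decompose $R$ into irreps of the semisimple algebra $\mathcal{A}_n^{t_n}(d)$; the second-kind irreps $\Psi^{\nu}$ contribute nothing because $\Psi^{\nu}(V'[(a,n)])=0$, so the trace is $\sum_{\alpha}\mult_{R}(\Phi^{\alpha})\cdot\tr_{\Phi^{\alpha}}[\rho^{-1/2}V'[(a,n)]\rho^{-1/2}V'[(a,n)]]$ with the per-irrep trace $\frac{1}{(n-1)^2}\sum_{\nu,\mu\in\Phi^{\alpha}}\frac{d_{\nu}d_{\mu}}{d_{\alpha}}\sqrt{\lambda_{\nu}(\alpha)\lambda_{\mu}(\alpha)}$ already computed. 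Part (a) then follows because in the regular representation $\Phi^{\alpha}$ appears with multiplicity $\dim\Phi^{\alpha}=(n-1)d_{\alpha}$, which cancels one factor $d_{\alpha}$ and one factor $(n-1)$; part (b) follows because in the natural representation on $\mathcal{H}$ the multiplicity is $m_{\alpha}$ (Theorem~\ref{decompA}). You should supply this reduction explicitly rather than cite the result.

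The portion you do prove --- the passage from \eqref{Acor48b} to the fidelity formula --- is correct and is essentially the paper's own (largely implicit) route. Your bookkeeping checks out: from \eqref{F_1} and \eqref{PBT_op}, writing $\widetilde{\rho}=d^{-N}\rho$ gives the net prefactor $d^{-(N+2)}$; port-independence of \eqref{Acor48b} turns the sum over ports into a factor $n-1$; and inserting $\lambda_{\nu}(\alpha)=(n-1)\frac{m_{\nu}d_{\alpha}}{m_{\alpha}d_{\nu}}$ from Corollary~\ref{cor_lamda} (equivalently \eqref{pbt1}) yields
\be
\frac{m_{\alpha}}{d_{\alpha}}\,d_{\nu}d_{\mu}\sqrt{\lambda_{\nu}(\alpha)\lambda_{\mu}(\alpha)}=(n-1)\sqrt{d_{\nu}m_{\nu}}\sqrt{d_{\mu}m_{\mu}},
\ee
so the double sum factorizes into the perfect square $\left(\sum_{\mu\in\Phi^{\alpha}}\sqrt{d_{\mu}m_{\mu}}\right)^{2}$, exactly matching the paper's remark that the eigenvalues must be rescaled by $1/d^{N}$ for the PBT operator $\widetilde{\rho}$. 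You also correctly diagnose the convention $N=n-1$ (the paper's phrase ``where $N=n+1$'' after \eqref{F_1} is evidently a typo for $n=N+1$, as the generators $V'[(i,n)]$ exist only for $i=1,\ldots,n-1$), which is the consistency point on which the cancellation $N/(n-1)=1$ rests. In short: the fidelity computation is sound and aligned with the paper; what is missing is the derivation of the trace identities themselves, which constitute the body of the corollary.
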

Reader notices that formula for entanglement fidelity presented above is for the PBT operator $\widetilde{\rho}$, so the eigenvalues $\lambda_{\nu}(\alpha)$ used for calculations have to be rescaled by the factor $1/d^N$.
\subsection{Properties of the fidelity.}

In this section we derive some basic properties of the standard fidelity given as
\be
F\equiv F_{n}(d)=\frac{n-1}{d^{n+1}}\tr_{\mathcal{H}}\left[ \rho ^{-\frac{1}{2}}V'[(a,n)]\rho ^{-\frac{1}{2}
}V'[(a,n)]\right] =\frac{n-1}{d^{n+1}}\sum_{\alpha :h(\alpha )\leq d}\frac{%
	d_{\alpha }}{m_{\alpha }}\left[ \sum_{\mu \neq \theta }\lambda _{\mu}^{-
	\frac{1}{2}}(\alpha )m_{\mu }\right] ^{2}.
\ee
It appears that the power $-\frac{1}{2}$ of $\rho $ implies very particular
properties of the fidelity $F_{n}(d)$. First of all we prove the following
\begin{theorem}
	\label{limits}
	For any $d\geq 2$ and $n\geq 2$ we have
	\be
F_{n}(d)\leq 1,\qquad \lim_{n\rightarrow \infty }F_{n}(d)=1.
	\ee
\end{theorem}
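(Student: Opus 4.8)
The plan is to first collapse the spectral data into a purely combinatorial sum and then bound it. I would begin by eliminating the eigenvalues from the stated expression for $F_n(d)$ using the closed form $\lambda_\mu(\alpha)=(n-1)\,m_\mu d_\alpha/(m_\alpha d_\mu)$ of Corollary~\ref{cor_lamda}. A direct substitution gives $\lambda_\mu^{-1/2}(\alpha)\,m_\mu=\sqrt{m_\alpha/((n-1)d_\alpha)}\,\sqrt{d_\mu m_\mu}$, so that the prefactor $d_\alpha/m_\alpha$ and the overall $(n-1)$ cancel, leaving
\begin{equation}
F_n(d)=\frac{1}{d^{\,n+1}}\sum_{\alpha:\,h(\alpha)\le d}\Big(\sum_{\substack{\mu=\alpha+\square\\ h(\mu)\le d}}\sqrt{d_\mu m_\mu}\,\Big)^{2}.
\end{equation}
Here $\mu=\alpha+\square$ runs over the irreps of $S(n-1)$ appearing in $\ind_{S(n-2)}^{S(n-1)}\varphi^{\alpha}$, and the excluded index $\mu\neq\theta$ is precisely the condition $h(\mu)\le d$, since the vanishing eigenvalue occurs exactly when $m_\mu=0$. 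Invoking Schur--Weyl duality, $\sum_{\mu\vdash n-1,\,h(\mu)\le d}d_\mu m_\mu=d^{\,n-1}$, so $p_\mu:=d_\mu m_\mu/d^{\,n-1}$ is a probability distribution on partitions of $n-1$ with at most $d$ rows and the fidelity becomes
\begin{equation}
F_n(d)=\frac{1}{d^{2}}\sum_{\alpha:\,h(\alpha)\le d}\Big(\sum_{\mu=\alpha+\square}\sqrt{p_\mu}\,\Big)^{2}.
\end{equation}

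For the bound $F_n(d)\le 1$ I would use two elementary facts about Young diagrams: a diagram $\alpha$ with $h(\alpha)\le d$ has at most $d$ addable boxes keeping the height $\le d$, and a diagram $\mu$ with $h(\mu)\le d$ has at most $d$ removable boxes. Applying Cauchy--Schwarz to the inner sum, $\big(\sum_{\mu=\alpha+\square}\sqrt{p_\mu}\big)^2\le k_\alpha\sum_{\mu=\alpha+\square}p_\mu$ with $k_\alpha\le d$, and then reorganising the double sum by $\mu$,
\begin{equation}
\sum_{\alpha}\Big(\sum_{\mu=\alpha+\square}\sqrt{p_\mu}\,\Big)^{2}\le d\sum_{\alpha}\sum_{\mu=\alpha+\square}p_\mu=d\sum_{\mu}c(\mu)\,p_\mu\le d^{2}\sum_{\mu}p_\mu=d^{2},
\end{equation}
where $c(\mu)\le d$ is the number of removable boxes of $\mu$. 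Dividing by $d^{2}$ yields $F_n(d)\le 1$.

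For the limit I would show that both Cauchy--Schwarz steps become asymptotically tight, which is transparent from the exact defect identity
\begin{equation}
d^{2}\big(1-F_n(d)\big)=d\sum_{\mu}\big(d-c(\mu)\big)p_\mu+\sum_{\alpha}(d-k_\alpha)\!\!\sum_{\mu=\alpha+\square}\!\!p_\mu+\sum_{\alpha}\sum_{\{\mu,\mu'\}}\big(\sqrt{p_\mu}-\sqrt{p_{\mu'}}\big)^{2},
\end{equation}
in which all three nonnegative terms must be shown to tend to $0$. The first two vanish once the measure $p$ concentrates on diagrams of full height $h(\mu)=d$ with $d$ distinct row lengths, so that $c(\mu)=d$ and every $\alpha=\mu-\square$ again has $d$ addable boxes; the third vanishes provided $p_\mu/p_{\mu'}\to 1$ for neighbouring diagrams sharing a common $\alpha$. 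Both follow from the concentration of the Schur--Weyl measure: for fixed $d$ the typical diagram has rows $\mu_i=(n-1)/d+O(\sqrt n)$ with strictly ordered but comparable lengths, giving $c(\mu)=d$ generically, and adding or moving a single box alters each of the $O(d)$ relevant factors of the hook-length and Weyl dimension formulas by a multiplicative $1+O(1/\sqrt n)$, whence $p_\mu/p_{\mu'}=1+O(d/\sqrt n)$.

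The main obstacle is precisely this last step: turning the qualitative tightness into a quantitative estimate requires uniform control of the ratios $p_\mu/p_{\mu'}$ and of the corner counts $c(\mu),k_\alpha$ over the typical set of the Schur--Weyl measure, together with a bound on the total mass lying outside that typical set. A cleaner alternative, which I would pursue in parallel, is to avoid fine concentration estimates altogether by pairing the clean upper bound $F_n(d)\le 1$ above with a matching lower bound of the form $F_n(d)\ge 1-O(1/n)$, which is essentially the Beigi--K\"onig estimate re-derived in this paper; the two together immediately force $\lim_{n\to\infty}F_n(d)=1$.
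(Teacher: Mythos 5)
Your upper-bound argument is correct and takes a genuinely different, more elementary route than the paper. The reduction to $F_n(d)=\frac{1}{d^{2}}\sum_{\alpha:h(\alpha)\le d}\bigl(\sum_{\mu=\alpha+\square}\sqrt{p_\mu}\bigr)^{2}$ with the Schur--Weyl measure $p_\mu=d_\mu m_\mu/d^{\,n-1}$ checks out (the cancellation of $d_\alpha/m_\alpha$ and $(n-1)$ via Corollary~\ref{cor_lamda} is right, and your identification of $\mu\neq\theta$ with $h(\mu)\le d$, i.e.\ $m_\mu>0$, matches the structure of $\Phi^\alpha$ in Proposition~\ref{f_basis}). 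Your Cauchy--Schwarz plus corner-counting step ($k_\alpha\le d$ addable boxes within height $d$, $c(\mu)\le d$ removable corners, and the exchange $\sum_\alpha\sum_{\mu=\alpha+\square}p_\mu=\sum_\mu c(\mu)p_\mu$) is sound, and your defect identity is algebraically correct (it is the standard identity $k\sum_i a_i^2-(\sum_i a_i)^2=\sum_{i<j}(a_i-a_j)^2$ applied blockwise). The paper instead proves $F_n(d)\le 1$ spectrally: it uses the eigenvalue-$d$ eigenvectors of $M_f^\alpha(V'[(a,n)])$ from Proposition~\ref{eigen1} assembled into the isometry $W$ of Definition~\ref{def5}, the inequality $d^{2}\tr[W^{\dagger}M_f^\alpha(\rho^{-1})W]\ge\tr M_f^\alpha[\rho^{-1/2}V'[(a,n)]\rho^{-1/2}V'[(a,n)]]$, the evaluation $\tr[W^{\dagger}M_f^\alpha(\rho^{-1})W]=\frac{1}{(n-1)d}\dim M_f^\alpha$, and the dimension count of Corollary~\ref{cor7}, which yields the sharper statement $F_n(d)\le 1-\frac{1}{d^{n}}\sum_{\nu:h(\nu)<d}M_\nu\dim\psi^\nu<1$ with the defect identified in terms of the second-kind irreps. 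Your version buys elementarity and a transparent combinatorial defect; the paper's buys strictness and reuses the spectral data of $V'[(a,n)]$ that it develops anyway.

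For the limit, however, there is a genuine gap. Your primary route via concentration of the Schur--Weyl measure is only a sketch, and you name the missing piece yourself: uniform control of the ratios $p_\mu/p_{\mu'}$, of the corner counts $c(\mu)$ and $k_\alpha$ over the typical set, and of the mass outside it. None of this is supplied, and making it quantitative (hook-length and Weyl-dimension perturbation estimates uniform over typical shapes) is substantially harder than everything else in the argument. Your fallback route simply cites the Beigi--K\"onig bound $F_n(d)\ge\frac{n-1}{d^{2}+n-2}$, but in this paper that bound is not an available input --- it is an output of this very theorem's proof, which the paper advertises as an alternative derivation of it. The paper closes the limit self-containedly and cheaply: it rewrites $F_n(d)=\frac{n-1}{d^{n-1}}\sum_{\alpha:h(\alpha)\le d}m_\alpha d_\alpha\bigl(\sum_{\mu\neq\theta}s_\mu\lambda_\mu^{-1/2}(\alpha)\bigr)^{2}$ where $s_\mu=d_\mu\lambda_\mu(\alpha)/\bigl(d(n-1)d_\alpha\bigr)$ is a probability distribution because $\tr Q(\alpha)=d(n-1)d_\alpha$, applies Jensen's inequality (Theorem~\ref{jensen}, Proposition~\ref{prop49}) to the convex function $\lambda\mapsto\lambda^{-1/2}$, and computes $\tr Q^{2}(\alpha)=(n-1)d_\alpha(d^{2}+n-2)$ directly from Definition~\ref{def_Q}, obtaining $F_n(d)\ge\frac{n-1}{d^{2}+n-2}$; together with $F_n(d)\le 1$ this forces $\lim_{n\to\infty}F_n(d)=1$. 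To make your proposal a complete proof you should either import this Jensen step (it combines cleanly with your probabilistic normalization, since $\sum_{\mu\neq\theta}d_\mu\lambda_\mu(\alpha)=\tr Q(\alpha)$ is the only extra fact needed) or actually prove the concentration estimates, which is a much heavier lift than the theorem warrants.
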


\begin{remark}
	The first property of the fidelity $F_{n}(d)$ is in
	fact, the justification of the definition of the fidelity but it also a
	necessary statement in the proof of the second result in Theorem~\ref{limits}.
\end{remark}

In the proof of Theorem~\ref{limits} we  need the spectral decomposition the
essential projectors $V'[(a,n)]$ in the irrrep $M_{f}^{\alpha }$.

\begin{proposition}
	\label{eigen1}
	The set of orthonormal vectors 
	\be
	w_{i_{\alpha }}=w_{i_{\alpha }}[a,\alpha ]=\left( \sqrt{\frac{m_{\rho }}{
			dm_{\alpha }}}\psi _{Rj_{\xi _{\rho }}i_{\alpha }}^{\rho \xi _{\rho }\alpha
	}[(a,n-1)]\right) \in \mathbb{C}^{\dim M_{f}^{\alpha }},\quad i_{\alpha }=1,\ldots,d_{\alpha }, 
	\ee
	where $\rho =\alpha +\square $, $\rho \neq \theta $, $\xi _{\rho }=\rho
	-\square ,$ $j_{\xi _{\rho }}=1,\ldots,\dim \xi _{\rho }$ are $PRIR$ indices,
	are eigenvectors of the matrix $M_{f}^{\alpha }(V'[(a,n)])$, i.e. we have
	\be
	M_{f}^{\alpha }(V'[(a,n)])w_{i_{\alpha }}[a,\alpha ]=dw_{i_{\alpha
	}}[a,\alpha ], 
	\ee
	and 
	\be
	M_{f}^{\alpha }(V'[(a,n)])=d\sum_{i_{\alpha }=1}^{d_{\alpha
	}}w_{i_{\alpha }}[a,\alpha ]w_{i_{\alpha }}^{\dagger}[a,\alpha ]. 
	\ee
	The remaining orthonormal eigenvectors of the matrix $M_{f}^{\alpha }(V'[(a,n)])$, corresponding to the eigenvalue $0$ will be denoted as $
	w_{j}[a,\alpha ],$ where $j=d_{\alpha }+1,\ldots,\dim M_{f}^{\alpha }$.
\end{proposition}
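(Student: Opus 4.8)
The plan is to verify the three claims directly — that the $w_{i_\alpha}$ are eigenvectors with eigenvalue $d$, that they are orthonormal, and that together with the zero eigenspace they exhaust $M_f^\alpha(V'[(a,n)])$ — by feeding the explicit matrix elements of Proposition~\ref{BP20} into each computation and collapsing the resulting PRIR sums with two elementary facts: that $(a,n-1)$ is an involution, so $\psi_R^\nu[(a,n-1)]^2=\mathbf{1}$, and that each $\psi_R^\nu[(a,n-1)]$ is unitary (indeed real orthogonal). The single nontrivial arithmetic input beyond these is the branching identity $\sum_{\nu=\alpha+\square}m_\nu=d\,m_\alpha$, which I expect to be the crux.

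First I would compute $M_f^\alpha(V'[(a,n)])\,w_{i_\alpha}$ component by component. Writing the contracted (column) index as the PRIR triple $(\nu,\xi_\nu,j_{\xi_\nu})$ and substituting both~\eqref{blee1} and the definition of $w_{i_\alpha}$, the sum over the intermediate index $(\xi_\nu,j_{\xi_\nu})$ is exactly a product of two copies of $\psi_R^\nu[(a,n-1)]$ restricted to the $(\alpha,\alpha)$ block, hence equals $\delta_{k_\alpha i_\alpha}$ by the involution property. What survives is a scalar times $\psi_{Rj_{\xi_\omega}i_\alpha}^{\omega\xi_\omega\alpha}[(a,n-1)]$, i.e. proportional to the $\omega$-component of $w_{i_\alpha}$ itself. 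To identify that scalar as $d$, I would insert $\lambda_\nu(\alpha)=(n-1)m_\nu d_\alpha/(m_\alpha d_\nu)$ from Corollary~\ref{cor_lamda}, which turns the factor $\sqrt{d_\nu m_\nu\lambda_\nu(\alpha)}$ into $m_\nu\sqrt{(n-1)d_\alpha/m_\alpha}$, so the sum over $\nu$ reduces to $\sum_\nu m_\nu$; the branching identity then produces eigenvalue exactly $d$.

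The orthonormality of the $w_{i_\alpha}$ rests on the same ingredients: in $\langle w_{i_\alpha},w_{i'_\alpha}\rangle$ the inner sum over $(\xi_\nu,j_{\xi_\nu})$ is the inner product of two columns of the unitary matrix $\psi_R^\nu[(a,n-1)]$, giving $\delta_{i_\alpha i'_\alpha}$, and the prefactor $\sum_\nu m_\nu/(d\,m_\alpha)=1$ is again the branching identity. For completeness I would argue by rank. Specializing the operator identity $V'V[(a,n-1)]V'=d^{\delta_{a,n-1}}V'$ to $a=n-1$ gives $(V')^2=dV'$, and conjugating by $V[(a,n-1)]$ (which commutes with $t_n$ since it fixes the last subsystem) gives $(V'[(a,n)])^2=d\,V'[(a,n)]$ for every $a$; as $V'[(a,n)]$ is Hermitian, $M_f^\alpha(V'[(a,n)])/d$ is an orthogonal projector, whose rank is its trace. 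A short trace computation — collapsing the diagonal matrix elements by the same involution trick and then using $\sum_\omega d_\omega\lambda_\omega(\alpha)=(n-1)(d_\alpha/m_\alpha)\sum_\omega m_\omega=(n-1)d\,d_\alpha$ — yields $\tr M_f^\alpha(V'[(a,n)])=d\,d_\alpha$, so the projector has rank $d_\alpha$. The $d_\alpha$ orthonormal eigenvectors $w_{i_\alpha}$ thus span its range, which proves $M_f^\alpha(V'[(a,n)])=d\sum_{i_\alpha}w_{i_\alpha}w_{i_\alpha}^\dagger$ and identifies the orthogonal complement as the eigenvalue-$0$ space spanned by the $w_j$.

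The main obstacle is the branching identity $\sum_{\nu=\alpha+\square}m_\nu=d\,m_\alpha$, which governs simultaneously the eigenvalue, the normalization, and the rank. Since $m_\alpha,m_\nu$ are the Schur--Weyl multiplicities of Proposition~\ref{mult_a}, equal to the dimensions of the $GL_d$ Weyl modules $\mathbb{S}_\alpha(\mathbb{C}^d)$ and $\mathbb{S}_\nu(\mathbb{C}^d)$, this is Pieri's rule $\mathbb{S}_\alpha(\mathbb{C}^d)\otimes\mathbb{C}^d\cong\bigoplus_{\nu=\alpha+\square}\mathbb{S}_\nu(\mathbb{C}^d)$ read off in dimensions; the terms with $h(\nu)>d$ drop out automatically, consistently with the exclusion $\rho\neq\theta$ in the statement. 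The only other care needed is the bookkeeping of the PRIR block structure, namely keeping straight which label is the summed $S(n-2)$ block $\xi_\nu$ and which is the fixed $\alpha$-block index $i_\alpha$, so that the involution and unitarity identities are applied to the correct $(\alpha,\alpha)$ block.
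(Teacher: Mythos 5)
Your proof is correct. The paper states Proposition~\ref{eigen1} without an explicit proof, and your argument is precisely the direct PRIR verification its surrounding machinery is set up for: you substitute the matrix elements of Proposition~\ref{BP20} (reading the evident typo $\sqrt{d_{\xi }d_{\omega }}$ in~\eqref{blee1} as $\sqrt{d_{\nu }d_{\omega }}$), collapse the contracted block sums with $\psi _{R}^{\nu }[(a,n-1)]^{2}=\mathbf{1}$ and unitarity, and normalize via Corollary~\ref{cor_lamda} together with the branching identity $\sum_{\nu =\alpha +\square }m_{\nu }=d\,m_{\alpha }$. You are also right that this identity is the crux, and it is consistent with the paper's own bookkeeping: combined with Corollary~\ref{cor_lamda} it is exactly equivalent to $\tr Q(\alpha )=d(n-1)d_{\alpha }$, which the paper uses in the proof of Theorem~\ref{limits}, and your Pieri/Schur--Weyl justification (with $h(\nu )>d$ terms vanishing, matching the exclusion $\rho \neq \theta $) is sound.

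The one place where you go beyond straightforward bookkeeping is the completeness step, and your route there is clean and valid: $(V'[(a,n)])^{2}=d\,V'[(a,n)]$ follows from $V'{}^{2}=dV'$ by conjugation with $V[(a,n-1)]$ (since $(a,n)=(a,n-1)(n-1,n)(a,n-1)$ and $V[(a,n-1)]$ acts trivially on the last factor, so it commutes with $t_{n}$), and since $V'[(a,n)]$ is Hermitian its image $M_{f}^{\alpha }(V'[(a,n)])/d$ in the orthonormal reduced basis $f$ is an orthogonal projector; your trace count $\tr M_{f}^{\alpha }(V'[(a,n)])=\frac{1}{n-1}\sum_{\omega }d_{\omega }\lambda _{\omega }(\alpha )=d\,d_{\alpha }$ then pins the rank at $d_{\alpha }$, so the $d_{\alpha }$ orthonormal eigenvectors exhaust the eigenvalue-$d$ eigenspace and force the spectral decomposition $M_{f}^{\alpha }(V'[(a,n)])=d\sum_{i_{\alpha }}w_{i_{\alpha }}w_{i_{\alpha }}^{\dagger }$. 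An equally direct alternative, closer in spirit to the entrywise computations elsewhere in the paper, would be to verify the outer-product formula componentwise, since $\sum_{i_{\alpha }}\psi _{R\,j_{\xi _{\omega }}i_{\alpha }}^{\omega \,\xi _{\omega }\,\alpha }[(a,n-1)]\,\psi _{R\,i_{\alpha }j_{\xi _{\nu }}}^{\nu \,\alpha \,\xi _{\nu }}[(a,n-1)]$ with the prefactors reproduces~\eqref{blee1} exactly; but your rank-trace argument buys the same conclusion with less index bookkeeping. No gaps.
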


We define also

\begin{definition}
	\label{def5}
   The rectangular matrix 
	\be
	W\equiv W[a,\alpha ]=[w_{1}w_{2}\ldots w_{d_{\alpha }}]\in \M(d_{\alpha }\times
	\dim M_{f}^{\alpha },\mathbb{C}) 
	\ee
	has the columns which are eigenvectors for eigenvalue $d$ of the matrix $M_{f}^{\alpha }(V'[(a,n)])$, defined in Proposition~\ref{eigen1}.
\end{definition}

Next we will need also the dimension structure of the natural
representation of $\A$. Now we recall the

\begin{theorem}
	The algebra $A_{n}^{t_{n}}(d)$ in its natural representation in the space $(\mathbb{C}^{d})^{\otimes n}$ has the following decomposition into irreps
	\be
	V_{d}[A_{n}^{t_{n}}(d)]=\left[ \bigoplus _{\alpha :h(\alpha )\leq d}m_{\alpha }\Phi
	^{\alpha }\right] \oplus \left[\bigoplus _{\nu :h(\nu )<d}M_{\nu }\Psi ^{\nu }\right] , 
	\ee
	where the multiplicity $m_{\alpha }$ is equal to the multiplicity of the irrep $\varphi ^{\alpha }$ of $S(n-2)$ in the representation $V_{d}[S(n-2)]$ i.e.
	\be
	m_{\alpha }=\frac{1}{(n-2)!}\sum_{\sigma \in S(n-2)}\chi ^{\alpha }(\sigma
	^{-1})d^{l(\sigma )} 
	\ee
	and 
	\be
	M_{\nu }=dm_{\nu }-\sum_{\alpha :\nu \in \ind_{S(n-2)}^{S(n-1)}(\varphi
		^{\alpha })}m_{\alpha }. 
	\ee
\end{theorem}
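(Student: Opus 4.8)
The plan is to take from Theorem~\ref{decompA} the shape of the decomposition — that the natural representation is a direct sum of first-kind irreps $\Phi^\alpha$ with $h(\alpha)\leq d$ and second-kind irreps $\Psi^\nu$ with $h(\nu)<d$, together with the first-kind multiplicity $m_\alpha=\frac{1}{(n-2)!}\sum_{\sigma\in S(n-2)}\chi^\alpha(\sigma^{-1})d^{l(\sigma)}$ — and then to pin down the remaining second-kind multiplicity $M_\nu$ by restricting the whole representation to the subgroup $S(n-1)\subset S(n)$. The key observation is that $S(n-1)$ permutes only the first $n-1$ tensor legs of $\mathcal{H}=(\mathbb{C}^d)^{\otimes n}=(\mathbb{C}^d)^{\otimes(n-1)}\otimes\mathbb{C}^d$ and acts as the identity on the last leg, so as an $S(n-1)$-module $\mathcal{H}\cong V_d(S(n-1))\otimes\mathbb{C}^d$ with trivial action on the extra factor. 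Hence the multiplicity of any irrep $\psi^\nu$ of $S(n-1)$ inside $\mathcal{H}$ equals $d\,m_\nu$, where $m_\nu=\frac{1}{(n-1)!}\sum_{\sigma\in S(n-1)}\chi^\nu(\sigma^{-1})d^{l(\sigma)}$ is the multiplicity of $\psi^\nu$ in $V_d(S(n-1))$ furnished by Proposition~\ref{mult_a}.

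First I would recompute this same multiplicity from the algebra side, by restricting each irrep appearing in the decomposition to the subalgebra $\mathbb{C}[S(n-1)]\subset\mathcal{A}_n^{t_n}(d)$. By Proposition~\ref{f_basis} (equation~\eqref{expr1a}) a first-kind irrep $\Phi^\alpha$ restricts on $S(n-1)$ to $\ind_{S(n-2)}^{S(n-1)}(\varphi^\alpha)$, truncated to its components of height $\leq d$; by the branching rule $\psi^\nu$ occurs there with multiplicity $1$ exactly when $\nu\in\ind_{S(n-2)}^{S(n-1)}(\varphi^\alpha)$, i.e. $\nu=\alpha+\square$, and with multiplicity $0$ otherwise. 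A second-kind irrep $\Psi^\nu$ restricts on $S(n-1)$ to $\psi^\nu$ itself, contributing multiplicity $1$ to its own label and nothing elsewhere. Summing these contributions over the decomposition of Theorem~\ref{decompA}, the total multiplicity of $\psi^\nu$ in $\mathcal{H}$ equals $M_\nu+\sum_{\alpha:\,\nu\in\ind_{S(n-2)}^{S(n-1)}(\varphi^\alpha)}m_\alpha$.

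Equating the two counts would then give $M_\nu=d\,m_\nu-\sum_{\alpha:\,\nu\in\ind_{S(n-2)}^{S(n-1)}(\varphi^\alpha)}m_\alpha$, the asserted formula, with $m_\alpha$ and the index ranges inherited verbatim from Theorem~\ref{decompA}. The step I expect to be the main obstacle is the careful bookkeeping of the height constraints when $d\leq n-2$: in that regime $\ind_{S(n-2)}^{S(n-1)}(\varphi^\alpha)$ contains components of height $>d$ that are dropped in forming $S(\Phi^\alpha)$, so one must verify both that the label $\nu$ being counted (which has $h(\nu)<d$) is never among the dropped pieces and that every $\alpha$ with $\nu=\alpha+\square$ still satisfies $h(\alpha)\leq d$ so that $\Phi^\alpha$ genuinely occurs; both hold because adding one box changes the height by at most one, forcing $h(\alpha)\leq h(\nu)<d$. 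A final consistency check is the dimension identity $d^n=\sum_{\alpha:h(\alpha)\leq d}m_\alpha\dim\Phi^\alpha+\sum_{\nu:h(\nu)<d}M_\nu\,d_\nu$.
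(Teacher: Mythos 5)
Your argument is correct, and it is worth noting that the paper itself contains no proof of this statement: both here and in Theorem~\ref{decompA} the decomposition is simply recalled from~\cite{Moz1}, so your proposal supplies a proof the text omits. Your route --- importing the shape of the decomposition and the first-kind multiplicities $m_\alpha$ from Theorem~\ref{decompA}, then pinning down $M_\nu$ by restricting the natural representation to $\mathbb{C}[S(n-1)]\subset\A$ and double-counting the multiplicity of $\psi^\nu$ --- is self-contained given ingredients the paper does state: equation~\eqref{expr1a} of Proposition~\ref{f_basis} shows $\Phi^\alpha$ restricts on $S(n-1)$ to the height-truncated induced representation, which is multiplicity-free by the branching rule (the paper calls the decomposition ``simple reducible'' in Section~\ref{simplification}); $\Psi^\nu$ restricts to $\psi^\nu$; and since $S(n-1)$ fixes the last tensor leg, $\mathcal{H}\cong V_d(S(n-1))\otimes\mathbb{C}^d$ with trivial action on the second factor, giving the count $d\,m_\nu$ via Proposition~\ref{mult_a}. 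Your height bookkeeping is exactly the point that needs care and you handle it correctly: removing a box can only lower the height, so every $\alpha$ with $\nu=\alpha+\square$ satisfies $h(\alpha)\le h(\nu)<d$ and genuinely contributes, while $\nu$ itself, having $h(\nu)<d$, is never among the truncated components of $S(\Phi^\alpha)$; moreover the sum over $\alpha$ in the stated formula needs no explicit height restriction because $m_\alpha=0$ whenever $h(\alpha)>d$. Your concluding dimension identity is exactly Corollary~\ref{cor7}. A bonus of your derivation: it confirms $M_\nu=d\,m_\nu-\sum_{\alpha:\nu\in\ind_{S(n-2)}^{S(n-1)}(\varphi^\alpha)}m_\alpha$ and thereby exposes the formula $m_\nu=d^n-\sum_\alpha m_\alpha$ displayed in Theorem~\ref{decompA} as a typo for the same expression. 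The one caveat is scope: since the statement subsumes the first-kind part of Theorem~\ref{decompA}, your argument proves only the new content (the value of $M_\nu$), with the rest imported --- which is legitimate within this paper, where those facts are presented as previously established.
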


From the above theorem we deduce

\begin{corollary}
	\label{cor7}
	We have the following relation between dimensions of the natural
	representation space and dimensions and multiplicities of the irreps of the algebra $A_{n}^{t_{n}}(d)$ 
	\be
	d^{n}=\sum_{\alpha :h(\alpha )\leq d}m_{\alpha }\dim \Phi ^{\alpha
	}+\sum_{_{\nu :h(\nu )<d}}M_{\nu }\dim \psi ^{\nu }, 
	\ee
	or equivalently
	\be
	1=\frac{1}{d^{n}}\sum_{\alpha :h(\alpha )\leq d}m_{\alpha }\dim \Phi
	^{\alpha }+\frac{1}{d^{n}}\sum_{_{\nu :h(\nu )<d}}M_{\nu }\dim \psi ^{\nu }. 
	\ee
\end{corollary}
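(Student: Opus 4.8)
The plan is to read off the relation directly from the immediately preceding decomposition theorem by a dimension count. Since the algebra $A_n^{t_n}(d)$ is semisimple (Remark~\ref{split}), its natural representation on $\mathcal{H}=(\mathbb{C}^d)^{\otimes n}$ is a direct sum of irreps, and the dimension of a direct sum equals the sum of the dimensions of its summands. The whole statement is therefore obtained by applying the dimension function to both sides of the stated decomposition.

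First I would note that the representation space has dimension $\dim\mathcal{H}=d^n$, simply because $\mathcal{H}=(\mathbb{C}^d)^{\otimes n}$. Next I would apply the dimension function to both sides of the decomposition
\be
V_d[A_n^{t_n}(d)]=\left[\bigoplus_{\alpha:h(\alpha)\leq d}m_\alpha\Phi^\alpha\right]\oplus\left[\bigoplus_{\nu:h(\nu)<d}M_\nu\Psi^\nu\right],
\ee
using additivity over direct sums, so that each irrep $\Phi^\alpha$ contributes $m_\alpha\dim\Phi^\alpha$ and each irrep $\Psi^\nu$ contributes $M_\nu\dim\Psi^\nu$. Finally, recalling from the proposition on irreps of the second kind that the representation space of $\Psi^\nu$ satisfies $S(\Psi^\nu)=S(\psi^\nu)$, and hence $\dim\Psi^\nu=\dim\psi^\nu=d_\nu$, the first displayed equation of the corollary follows. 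The equivalent second form is then obtained immediately by dividing both sides by $d^n$.

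There is essentially no obstacle here: the statement is a routine dimension count, and the only inputs are the semisimplicity of the algebra, the decomposition theorem stated just above, and the identification $\dim\Psi^\nu=\dim\psi^\nu$. The only point worth checking carefully is that the multiplicities $m_\alpha$ and $M_\nu$ appearing in the decomposition are precisely those written in the corollary, which they are by construction, so that no rescaling or reindexing is needed when the dimensions are summed.
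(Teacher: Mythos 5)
Your proposal is correct and matches the paper's own (implicit) argument: the corollary is deduced directly from the decomposition theorem stated immediately before it by counting dimensions, using $\dim\mathcal{H}=d^{n}$, additivity of dimension over direct sums, and $\dim\Psi^{\nu}=\dim\psi^{\nu}$. Your only addition is to spell out these routine steps, which the paper leaves tacit.
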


Further in the proof of the Theorem~\ref{limits} we will need also very classical inequality. Namely we have the following:

\begin{theorem}
	\label{jensen}
	Suppose that the function $f:\mathbb{R}\rightarrow \mathbb{R}$ is convex in some subset $D\subset \mathbb{R}$ of its domain, then for any numbers $\lambda _{1},\ldots,\lambda _{m}\in D$
	and for any probability distribution $\sum_{i=1}^{m}s_{i}=1$ we have 
	\be
	f\left( \sum_{i=1}^{m}s_{i}\lambda _{i}\right) \leq \sum_{i=1}^{m}s_{i}f(\lambda _{i}). 
	\ee
	Moreover it is known, that the function $f:\mathbb{R} \rightarrow \mathbb{R}$, which is twice differentiable on a subset $D\subset 
    \mathbb{R}$ and 
	\be
	f^{\prime \prime }(\lambda )\geq 0,\qquad \lambda \in D 
	\ee
	is convex in subset $D\subset \mathbb{R}$ of its domain.
\end{theorem}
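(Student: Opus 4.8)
The plan is to prove the two assertions separately: first Jensen's inequality from the defining two-point convexity condition by induction on the number of points $m$, and then the second-derivative criterion for convexity via the mean value theorem. Throughout I take $D$ to be an interval (so that convex combinations of its points remain in $D$) and the weights $s_i$ to be non-negative, as is implicit in the hypotheses.

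For Jensen's inequality I would induct on $m$. The base case $m=2$ is exactly the definition of convexity, $f(s_1\lambda_1+s_2\lambda_2)\le s_1 f(\lambda_1)+s_2 f(\lambda_2)$ for $s_1+s_2=1$. For the inductive step, assuming the claim for $m-1$ points and supposing without loss of generality that $s_m<1$, I would split off the last term by writing $\sum_{i=1}^{m}s_i\lambda_i=(1-s_m)\,\mu+s_m\lambda_m$, where $\mu=\sum_{i=1}^{m-1}\frac{s_i}{1-s_m}\lambda_i$. The key bookkeeping step is to check that the renormalised coefficients $s_i/(1-s_m)$ are non-negative and sum to one, so that $\mu\in D$ and the inductive hypothesis applies to it. Applying two-point convexity to the pair $(\mu,\lambda_m)$ and then the inductive hypothesis to bound $f(\mu)$ then combine to give $f\!\left(\sum_{i=1}^{m} s_i\lambda_i\right)\le\sum_{i=1}^{m} s_i f(\lambda_i)$; the degenerate case $s_m=1$ is immediate.

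For the convexity criterion I would first note that $f''\ge 0$ on $D$ forces $f'$ to be non-decreasing there, and then verify the two-point inequality directly. Fixing $x<y$ in $D$ and setting $z=tx+(1-t)y$ with $t\in(0,1)$, I would apply the mean value theorem on $[x,z]$ and on $[z,y]$ to obtain $\xi_1\in(x,z)$ and $\xi_2\in(z,y)$ with $f(z)-f(x)=f'(\xi_1)(z-x)$ and $f(y)-f(z)=f'(\xi_2)(y-z)$. Since $\xi_1<\xi_2$ and $f'$ is non-decreasing we have $f'(\xi_1)\le f'(\xi_2)$; combining this with the identities $z-x=(1-t)(y-x)$ and $y-z=t(y-x)$ and rearranging yields $f(z)\le t f(x)+(1-t)f(y)$, which is precisely convexity.

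The result is entirely classical, so no step is a genuine obstacle; the only point that needs a little care is the weight renormalisation in the inductive step — confirming that $s_i/(1-s_m)$ again forms a probability distribution and hence that the intermediate point $\mu$ stays inside $D$ — since it is exactly this that allows the two-point definition and the inductive hypothesis to be applied in succession.
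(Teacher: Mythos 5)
Your proof is correct. Note, however, that the paper itself offers no proof of this theorem: it is stated explicitly as a classical background fact (``very classical inequality'', ``it is known that\dots'') and invoked as such in the proof of Theorem~\ref{limits}, so there is no argument in the paper to compare yours against. Your two-part argument --- induction on $m$ with the weight renormalisation $s_i/(1-s_m)$ for Jensen's inequality, and the mean value theorem applied on $[x,z]$ and $[z,y]$ together with monotonicity of $f'$ for the second-derivative criterion --- is the standard textbook proof and is complete. You also did well to make explicit two hypotheses the paper's statement leaves tacit: that $D$ must be an interval (otherwise the convex combination $\sum_i s_i\lambda_i$, and the intermediate point $\mu$ in your induction, need not lie in $D$) and that the weights satisfy $s_i\geq 0$, which the phrase ``probability distribution'' only implies. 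One pedantic remark: in the second part you use that $f''\geq 0$ on $D$ forces $f'$ to be non-decreasing, which is itself an application of the mean value theorem to $f'$ on subintervals of $D$ and again requires $D$ to be an interval; since you assumed this at the outset, the argument stands.
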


From Theorem~\ref{jensen} we can deduce that

\begin{proposition}
	\label{prop49}
	The function $f(\lambda )=\lambda ^{-1/2}:\lambda \geq 0$ is convex, because $f^{\prime \prime }(\lambda )=\frac{3}{4}\lambda
	^{-5/2}\geq 0$.
\end{proposition}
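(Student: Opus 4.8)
The plan is to verify convexity by the second-derivative criterion supplied in Theorem~\ref{jensen}. Since the eigenvalues $\lambda_{\nu}(\alpha)$ that enter the fidelity are strictly positive (they are the nonzero eigenvalues of $Q(\alpha)$), the natural domain is $D=(0,\infty)$, on which $f(\lambda)=\lambda^{-1/2}$ is smooth, so there is no issue with differentiability.

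First I would differentiate directly. Writing $f(\lambda)=\lambda^{-1/2}$, one power-rule differentiation gives $f'(\lambda)=-\tfrac{1}{2}\lambda^{-3/2}$, and a second gives
\be
f''(\lambda)=-\tfrac{1}{2}\cdot\left(-\tfrac{3}{2}\right)\lambda^{-5/2}=\tfrac{3}{4}\lambda^{-5/2}.
\ee
For every $\lambda>0$ the factor $\lambda^{-5/2}$ is positive, hence $f''(\lambda)=\tfrac{3}{4}\lambda^{-5/2}>0\geq 0$ throughout $D$.

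By the second part of Theorem~\ref{jensen}, any twice-differentiable function whose second derivative is nonnegative on a subset $D\subset\mathbb{R}$ is convex on $D$. Applying this with $f(\lambda)=\lambda^{-1/2}$ and $D=(0,\infty)$ immediately yields convexity, completing the proof. There is no real obstacle here: the entire content is the elementary second-derivative computation, whose sign is manifest once the negative exponents are handled correctly. The only point worth stating explicitly is the restriction to strictly positive $\lambda$, which is exactly the regime in which the eigenvalues of $\rho$ used later in the proof of Theorem~\ref{limits} live.
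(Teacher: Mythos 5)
Your proof is correct and follows exactly the paper's own argument: the elementary computation $f''(\lambda)=\tfrac{3}{4}\lambda^{-5/2}\geq 0$ combined with the second-derivative criterion of Theorem~\ref{jensen}. Your explicit restriction to $D=(0,\infty)$ is a minor but sensible refinement of the paper's statement ``$\lambda\geq 0$'', since $f$ is undefined at $\lambda=0$ and only strictly positive eigenvalues enter the fidelity bound anyway.
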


Now we are in the position to prove Theorem~\ref{limits}. The proof is the following.\\

\begin{proof}[Proof of Theorem~\ref{limits}]
	First we prove the bound condition. Using the spectral decomposition 
	\be
	M_{f}^{\alpha }(V'[(a,n)])=d\sum_{i_{\alpha }=1}^{d_{\alpha
	}}w_{i_{\alpha }}[a,\alpha ]w_{i_{\alpha }}^{\dagger}[a,\alpha ] 
	\ee
	by a direct calculation one gets
	\be
	\tr M_{f}^{\alpha }\left[ \rho ^{-\frac{1}{2}}V'[(a,n)]\rho ^{-\frac{1}{2}
	}V'[(a,n)]\right] =d^{2}\sum_{i_{\alpha },j_{\alpha }=1}^{d_{\alpha
	}}\left| (w_{i_{\alpha }},M_{f}^{\alpha }(\rho ^{-\frac{1}{2}})w_{j_{\alpha
	}})\right| ^{2}.
	\ee
	On the other hand similarly we have 
	\be
	\tr\left[ W^{\dagger}M_{f}^{\alpha }(\rho ^{-\frac{1}{2}})W\right] =\sum_{p=d_{\alpha }+1}^{\dim
		M_{f}^{\alpha }}\sum_{i_{\alpha }=1}^{d_{\alpha }}\left| (w_{i_{\alpha
	}},M_{f}^{\alpha }(\rho ^{-\frac{1}{2}})w_{j_{\alpha
	}})\right| ^{2}+\sum_{i_{\alpha },j_{\alpha }=1}^{d_{\alpha }}\left| (w_{i_{\alpha
	}},M_{f}^{\alpha }(\rho ^{-\frac{1}{2}})w_{j_{\alpha }})\right| ^{2},
	\ee
	where the rectangular matrix $W$ is described in Definition~\ref{def5}. From these
	equations we deduce 
	\be
	\tr\left[ W^{\dagger}M_{f}^{\alpha }(\rho ^{-1})W\right] =\sum_{p=d_{\alpha }+1}^{\dim
		M_{f}^{\alpha }}\sum_{i_{\alpha }=1}^{d_{\alpha }}\left| (w_{i_{\alpha
	}},M_{f}^{\alpha }(\rho ^{-\frac{1}{2}})w_{j_{\alpha }})\right| ^{2}+\frac{1}{d^{2}}%
	\tr M_{f}^{\alpha }[\rho ^{-\frac{1}{2}}V'[(a,n)]\rho ^{-\frac{1}{2}%
	}V'[(a,n)]],
	\ee
	so 
	\be
	d^{2}\tr\left[ W^{\dagger}M_{f}^{\alpha }(\rho ^{-1})W\right] \geq \tr M_{f}^{\alpha }[\rho ^{-
		\frac{1}{2}}V'[(a,n)]\rho ^{-\frac{1}{2}}V'[(a,n)]].
	\ee
	Now using the explicit form of the matrix $W$ given in Definition~\ref{def5} and Proposition~\ref{eigen1} we
	get the following formula 
	\be
	\tr\left[ W^{\dagger}M_{f}^{\alpha }(\rho ^{-1})W\right] =\frac{1}{(n-1)d}\dim M_{f}^{\alpha },
	\ee
	Finally we get the following upper bound for trace of the operator $\rho ^{-\frac{1}{2}}V'[(a,n)]\rho ^{-\frac{1}{2}}V'[(a,n)]$ in the irrep $M_{f}^{\alpha }$ 
	\be
	\frac{d}{(n-1)}\dim M_{f}^{\alpha }\geq \tr M_{f}^{\alpha }[\rho ^{-\frac{1}{2}}V'[(a,n)]\rho ^{-\frac{1}{2}}V'[(a,n)]].
	\ee
	From this we deduce the upper bound for fidelity $F_{n}(d)$
	in the following way
	\be
	\begin{split}
	F_n(d)&=\frac{n-1}{d^{n+1}}\tr_{(\mathbb{C}^{d})^{\otimes n}}[\rho ^{-\frac{1}{2}}V'[(a,n)]\rho ^{-\frac{1}{2}
	}V'[(a,n)]]\\
	&=\frac{n-1}{d^{n+1}}\sum_{\alpha :h(\alpha )\leq d}m_{\alpha
	}\tr M_{f}^{\alpha }[\rho ^{-\frac{1}{2}}V'[(a,n)]\rho ^{-\frac{1}{2}%
	}V'[(a,n)]]\\\
	&\leq \frac{n-1}{d^{n+1}}\sum_{\alpha :h(\alpha )\leq d}m_{\alpha }\frac{d}{%
		(n-1)}\dim M_{f}^{\alpha }=\frac{1}{d^{n}}\sum_{\alpha :h(\alpha )\leq
		d}m_{\alpha }\dim M_{f}^{\alpha }.
	\end{split}
	\ee
	Now from Corollary~\ref{cor7} we have
	\be
	\frac{1}{d^{n}}\sum_{\alpha :h(\alpha )\leq d}m_{\alpha }\dim M_{f}^{\alpha
	}=1-\frac{1}{d^{n}}\sum_{_{\nu :h(\nu )<d}}M_{\nu }\dim \psi ^{\nu }<1,
	\ee
	so in this way we get the first statement of Theorem~\ref{limits}
	\be
	F_n(d)\leq 1-\frac{1}{d^{n}}\sum_{_{\nu :h(\nu )<d}}M_{\nu
	}\dim \psi ^{\nu }<1.
	\ee
	In order to prove the remaining part of the theorem we consider the
	generalised fidelity 
	\be
	F_n(d)=\frac{n-1}{d^{n+1}}\tr_{(\mathbb{C}^{d})^{\otimes n}}[\rho  ^{-\frac{1}{2}}V'[(a,n)]\rho ^{-\frac{1}{2}}V'[(a,n)]]=\frac{n-1}{%
		d^{n+1}}\sum_{\alpha :h(\alpha )\leq d}\frac{d_{\alpha }}{m_{\alpha }}
	\left[ \sum_{\mu \neq \theta }\lambda ^{-\frac{1}{2}} _{\mu}(\alpha)m_{\mu }\right] ^{2},
	\ee
	which may rewritten as follows
	\be
	F_{n}(d)=\frac{1}{(n-1)d^{n+1}}\sum_{\alpha :h(\alpha )\leq d}\frac{%
		m_{\alpha }}{d_{\alpha }}\left[ \sum_{\mu \neq \theta }\lambda _{\mu} ^{\frac{1}{2}}(\alpha)d_{\mu }\right] ^{2}.
	\ee
	Further
	\be
	\begin{split}
	F_n(d)&=\frac{1}{(n-1)d^{n+1}}\sum_{\alpha :h(\alpha )\leq d}\frac{%
		m_{\alpha }}{d_{\alpha }}[d(n-1)d_{\alpha }]^{2}\left[ \sum_{\mu \neq \theta }%
	\frac{d_{\mu }\lambda _{\mu}(\alpha)}{d(n-1)d_{\alpha }}\lambda _{\mu} ^{-\frac{1}{2}}(\alpha)\right] ^{2}\\
	&=\frac{n-1}{d^{n-1}}\sum_{\alpha :h(\alpha )\leq d}m_{\alpha }d_{\alpha
	}\left[ \sum_{\mu \neq \theta }s_{\mu }\lambda _{\mu} ^{-\frac{1}{2}}(\alpha)\right] ^{2},
\end{split}
	\ee
	where 
	\be
	s_{\mu }=\frac{d_{\mu }\lambda _{\mu}(\alpha)}{d(n-1)d_{\alpha }} \qquad \text{with} \qquad \sum_{\mu \neq \theta }s_{\mu }=\sum_{\mu =\alpha +\square
	}s_{\mu }=1,
	\ee
	because $\sum_{\mu \neq \theta }d_{\mu }\lambda _{\mu}(\alpha)=\sum_{\mu
		=\alpha +\square }d_{\mu }\lambda _{\mu}(\alpha)=\tr Q(\alpha
	)=d(n-1)d_{\alpha },$ thus the sum 
	\be
	\sum_{\mu \neq \theta }s_{\mu }\lambda _{\mu} ^{-\frac{1}{2}}(\alpha)
	\ee
	is a convex combination of the numbers $\lambda _{\mu ,\alpha }^{-1/2}$, and we
	may use Theorem~\ref{jensen}, and Proposition~\ref{prop49}, which give 
	\be
	\left( \sum_{\mu \neq \theta }s_{\mu }\lambda
	_{\mu}(\alpha)\right) ^{-\frac{1}{2}}\leq \sum_{\mu \neq \theta }s_{\mu }\lambda _{\mu} ^{-\frac{1}{2}}(\alpha).
	\ee
	Using above inequality we get the following lower bound for generalised
	fidelity%
	\be
	\begin{split}\label{fidelitylb}
	F_{n}(d)&=\frac{n-1}{d^{n-1}}\sum_{\alpha :h(\alpha )\leq d}m_{\alpha
	}d_{\alpha }\left(  \sum_{\mu \neq \theta }s_{\mu }\lambda _{\mu}^{-\frac{1}{2}}(\alpha)\right) ^{2}
	\geq \frac{n-1}{d^{n-1}}\sum_{\alpha :h(\alpha )\leq d}m_{\alpha }d_{\alpha
	}\left( \sum_{\mu \neq \theta }s_{\mu }\lambda _{\mu}(\alpha)\right) ^{-1}\\
    &=\frac{n-1}{%
		d^{n-1}}\sum_{\alpha :h(\alpha )\leq d}m_{\alpha }d_{\alpha }\left( \sum_{\mu \neq
		\theta }\frac{d_{\mu }\lambda _{\mu}^{2}(\alpha)}{d(n-1)d_{\alpha }}\right) ^{-1}
	=\frac{n-1}{d^{n-1}}\sum_{\alpha :h(\alpha )\leq d}m_{\alpha }d_{\alpha }\left( 
	\frac{\tr Q^{2}(\alpha )}{\tr Q(\alpha )}\right) ^{-1}.
	\end{split}
	\ee
	Directly from Definition~\ref{def_Q} of the matrix $Q(\alpha )$ we get that $%
	\tr Q^{2}(\alpha )=(n-1)d_{\alpha }(d^{2}+n-2)$, and
	\be
	\label{gen_bound}
	F_{n}(d)\geq \frac{n-1}{d^{n-1}}\sum_{\alpha :h(\alpha )\leq d}m_{\alpha
	}d_{\alpha }\left( \frac{d^{2}+n-2}{d}\right) ^{-1}=\frac{n-1}{d}\left( \frac{d^{2}+n-2}{d}\right)^{-1},
	\ee
    so 
	\be
	\label{Koning}
	F_{n}(d)\geq \frac{n-1}{d^{2}+n-2},
	\ee
	and thus recover the result of~\cite{beigi_konig} obtained using different method.
	Inequality~\eqref{Koning} together with the upper bound $F_{n}(d)<1$ implies that 
	\be
	\lim_{n\rightarrow \infty }F_{n}(d)=1.
	\ee
\end{proof}
\section{Discussion and open problems}
We found significant simplifications of the algebra $\A$  of the partially transposed permutation operators with respect to last subsystem  by developing tools of PRIRs by proving a few new orthogonality theorems for them. Our successful approach to study PRIRs we apply to algebra $\A$  by simplifying existing theorems. The main simplifications concern to matrix $Q$ given in Definition~\ref{def_Q} and matrix $Z$ given in Theorem~\ref{thm18} constructed from eigenvectors of $Q$. We were able to reduce complexity of underlying expressions by reducing number of sums over all permutation from $S(n)$. Such a reduction allows us to perform any calculations especially devoted to practical applications discussed later more efficiently. Second main result obtained thanks to new approach are relatively simple equations for the matrix elements of operators $V'[(a,n)]$ for $a=1,\ldots,n$ with particular case when $a=n-1$.
 
Finally we applied derived simplifications to obtain characteristic of the deterministic port-based teleportation scheme. Firstly we gave explicit connection between PBT operator $\rho$ and matrix $Q$ describing properties of the algebra $\A$. We have shown that non-zero eigenvalues of $Q$ are exactly eigenvalues of the operator $\rho$. Later we presented derivation for the fidelity $F$ of the teleported state and expressed final result by parameters describing irreps of $\A$ like dimensions and multiplicities as well as  global parameters $d$ and $N$. We presented asymptotic analysis  of $F$ showing that $\operatorname{lim}_{N\rightarrow \infty}F=1$ for fixed $d$ which certifies our approach. Moreover using completely new method of computation based on analysis of the eigen-problem of $V'[(a,n)]$ for $a=1,\ldots,n-1$ we derived known non-trivial lower bound for the fidelity $F$ expressed only by global parameters as number of ports $N$ and dimension $d$.

Despite of progress made in this manuscript and papers~\cite{Stu2017,Moz2017b} there are still a few open questions connected with the general theory of algebra $\A$ and possible applications to PBT. The most interesting and important question in opinion of the authors would be a full  eigen-analysis of the PBT operator $\rho$ similarly as for $V'[(a,n)]$ in Proposition~\ref{eigen1}. Namely we would like to find its in  eigenvectors (since eigenvalues are known) in terms of parameters describing irreps of $\A$ and analyse its entanglement with respect to some particular cuts. Such a analysis would be helpful in extension of the hybrid PBT to higher dimensions. Additionally simplification presented in this paper should give technically easier from the perspective of representation theory description of $1\rightarrow N$ universal quantum cloning machines~\cite{Stu2014}.

\section*{Acknowledgements}
MS~is supported by the grant "Mobilno{\'s}{\'c} Plus IV", 1271/MOB/IV/2015/0 from the Polish Ministry of Science and Higher Education. MH and MM are supported by National Science Centre, Poland, grant OPUS 9.  2015/17/B/ST2/01945. 
\appendix
\section{Summary of known fact about PRIRs}
\label{AppAA}
In this appendix for self-consistence of our manuscript we present known fact about PRIRs. For all proofs we refer reader to~\cite{Stu2017}. 
\begin{proposition}
	Let $(\psi _{R}^{\mu })^{\alpha \alpha }(\sigma )=\left( \psi _{i_{\alpha
		}j_{\alpha }}^{\alpha \alpha }(\sigma )\right) $ be the matrices on the diagonal of
	the PRIR matrix $\psi _{R}^{\mu }(\sigma )$ where $\sigma \in S(n)$, then%
	\be
	\forall \alpha \in \mu \quad \varphi ^{\alpha }(\pi )\left( \psi _{R}^{\mu
	})^{\alpha \alpha }[(a,n)]\right) \varphi ^{\alpha }(\pi^{-1})=\left( \psi _{R}^{\mu
	})^{\alpha \alpha }[(\pi(a),n)]\right) , 
	\ee
	and from this it follows%
	\be
	\forall \alpha \in \mu \quad \forall \pi \in S(n-1)\quad \forall
	a=1,\ldots,n-1\qquad \tr\left[ (\psi _{R}^{\mu })^{\alpha \alpha }[(a,n)]\right] =\tr\left[ (\psi _{R}^{\mu
	})^{\alpha \alpha }[(\pi (a),n)]\right] , 
	\ee
	so the trace in each diagonal block is constant  on the transpositions which naturally indexed the coset $S(n)/S(n-1).$
\end{proposition}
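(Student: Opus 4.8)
The plan is to exploit the fact that $\psi_R^\mu$ is a genuine representation together with the block-diagonal structure it acquires on the subgroup $S(n-1)$. First I would use the homomorphism property to write, for any $\pi \in S(n-1)$ and $a=1,\ldots,n-1$,
\[
\psi_R^\mu(\pi)\,\psi_R^\mu[(a,n)]\,\psi_R^\mu(\pi^{-1}) = \psi_R^\mu\bigl[\pi(a,n)\pi^{-1}\bigr].
\]
The key elementary observation is that conjugating a transposition gives $\pi(a,n)\pi^{-1} = (\pi(a),\pi(n))$, and since $\pi \in S(n-1)$ fixes the point $n$ we have $\pi(n)=n$, so this equals $(\pi(a),n)$. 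Hence the right-hand side is simply $\psi_R^\mu[(\pi(a),n)]$.

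Next I would invoke the defining property of a PRIR: for $\pi \in S(n-1)$ the matrix $\psi_R^\mu(\pi)$ is block-diagonal with diagonal blocks $\varphi^\alpha(\pi)$, which is precisely the content of equation~\eqref{prir1}. Because both $\psi_R^\mu(\pi)$ and $\psi_R^\mu(\pi^{-1})$ are block-diagonal, the left- and right-multiplication acts block-by-block on $\psi_R^\mu[(a,n)]$; explicitly the $(\alpha,\beta)$ block of the conjugated matrix is $\varphi^\alpha(\pi)\,(\psi_R^\mu)^{\alpha\beta}[(a,n)]\,\varphi^\beta(\pi^{-1})$. Reading off the diagonal block $\alpha=\beta$ and comparing with the $(\alpha,\alpha)$ block of $\psi_R^\mu[(\pi(a),n)]$ from the previous step yields the asserted conjugation identity
\[
\varphi^\alpha(\pi)\,(\psi_R^\mu)^{\alpha\alpha}[(a,n)]\,\varphi^\alpha(\pi^{-1}) = (\psi_R^\mu)^{\alpha\alpha}[(\pi(a),n)].
\]

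Finally, to obtain the trace statement I would take the trace of both sides of this identity and use the cyclic property of the trace together with $\varphi^\alpha(\pi^{-1})\varphi^\alpha(\pi)=\varphi^\alpha(e)=\mathbf{1}$, giving at once $\tr[(\psi_R^\mu)^{\alpha\alpha}[(\pi(a),n)]]=\tr[(\psi_R^\mu)^{\alpha\alpha}[(a,n)]]$. Since every transposition $(b,n)$ with $b=1,\ldots,n-1$ arises as $(\pi(a),n)$ for a suitable $\pi\in S(n-1)$, and these transpositions are exactly the representatives indexing the cosets of $S(n-1)$ in $S(n)$, this shows the block trace is constant along each coset. I do not anticipate a serious obstacle here; the only point requiring mild care is the bookkeeping of how block-diagonal conjugation distributes over the block structure of $\psi_R^\mu[(a,n)]$, but this is routine once the homomorphism identity and the PRIR block form from equation~\eqref{prir1} are in hand.
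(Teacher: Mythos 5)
Your proof is correct and complete: the homomorphism property plus the conjugation rule $\pi(a,n)\pi^{-1}=(\pi(a),\pi(n))=(\pi(a),n)$ for $\pi\in S(n-1)$, the block-by-block action of the block-diagonal matrices $\psi_R^\mu(\pi^{\pm 1})$ coming from equation~\eqref{prir1}, cyclicity of the trace, and the transitivity of $S(n-1)$ on $\{1,\ldots,n-1\}$ together give exactly the claimed identities. The paper itself does not prove this proposition (it defers all proofs in the appendix to the reference \cite{Stu2017}), but your argument is the standard and evidently intended one, so there is nothing to add.
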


\begin{proposition}
	\label{Prir1}
	The PRIR $\psi _{R}^{\mu }$ of $S(n)$ satisfies the following summation rules%
	\be
	\label{Prir1eq}
	\sum_{a=1}^{n-1}(\psi _{R}^{\mu })[(a,n)]=\frac{n(n-1)}{2}\frac{\chi ^{\mu
		}[(1,2)]}{d_{\mu }}\mathbf{1}_{\psi ^{\mu }}-\bigoplus _{\alpha \in \mu }\frac{%
		(n-1)(n-2)}{2}\frac{\chi ^{\alpha }[(1,2)]}{d_{\alpha }}\mathbf{1}_{\varphi
		^{\alpha }}, 
	\ee
	which implies that for the diagonal blocks we have 
	\be
	\forall \alpha \in \mu \qquad \sum_{a=1}^{n-1}(\psi _{R}^{\mu })^{\alpha
		\alpha }[(a,n)]=\left[ \frac{n(n-1)}{2}\frac{\chi ^{\mu }[(1,2)]}{d_{\mu }}
	-\frac{(n-1)(n-2)}{2}\frac{\chi ^{\alpha }[(1,2)]}{d_{\alpha }}%
	\right] \mathbf{1}_{\varphi_{\alpha }}. 
	\ee
\end{proposition}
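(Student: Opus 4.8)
The plan is to exploit the fact that the sum of all transpositions of $S(n)$ is the class sum of the conjugacy class of transpositions, hence a central element of the group algebra $\mathbb{C}[S(n)]$. By Schur's lemma it is represented on the irrep $\psi^\mu$ by a scalar multiple of the identity, and the scalar is pinned down by taking the trace: there are $\binom{n}{2}$ transpositions, all conjugate and thus all of character $\chi^\mu[(1,2)]$, so
\be
\sum_{1\le a<b\le n}\psi_R^\mu[(a,b)]=\frac{n(n-1)}{2}\frac{\chi^\mu[(1,2)]}{d_\mu}\mathbf{1}_{\psi^\mu}.
\ee
This already produces the first term on the right-hand side of the claimed identity, using $\binom{n}{2}=n(n-1)/2$.

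Next I would split the transpositions of $S(n)$ into those lying inside the subgroup $S(n-1)$, namely the $(a,b)$ with $a,b\le n-1$, and the remaining $n-1$ transpositions $(a,n)$ with $a=1,\ldots,n-1$. The quantity we want is exactly the second group, so
\be
\sum_{a=1}^{n-1}\psi_R^\mu[(a,n)]=\sum_{1\le a<b\le n}\psi_R^\mu[(a,b)]-\sum_{1\le a<b\le n-1}\psi_R^\mu[(a,b)].
\ee
Here the PRIR structure enters decisively: since $\psi_R^\mu(\pi)=\bigoplus_{\alpha\in\mu}\varphi^\alpha(\pi)$ for all $\pi\in S(n-1)$, the inner sum is block-diagonal, and on each block it is the class sum of transpositions of $S(n-1)$ acting on the irrep $\varphi^\alpha$, which by Schur's lemma equals the scalar $\binom{n-1}{2}\chi^\alpha[(1,2)]/d_\alpha$ by the same trace computation as above. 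Hence
\be
\sum_{1\le a<b\le n-1}\psi_R^\mu[(a,b)]=\bigoplus_{\alpha\in\mu}\frac{(n-1)(n-2)}{2}\frac{\chi^\alpha[(1,2)]}{d_\alpha}\mathbf{1}_{\varphi^\alpha},
\ee
with $\binom{n-1}{2}=(n-1)(n-2)/2$. Substituting the two evaluated sums gives the first displayed formula of the proposition.

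Finally the block statement follows by reading off the $(\alpha,\alpha)$ diagonal block of this matrix identity: the first term contributes $\frac{n(n-1)}{2}\frac{\chi^\mu[(1,2)]}{d_\mu}\mathbf{1}_{\varphi^\alpha}$ because $\mathbf{1}_{\psi^\mu}$ is itself block-diagonal, while the direct-sum term contributes exactly $-\frac{(n-1)(n-2)}{2}\frac{\chi^\alpha[(1,2)]}{d_\alpha}\mathbf{1}_{\varphi^\alpha}$ on that block. I do not anticipate a genuine obstacle; the only points requiring care are checking that both the $S(n)$- and $S(n-1)$-transposition sums are central class sums, so that Schur's lemma legitimately applies, and that the restriction $\psi_R^\mu|_{S(n-1)}$ is exactly block-diagonal with irreducible blocks $\varphi^\alpha$, which is precisely the defining property of a PRIR. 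Once these are in place the scalar eigenvalues are fixed unambiguously by the trace (equivalently by evaluating the character on the transposition class), and the two evaluations combine to yield both displayed equations.
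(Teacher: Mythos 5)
Your proof is correct, and it is essentially the standard argument behind this proposition (which the paper itself does not prove inline but defers to the reference \cite{Stu2017}): write $\sum_{a=1}^{n-1}(a,n)$ as the difference of the central class sums of transpositions in $S(n)$ and in $S(n-1)$, apply Schur's lemma to each with the scalar fixed by a trace, and use the defining block-diagonal PRIR structure to evaluate the $S(n-1)$ class sum blockwise. Your two points of care --- centrality of both class sums and exact block-diagonality of $\psi_R^\mu|_{S(n-1)}$ --- are precisely the needed hypotheses, so nothing is missing.
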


\begin{remark}
	Equation~\eqref{Prir1eq} in Proposition~\ref{Prir1} may be written in a  more
	explicit form as follows:
	\be
	\forall \alpha \in \mu \qquad \sum_{a=1}^{n-1}(\psi _{R}^{\mu })_{i_{\alpha
		}j_{\alpha }}^{\alpha \alpha }[(a,n)]=\left[ \frac{n(n-1)}{2}\frac{\chi ^{\mu }[(1,2)]%
	}{d_{\mu }}-\frac{(n-1)(n-2)}{2}\frac{\chi ^{\alpha }[(1,2)]}{d_{\alpha }}\right] \delta _{i_{\alpha }j_{\alpha }},
	\ee
	where $i_{\alpha},j_{\alpha}=1,\ldots,d_{\alpha}$.
\end{remark}

\bibliographystyle{plain}
\bibliography{biblio}
\end{document}